\documentclass[11pt]{article}
\usepackage{dsfont}
\usepackage{amsmath}
\usepackage{amsthm}
\usepackage{amssymb}
\usepackage{algorithmicx}
\usepackage{algorithm}
\usepackage{algpseudocode}
\usepackage{graphicx}
\usepackage{url}
\usepackage{hyperref}
\usepackage{mathabx}
\usepackage[letterpaper,margin=1in]{geometry}
\usepackage{authblk}

\newcommand{\set}[1]{\{#1\}}

\renewcommand{\setminus}{-}
\newcommand{\R}{\mathbb{R}}    
\theoremstyle{plain}
\newtheorem{theorem}{Theorem}[section]
\newtheorem{lem}[theorem]{Lemma}
\newtheorem{definition}[theorem]{Definition}

\newtheorem{prop}[theorem]{Proposition}
\newtheorem{cor}{Corollary}
\newtheorem{fact}{Fact}

\theoremstyle{definition}
\newtheorem{defn}{Definition}[section]

\theoremstyle{remark}

\newcommand{\eps}{\varepsilon}
\newcommand{\opt}{\text{OPT}}
\newcommand{\vor}{\text{Vor}}
\newcommand{\dist}{\text{dist}}

\newcommand{\cost}{\text{cost}}

\newcommand{\calC}{\mathcal{C}}

\newcommand{\calF}{\mathcal{F}}
\newcommand{\calL}{\mathcal{L}}
\newcommand{\calG}{\mathcal{G}}

\newcommand{\calM}{\mathcal{M}}
\newcommand{\calR}{\mathcal{R}}

\newcommand{\local}{\mathcal{L}}

\newcommand{\globalS}{\mathcal{G}}
\newcommand{\globalSs}{\mathcal{G}^{*}}
\newcommand{\Vloc}{V_{\local}}
\newcommand{\Vglob}{V_{\globalS}}
\newcommand{\globtoloc}{\text{Reassign}_{\globalSs \mapsto \local}}
\newcommand{\loctoglob}{\text{Reassign}_{\local \mapsto \globalSs}}

\newcommand{\costPeps}{(1+\eps_1)^p}

\newcommand{\costPinv}{\eps_1^{-p}}

\newif\ifshort
  \shortfalse 

%
%

\title{Local search yields approximation schemes for k-means and k-median in Euclidean and minor-free metrics}
\author[1]{Vincent Cohen-Addad \thanks{Research funded by the French ANR Blanc project ANR-12-BS02-005 (RDAM)}}
\author[2]{Philip N. Klein \thanks{Research funded by NSF Grant 
    CCF-10-12254
    with additional support from the Radcliffe Institute of Advanced
    Study, Harvard University}}
\author[3]{Claire Mathieu \thanks{Research funded by the French ANR Blanc
  project ANR-12-BS02-005 (RDAM)}}

\affil[1]{D\'epartement d'informatique, \'Ecole normale sup\'erieure, 
  France}
\affil[2]{Brown University, United States}
\affil[3]{CNRS\\D\'epartement d'informatique, 
  \'Ecole normale sup\'erieure,France}

\date{}

\begin{document}
\maketitle

\begin{abstract}
  We give the first polynomial-time approximation schemes (PTASs) for the following problems: (1) uniform facility location in edge-weighted planar graphs; (2) $k$-median and $k$-means in edge-weighted planar graphs; (3) $k$-means in Euclidean space of bounded dimension. Our first and second results extend to minor-closed families of graphs. 
  All our results extend to cost functions that are the $p$-th power of the shortest-path distance. The algorithm is local search where the local neighborhood of a solution $S$ consists of all solutions obtained from $S$ by removing and adding $1/\eps^{O(1)}$ centers.
\end{abstract}
\thispagestyle{empty}
\newpage
\setcounter{page}{1}
\section{Introduction}
In this paper, we address three fundamental problems, facility location,
$k$-median and $k$-means clustering, in two settings, 
graphs and Euclidean spaces.  
The problem of approximating $k$-means clustering in
low-dimensional Euclidean space has been studied since at least
1994~\cite{IKI94};
since then, many researchers have given approximation
schemes that are polynomial for fixed $k$ but exponential in $k$.  
Very recently, building on~\cite{CM15}, a {\em bicriteria}
polynomial-time approximation scheme has been given~\cite{BaV15} for $k$-means: it finds
$(1+\epsilon)k$ centers whose cost is at most $1+\epsilon$ times the
cost of an optimal $k$-means solution.  As the authors of
that paper point out, it remained an open question whether there is a true
polynomial-time approximation scheme for $k$-means in the plane (where
$k$ is considered part of the input); the
best polynomial-time approximation bound known was $9+\epsilon$.  In
this paper, we resolve this open question by giving the first
polynomial-time approximation scheme for arbitrary (i.e. nonconstant)
$k$ in low-dimensional Euclidean space.

Our analysis of the $k$-means approximation scheme shows that it can
also be applied to graphs belonging to a fixed nontrivial minor-closed family.\footnote{Contracting an edge of a graph means
  identifying its endpoints and then removing it. A graph $H$ is a
  {\em minor} of graph $G$ if $H$ can be obtained from $H$ by edge
  deletions and edge contractions.  The family of planar graphs, for
  example, is closed under taking minors, as is the family of graphs
  embeddable on a surface of genus $g$, for any fixed integer $g>0$.
  We say a minor-closed family is nontrivial if it omits at least one
  graph.}  

For example, for any fixed integer $g$, 
graphs embeddable on a surface of genus $g$ form such a family.  In
particular, planar graphs forms such a family.  
Thus we also obtain the first
polynomial-time approximation scheme for $k$-means in planar graphs.

The problems of (uncapacitated) metric facility location and $k$-median in graphs has similarly been studied
for many years. The first polynomial-time approximation algorithm, with a logarithmic
performance guarantee, was given by Hochbaum in 1982~\cite{Hochbaum82}.  The first
polynomial-time approximation algorithm to achieve a constant
approximation ratio was given by Shmoys et al.
~\cite{STA1997} in 1997. It was later improved by Jain and Vazirani
\cite{JaV01} and by Arya et al.~\cite{AGKMMP04}.  The current best approximation algorithm for metric (uncapacitated) facility location, due to Li, has approximation ratio 1.488~\cite{li2013}.
Guha and Khuller ~\cite{GuK99} proved that there exists no polynomial-time
approximation algorithm with approximation ratio of 1.463 for metric
facility location problem unless
$NP\subseteq DTIME[n^{O(\log\log n)}]$. The current best approximation 
ratio for  the $k$-median problem is $1+\sqrt{3}$ by Li and Svensson~\cite{LiS13}.

In order to obtain a substantially better approximation ratio,
therefore, one must restrict attention to special metrics.  Because
facility location problems often arise on the surface of the earth, it
is natural to consider the metrics induced by planar graphs.  Researchers have
been trying to find a polynomial-time approximation scheme for the
planar restriction for many years.  An unpublished
manuscript~\cite{Ageev} by Ageev dating back at least to 2001 addressed the
planar case via a straightforward application of Baker's
method~\cite{Baker94}, giving an algorithm whose performance on an
instance depends on how much of the cost of the optimal solution is
facility-opening cost.  Despite the title of the manuscript, the
algorithm is {\em not} an approximation scheme for instances with
arbitrary weights.  Since then there have been no results on the
problem despite efforts by several researchers in the area.

In this paper, we give the first polynomial-time approximation scheme
for (uncapacitated uniform) facility location and $k$-median where the metric is 
that induced by a planar graph or, more generally a graph belonging to a fixed nontrivial
minor-closed family.


\subsection{Results}

We describe a simple and natural, and previously studied local-search algorithm for clustering
problems, parameterized by the desired cluster size $k$, the objective
function $\cost(\cdot)$, and a parameter $s$ governing the
local-search neighborhood.

\begin{algorithm}
  \caption{Local Search for finding $k$ clusters}
  \label{algo:Clustering}
  \begin{algorithmic}[1]
    \State \textbf{Input:} A metric space and associated cost function
    $\cost(\cdot)$, an $n$-element set $C$
    of points, error parameter $\eps>0$, positive integer parameters $k$ and $s$ 
    \State $S \gets$ Arbitrary size-$k$ set of points
    \While{$\exists$ $S'$ s.t. $|S'|\leq k$  \textbf{and} $|S \setminus S'| + |S' \setminus S| \leq s$
      \textbf{and} cost($S'$) $\le$ $(1-\eps/n) \cost(S)$\\}
    \State $S \gets S'$
    \EndWhile
    \State \textbf{Output:} $S$     
  \end{algorithmic}
\end{algorithm}

We consider two kinds of metric spaces.  For any fixed positive
integer $d$, we consider $\R^d$ equipped with Euclidean distance.  For
any undirected edge-weighted graph $G$, we consider the {\em metric
  completion} of $G$, i.e. the metric space whose points are the
vertices of $G$ and where the distance between $u$ and $v$ is defined
to be the length of the shortest $u$-to-$v$ path in $G$ with respect
to the given edge-weights.





\begin{theorem}[Euclidean Spaces] \label{thm:ClusteringEuclid} For any
  fixed integers $p,d>0$, there is a constant $c$ such that, for any $0<\eps<1/2$,
  applying Algorithm~\ref{algo:Clustering} to the $d$-dimensional
  Euclidean space with cost function
  $$\cost(S) = \sum_{c \in C} ( \min_{u \in S} \dist(c,u))^p$$
  and  $s=1/\eps^{c}$ yields a solution $S$ whose cost 
  is at most $1+\eps$ times the minimum.
\end{theorem}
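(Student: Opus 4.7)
Let $\local$ denote the locally optimal solution returned by the algorithm and let $\globalSs$ be a fixed global optimum. For each client $c \in C$, write $\sigma(c) = \argmin_{u \in \local} \dist(c,u)$ and $\sigma^*(c) = \argmin_{u \in \globalSs} \dist(c,u)$, so $\cost(\local) = \sum_c \dist(c,\sigma(c))^p$ and analogously for $\globalSs$. The plan is a standard local-search ``test-swap'' argument: produce a family of swap pairs $(D_i,A_i)$ with $D_i \subseteq \local \setminus \globalSs$, $A_i \subseteq \globalSs \setminus \local$, $|D_i|=|A_i| \le s$, and with each center of the symmetric difference appearing in exactly one pair. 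Local optimality then gives, for each $i$,
$$\cost((\local \setminus D_i) \cup A_i) \ge (1-\eps/n)\cost(\local),$$
and summing these inequalities (there will be $O(n)$ of them) will lose only an $\eps\cdot\cost(\local)$ additive term.

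To upper-bound each $\cost((\local\setminus D_i)\cup A_i)$, I would reassign each client $c$ as follows: if $\sigma^*(c)\in A_i$, send $c$ to $\sigma^*(c)$, paying its $\globalSs$-cost; if $\sigma(c)\in D_i$ but $\sigma^*(c)\notin A_i$, reroute $c$ to a nearby surviving center, paying a ``reroute'' cost; otherwise leave $c$ at $\sigma(c)$. Summed over $i$, the untouched clients contribute exactly $\cost(\local)$ (each $\local$-center is deleted in exactly one swap), the first category contributes at most $\cost(\globalSs)$, and the inequality to prove becomes
$$\cost(\local) \le \cost(\globalSs) + R + \eps\,\cost(\local),$$
where $R$ is the total reroute cost, which must be shown to be $O(\eps)\bigl(\cost(\local)+\cost(\globalSs)\bigr)$.

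To build the partition, I would use a randomly shifted dyadic dissection of $\R^d$: start with a sufficiently large bounding cube containing all the centers, recursively subdivide each cube into $2^d$ children, and stop recursing in any cube that contains at most $s$ centers of $\local \cup \globalSs$. Each leaf cell then holds at most $s$ centers; within each leaf I would pair off its $\local$-centers with its $\globalSs$-centers (padding the smaller side by re-using an already-open center, which does not change the solution) to produce one swap pair $(D_i,A_i)$. A client $c$ then needs rerouting only if $\sigma(c)$ and $\sigma^*(c)$ lie in different leaf cells. When this happens at dissection level $\ell$ (side length $\Theta(2^\ell)$), the diameter of $c$'s own leaf is $O(2^\ell)$, so $c$ can be rerouted to an in-leaf surviving center within distance $O(2^\ell)$ of $\sigma(c)$.

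The main obstacle is the bound on $R$. The classical shifted-grid estimate gives $\Pr[\text{$c$ is separated at level}\ge \ell]\le O(d\cdot\dist(\sigma(c),\sigma^*(c))/2^\ell)$, and for the $p$-th power objective I would apply the inequality $(a+b)^p\le (1+\eps)a^p+\eps^{-O(p)}b^p$ to compare the rerouted distance against $\dist(c,\sigma(c))^p$ and $\dist(c,\sigma^*(c))^p$. Summing geometric series in $\ell$ and taking expectation over the random shift yields $\mathbb{E}[R]\le \eps\bigl(\cost(\local)+\cost(\globalSs)\bigr)$ provided the swap bound is $s = \eps^{-c}$ for a sufficiently large constant $c=c(p,d)$; this is precisely the regime in which the recursion truly halts at leaves of size $\le s$. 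Since the argument is only an existence proof for a good family of test swaps whose local-optimality inequalities the algorithm has \emph{already} verified, no derandomization is needed. Rearranging the aggregated inequality yields $\cost(\local)\le(1+O(\eps))\cost(\globalSs)$, which gives the theorem after rescaling $\eps$.
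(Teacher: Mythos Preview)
Your argument has two genuine gaps, and they are precisely the obstacles the paper's proof is built to overcome.

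\textbf{Balancing.} You need each swap to satisfy $|A_i|\le|D_i|$ so that $(\local\setminus D_i)\cup A_i$ has at most $k$ centers, and you claim this can be arranged by ``padding the smaller side by re-using an already-open center.'' But a quadtree leaf may contain more $\globalSs$-centers than $\local$-centers; in that case the swapped solution has more than $k$ centers and the local-optimality inequality is unavailable. Padding $D_i$ with a center that stays open deletes nothing, and padding with an $\local$-center from another leaf destroys the ``each center in exactly one pair'' property your summation relies on. The paper handles exactly this difficulty by a separate structural theorem (Theorem~\ref{thm:deletion}): one can remove $\Theta(\eps^3\bar k)$ facilities from $\globalS$ at cost $O(\eps)(\cost(\globalS)+\cost(\local))$, where $\bar k$ counts non--1-1-isolated facilities. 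This slack absorbs the boundary vertices of an $r$-division so that the modified global solution still has at most $k$ facilities, after which a balanced-partitioning lemma (Lemma~\ref{lem:kmed_partition}) groups regions into super-regions with $|\local(\calR)|\ge|\globalSs(\calR)|$. Your dyadic dissection has no analogue of either step.

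\textbf{Reroute cost.} Your reroute sends a client $c$ with $\sigma(c)\in D_i$, $\sigma^*(c)\notin A_i$ to ``an in-leaf surviving center within distance $O(2^\ell)$ of $\sigma(c)$.'' But the leaf containing $\sigma(c)$ may have no $\globalSs$-centers at all, in which case the swap empties it and there is nothing to reroute to. Even when one exists, the leaf diameter is set by local center density and has no a~priori relation to $\dist(\sigma(c),\sigma^*(c))$; the scale~$\ell$ at which $\sigma(c)$ and $\sigma^*(c)$ are first separated by the random shift is not the level at which the recursion stops. Consequently the ``geometric series in $\ell$'' you invoke does not sum to $O(\eps)(\cost(\local)+\cost(\globalSs))$: for $p=1$ it contributes a $\log$ factor, and for $p>1$ the terms $2^{\ell(p-1)}$ blow up at coarse scales. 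The paper avoids this entirely by building its $r$-division out of \emph{Voronoi separators} (Theorem~\ref{thm:euclid-sep}, Lemma~\ref{lem:sep-euclid}): whenever $\sigma(c)$ lies inside a region and $\sigma^*(c)$ outside, some boundary point of the region is no farther from $c$ than $\sigma^*(c)$, and that boundary point has been added to $\globalSs$. Rerouting therefore always costs at most $g_c$ (or $l_c$, in the symmetric case), with no random shift and no scale summation.
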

When $p=2$, the objective function is the {\em
  $k$-means} objective function.  When $p=1$,  the objective function
is that of {\em $k$-median}.

When the metric space is $\R^d$, it is not trivial to implement an
iteration of Algorithm~\ref{algo:Clustering}.  However, as observed
in~\cite{BaV15} (see~\cite{IKI94}), there is a method using an arrangement of
algebraic surfaces to execute an iteration in $n^{O(ds)}$ time.  The
number of iterations is $O(n/\eps)$ (see \cite{AGKMMP04,CM15}).
The running time is therefore polynomial for fixed $p,d,\eps$.
We obtain the following.

\begin{cor} For any integer $d>0$, there is a polynomial-time approximation scheme for
  the $k$-means problem in $d$-dimensional Euclidean spaces.
\end{cor}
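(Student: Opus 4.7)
The plan is to derive the corollary as an essentially immediate consequence of Theorem~\ref{thm:ClusteringEuclid}, specialized to $p=2$, combined with the runtime discussion already sketched in the paragraph preceding the corollary. The approximation guarantee is handed to us directly: instantiating the theorem with $p=2$ produces a constant $c=c(d)$ so that Algorithm~\ref{algo:Clustering} with $s=1/\eps^{c}$ outputs a set $S$ whose cost is within a $(1+\eps)$ factor of the optimal $k$-means cost. Nothing further is needed for correctness, so the entire burden of the proof is to exhibit a polynomial bound on the running time for every fixed $d$ and $\eps>0$.

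The first ingredient is the cost of a single iteration. Although candidate centers lie in continuous $\R^d$ and are not a priori drawn from the input point set, the cost of a fixed size-$k$ configuration is a piecewise-algebraic function of the positions of the centers. Hence the search for an improving $s$-swap reduces to enumerating combinatorially distinct cells of an arrangement of algebraic surfaces, and the best position of each modified center in each cell can be computed in constant time. This is exactly the construction from~\cite{IKI94} used in~\cite{BaV15}, and it gives a bound of $n^{O(ds)}$ per iteration.

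The second ingredient is the number of iterations. Each iteration multiplies the current cost by a factor of at most $(1-\eps/n)$, and a standard rescaling argument (as in~\cite{AGKMMP04,CM15}) shows that the initial cost exceeds the optimum by at most a polynomial factor, so the loop terminates in $O(n/\eps)$ iterations. Multiplying the two bounds yields total running time $n^{O(ds)}\cdot O(n/\eps) = n^{O(d/\eps^{c})}$, which is polynomial in $n$ whenever $d$ and $\eps$ are fixed. This is exactly the definition of a PTAS.

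The only conceptually nontrivial step is implementing one iteration in continuous Euclidean space, and this difficulty is not really an obstacle for the present corollary: the algebraic-arrangement construction is a known tool that we simply invoke. All substantive work — showing that local optima are near-optimal for $k$-means — is already packaged inside Theorem~\ref{thm:ClusteringEuclid}, so the corollary itself only requires the two-line runtime calculation above.
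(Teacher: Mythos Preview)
Your proposal is correct and follows essentially the same approach as the paper: the paper does not give a separate proof of this corollary but derives it immediately from Theorem~\ref{thm:ClusteringEuclid} (with $p=2$) together with the preceding paragraph, which already records the $n^{O(ds)}$ per-iteration bound via arrangements of algebraic surfaces~\cite{IKI94,BaV15} and the $O(n/\eps)$ bound on the number of iterations~\cite{AGKMMP04,CM15}. Your write-up simply expands that paragraph into a short argument, and nothing in it deviates from the paper's reasoning.
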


Algorithm~\ref{algo:Clustering} can also be applied to the metric completion of a graph.

\begin{theorem}[Graphs] \label{thm:Clustering}
  Let $\mathcal K$ be a nontrivial minor-closed family of
  edge-weighted graphs. For any fixed integer $p>0$, 
  there is a constant $c$ with the
  following property.  For any $0<\eps<1/2$, for
  any $G\in \mathcal K$, Algorithm~\ref{algo:Clustering} applied to
  the metric completion of $G$ with cost function
  $$\cost(S)=\sum_{c \in C} ( \min_{u \in S} \dist(c,u))^p$$
  and with $s=1/\eps^{c}$ yields a solution $S$ whose cost
  is at most $1+\eps$ times the minimum.
\end{theorem}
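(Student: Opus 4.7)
The plan is to fix an optimal solution $S^*$ and a local optimum $S$ produced by Algorithm~\ref{algo:Clustering}, and to deduce $\cost(S)\le(1+\eps)\cost(S^*)$ by aggregating the local-optimality inequalities obtained from a family of carefully chosen test swaps. Let $\sigma\colon C\to S$ and $\sigma^*\colon C\to S^*$ denote the nearest-center assignments in the two solutions. The step where the minor-closed hypothesis enters is the construction of an auxiliary graph $H$ on vertex set $S\cup S^*$ that (i) belongs to the same minor-closed family as $G$ and (ii) encodes the pairs $(\sigma(c),\sigma^*(c))$ via short connections. Intuitively, I would form $H$ by overlaying, for each client $c$, a shortest $\sigma(c)$-to-$\sigma^*(c)$ path in $G$ and then suppressing degree-two vertices not in $S\cup S^*$; because $G$ is $K_h$-minor-free, so is $H$.

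Given $H$, I would invoke the balanced separator theorem for $K_h$-minor-free graphs (Alon--Seymour--Thomas) recursively to obtain an $r$-division with $r=s=1/\eps^c$: a partition of $S\cup S^*$ into $O(|S\cup S^*|/s)$ pieces $P_1,\ldots,P_m$, each of size at most $s$, whose total boundary is $O(|S\cup S^*|/\sqrt{s})$. For each piece $P_i$ define the candidate solution
\[
S_i \;=\; (S\setminus P_i)\cup(P_i\cap S^*),
\]
which satisfies $|S_i|\le k$ and $|S\triangle S_i|\le s$, and is therefore a legal local-search neighbor of $S$. Local optimality yields $\cost(S_i)\ge(1-\eps/n)\cost(S)$ for every $i$. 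To upper-bound $\cost(S_i)$, I would reassign each client $c$ as follows: if $\sigma^*(c)\in P_i$ send $c$ to $\sigma^*(c)$; if $\sigma(c)\notin P_i$ keep $c$ at $\sigma(c)$; and in the ``boundary'' case, where exactly one of $\sigma(c),\sigma^*(c)$ lies in $P_i$, reroute $c$ to a nearby surviving center using the $H$-path witnessing the pair. Summing the $m$ resulting inequalities and swapping the order of summation, each non-boundary client is charged once to the optimal cost and once to the local cost, delivering the skeleton inequality $\cost(S)\le\cost(S^*)+\eps\cdot\cost(S)+(\text{boundary penalty})$.

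The main obstacle, which I expect to occupy the bulk of the argument, is to bound the boundary penalty by $O(\eps)(\cost(S)+\cost(S^*))$. For a single client, a $p$-th power triangle inequality bounds its reassignment penalty by $2^{O(p)}(\dist(c,\sigma(c))^p+\dist(c,\sigma^*(c))^p)$, so the task reduces to showing that only an $O(\eps)$ fraction of clients, weighted by these costs, have their pair $(\sigma(c),\sigma^*(c))$ cut by the separator. This is why $H$ must be constructed so that each assignment corresponds to a short edge, and why the $r$-division is chosen with $r=s=1/\eps^{O(1)}$: the $O(1/\sqrt{s})$ boundary fraction then falls below $\eps$ for $c$ a sufficiently large constant depending on $p$ and the excluded minor $K_h$. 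A randomization (or averaging) over separator choices ensures the cut probability is uniform over clients, converting the structural separator bound into the desired cost bound and completing the proof.
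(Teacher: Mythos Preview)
Your skeleton is essentially the paper's proof of the \emph{facility location} result (Theorem~\ref{thm:FC}), not of Theorem~\ref{thm:Clustering}. The fatal gap is the assertion that the swap $S_i=(S\setminus P_i)\cup(P_i\cap S^*)$ satisfies $|S_i|\le k$. This is simply false in general: $|S_i|=|S|-|S\cap P_i|+|P_i\cap S^*\setminus S|$, and an $r$-division gives you no control over the sign of $|P_i\cap S^*|-|P_i\cap S|$ in an individual piece. A piece can perfectly well contain more optimal centers than local ones, making $S_i$ an infeasible $k$-clustering and hence not a legal test swap. Nothing in your separator construction, your choice of $r$, or your averaging over separators addresses this.

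This is precisely the obstacle the paper isolates in Section~\ref{sec:isolation} and overcomes with two additional ingredients you are missing. First, Theorem~\ref{thm:deletion} shows that among the non-1-1-isolated facilities of $\globalS$ one can delete an $\Omega(\eps^3)$ fraction while paying only $O(\eps)(\cost(\globalS)+\cost(\local))$; this yields a near-optimal solution $\globalS_1$ with strictly fewer than $k$ centers, creating slack to absorb boundary vertices. Second, the Balanced Partitioning Lemma (Lemma~\ref{lem:kmed_partition}) groups the $r$-division regions into super-regions each of which has at least as many local centers as (augmented) global centers, so that the mixed solution $\calM_\calR$ really does have at most $k$ centers. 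Both steps rely on the notion of isolated regions, and the accounting for ``bad'' clients (those crossing an isolated region) requires the reassignment Lemmas~\ref{lem:globtoloc} and~\ref{lem:loctoglob}. Without these pieces your argument establishes at best a bicriteria guarantee with $(1+O(\eps))k$ centers, which was already known.

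A secondary weakness: your boundary-penalty bound (``only an $O(\eps)$ fraction of clients, weighted by cost, are cut'') does not follow from the separator size alone. The $O(1/\sqrt{s})$ bound counts \emph{vertices} of $H$, not clients, and a single boundary vertex can serve many expensive clients. The paper handles this by working in the Voronoi contraction $G_{\vor(\calF)}$ and using Lemma~\ref{lem:sep-graphs}: any client whose local or global center is removed can be rerouted to a boundary center that is at least as close, so the reassignment cost is controlled pointwise rather than in aggregate.
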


It is straightforward to implement Algorithm~\ref{algo:Clustering}
applied to the metric completion of a graph.  As before, the number of
iterations is $O(n/\eps)$ where $n$ is the number of clients.
We therefore obtain:

\begin{cor} There is a polynomial-time approximation scheme for
  $k$-means and for $k$-median in planar graphs and in bounded-genus graphs.

  More generally, for any nontrivial minor-closed family of
  edge-weighted graphs, there is a polynomial-time approximation
  schemes for $k$-means and for $k$-median for graphs in that family.
\end{cor}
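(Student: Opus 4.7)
The plan is to derive the corollary by combining Theorem~\ref{thm:Clustering} (approximation guarantee) with a straightforward running-time analysis of Algorithm~\ref{algo:Clustering}, and by observing that the special cases named in the corollary are instances of nontrivial minor-closed families.

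First I would handle the approximation ratio. By Theorem~\ref{thm:Clustering}, for any fixed $p$ (taking $p=1$ to get $k$-median and $p=2$ to get $k$-means) and any nontrivial minor-closed family $\mathcal K$, running Algorithm~\ref{algo:Clustering} on $G \in \mathcal K$ with $s = 1/\eps^{c}$ yields a set $S$ with $\cost(S) \le (1+\eps)\opt$. Since planar graphs and, for each fixed $g$, graphs embeddable on a surface of genus $g$ are nontrivial minor-closed families (as noted in the introduction), the approximation guarantee for those classes is an immediate specialization.

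Next I would bound the running time. A preprocessing step computes all pairwise shortest-path distances in $G$, yielding the metric completion in polynomial time. Starting from an arbitrary size-$k$ set, each iteration examines candidate sets $S'$ obtained from $S$ by swapping at most $s/2$ points out and at most $s/2$ points in; there are at most $\binom{k}{s}\binom{n}{s} = n^{O(s)}$ such sets, and $\cost(S')$ is evaluated in time $O(nk)$ by computing, for each client, its closest center in $S'$. Since $s = 1/\eps^{c}$ is a constant for fixed $\eps$, each iteration runs in polynomial time. The standard potential argument (as cited in the excerpt, \cite{AGKMMP04,CM15}) shows that the number of improving iterations is $O(n/\eps)$, because each iteration decreases the cost by a factor of $(1-\eps/n)$ and the initial cost is at most $n^{O(1)}$ times the optimum (using polynomial bit-complexity of edge weights). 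Multiplying the per-iteration cost by the iteration bound gives a polynomial overall running time.

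There is no real obstacle here; the only thing worth being careful about is that the enumeration of local-search candidates is polynomial precisely because $s$ is a constant depending only on $p$ and $\eps$, not on $n$ or $k$, and that evaluating $\cost(S')$ on the metric completion requires only shortest-path queries, both of which are features of the graph setting (contrast with the Euclidean case, where an arrangement argument was needed). Combining the approximation ratio from Theorem~\ref{thm:Clustering} with this polynomial running time proves the corollary for arbitrary nontrivial minor-closed families, and hence in particular for planar and bounded-genus graphs.
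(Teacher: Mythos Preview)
Your proposal is correct and follows essentially the same approach as the paper. The paper itself does not give a formal proof of this corollary; it simply remarks that implementing Algorithm~\ref{algo:Clustering} on the metric completion of a graph is straightforward and that the number of iterations is $O(n/\eps)$, then states the corollary. Your argument spells out precisely those two points (polynomial-time enumeration of local moves since $s$ is constant, and the standard $O(n/\eps)$ iteration bound) and correctly specializes Theorem~\ref{thm:Clustering} to $p=1,2$ and to planar and bounded-genus graphs as particular nontrivial minor-closed families.
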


The local-search algorithm is easily adapted to the case where we do
not specify the number of clusters but instead specify a per-cluster
cost.  This case includes a variant of the \emph{facility location} problem.

\begin{defn}[Uncapacitated Uniform Facility Location]
The \emph{Uncapacitated Uniform Facility Location} problem is as follows: given a
finite metric space, a subset $C$ of points, and a facility opening cost $f$,
find a subset $S$ of points that minimizes $\cost(S)  =f|S| + \sum_{c \in C}
\min_{u \in S} \dist(c,u)$.
\end{defn}


To address this problem, we use a simple modification of the
local-search algorithm given earlier.  

\begin{algorithm}
  \caption{Local Search for Uniform Facility Location}
  \label{algo:FacilityLocation}
  \begin{algorithmic}[1]
    \State \textbf{Input:} A metric space and associated cost function
    $\cost(\cdot)$, an $n$-element set $C$
    of points, error parameter $\eps>0$, facility opening cost
    $f>0$, positive integer parameter $s$
    \State $S \gets$ Arbitrary subset of $\calF$.
    \While{$\exists$ $S'$ s.t. $|S \setminus S'| + |S' \setminus S| \leq s$
      \textbf{and} cost($S'$) $\le$ $(1-\eps/n)$ cost($S$)\\}
    \State $S \gets S'$
    \EndWhile
    \State \textbf{Output:} $S$     
  \end{algorithmic}
\end{algorithm}

\begin{theorem}\label{thm:FC} Fix a nontrivial minor-closed family
  $\mathcal K$ of
  graphs.  There is a constant $c$ such that, when Algorithm~\ref{algo:FacilityLocation} is applied to
  the metric completion of a graph in $\mathcal K$ with
$$\cost(S)=|S|f + \sum_{c \in C} ( \min_{u \in S} \dist(c,u))^p$$
and  $s=1/\eps^c$, the output has cost at most $1+\eps$ times the
minimum.
\end{theorem}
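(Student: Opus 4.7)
The plan is to reduce the proof of Theorem~\ref{thm:FC} to the same structural machinery that underlies Theorem~\ref{thm:Clustering}, adapted for the asymmetric swaps allowed by Algorithm~\ref{algo:FacilityLocation}. Let $\globalS$ be the output of local search and $\globalSs$ be an optimal facility location solution. By local optimality, for every pair $(D,A)$ with $D\subseteq \globalS$, $A\subseteq \globalSs$, and $|D|+|A|\le s$, the swap $\globalS'=(\globalS\setminus D)\cup A$ satisfies
$$f(|\globalS|-|D|+|A|)+\sum_{c\in C}\Bigl(\min_{u\in \globalS'}\dist(c,u)\Bigr)^p \ge (1-\eps/n)\cost(\globalS).$$
Because there is no cardinality constraint on $S'$, we can freely choose unequal $|D|$ and $|A|$, so the analysis does not need to balance the size of removed and added facilities as in the $k$-median/$k$-means proof.

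The next step is to invoke the same separator-based decomposition of $\globalS\cup\globalSs$ used for Theorem~\ref{thm:Clustering}. Build the ``connection graph'' on $\globalS\cup\globalSs$ whose edges record, for each client $c$, the pair consisting of its local facility $\sigma(c)\in\globalS$ and its optimal facility $\sigma^*(c)\in\globalSs$. Apply the $r$-division for the minor-closed family $\mathcal K$ with $r=1/\eps^{\Theta(1)}$ to obtain a collection $\calR$ of regions such that (a) each region $R$ contains at most $s$ facilities of $\globalS\cup\globalSs$, (b) each facility lies in $O(1)$ regions, and (c) the total number of ``boundary'' facilities (those appearing in more than one region) is at most $\eps$ times $|\globalS|+|\globalSs|$.

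For each region $R$, apply the local-optimality inequality with $D=\globalS\cap\inter(R)$ and $A=\globalSs\cap\inter(R)$, where $\inter(R)$ denotes the interior facilities of $R$. Summing over $R\in\calR$, the opening-cost contributions telescope to $f(|\globalSs|-|\globalS|)$ plus an $O(\eps)$ overcount on boundary facilities, which is absorbed into $\eps\cdot\cost(\globalSs)$. For the assignment cost, each client $c$ whose optimal facility $\sigma^*(c)$ lies in the interior of $R$ is rerouted to $\sigma^*(c)$ at cost $\dist(c,\sigma^*(c))^p$; each client whose $\sigma(c)$ is removed but whose $\sigma^*(c)$ lies outside $R$ is reassigned through a boundary facility of $R$ via the standard detour argument. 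Summing yields
$$\cost(\globalS)\le \cost(\globalSs) + O(\eps)\cdot(\cost(\globalS)+\cost(\globalSs)),$$
which rearranges to the desired $(1+O(\eps))$-approximation after rescaling $\eps$.

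The main obstacle, as in Theorem~\ref{thm:Clustering}, is establishing that the boundary-detour cost is $O(\eps)\cdot\opt$ rather than merely $O(\eps)\cdot|\globalS|\cdot(\text{diameter})^p$. This requires the ``isolated pair'' property of the decomposition: whenever $\sigma(c)\ne\sigma^*(c)$ and $\dist(c,\sigma^*(c))\ll\dist(c,\sigma(c))$, the two facilities must lie in a common region so that $c$ need not be rerouted across a boundary. Once the decomposition from Theorem~\ref{thm:Clustering} is shown to respect this property (which it does, since the construction only uses the metric completion of $G\in\mathcal K$ and is oblivious to whether centers come with an opening cost), the facility location case follows, with the extra $f|D|,f|A|$ terms in the swap inequality contributing only the telescoping opening-cost difference noted above.
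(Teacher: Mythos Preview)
Your high-level plan—take an $r$-division of the Voronoi contraction of $\local\cup\globalSs$, perform one swap per region, and sum—is exactly what the paper does. But you have misidentified the ``main obstacle'' and, as a result, you invoke far heavier machinery than is needed.

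The isolated-pair analysis is \emph{not} required for facility location. The paper's proof of Theorem~\ref{thm:FC} is explicitly presented as a warm-up that avoids Theorem~\ref{thm:deletion} and the notion of isolation entirely. The reason is precisely the one you state at the outset and then fail to exploit: since Algorithm~\ref{algo:FacilityLocation} has no cardinality constraint, the swap for region $H_i$ may \emph{add} the boundary vertices of $H_i$ for free. Concretely, the paper sets $\calG'=\calG\cup\bigcup_i B_i$ and swaps to $\calM^i=(\calL\setminus\calL_i)\cup\calG'_i$. Now every client $c$ whose local facility is removed has, by Lemma~\ref{lem:sep-graphs}, a boundary vertex of $H_i$ at distance at most $\ell_c$; since that boundary vertex is in $\calG'_i\subseteq\calM^i$, the reroute cost is at most $\ell_c$, contributing $0$ to $m_c^i-\ell_c$ (Lemma~\ref{lem:client}). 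There is no ``detour'' term that needs to be shown $O(\eps)\cdot\opt$; the detour is already bounded by the client's own local cost, pointwise.

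Your worry that without isolation the detour could be as large as $|\globalS|\cdot(\text{diameter})^p$ is therefore misplaced here. That worry is real for $k$-median/$k$-means, where one cannot afford to add boundary vertices to the global solution (it would push the center count above $k$), and Theorem~\ref{thm:deletion} is used to first \emph{delete} $\Theta(\eps^3\bar k)$ centers from $\globalS$ to make room. Facility location needs none of this: adding the $O(\eps|\calF|)$ boundary vertices to $\calG$ costs only $O(\eps)\cdot f\cdot|\calF|\le O(\eps)(\cost(\calL)+\cost(\calG))$ in opening cost (Lemma~\ref{lem:size-of-augmented-global-solution}), and the rest of the argument is a direct summation. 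In particular $r=1/\eps^2$ suffices, rather than the $1/\eps^{O(1)}$ coming from the balanced-partitioning and isolation lemmas.
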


In fact, for $p=1$, setting $s=c/\eps^2$ suffices to achieve a
$1+\eps$ approximation.  The theorem implies the following:

\begin{cor} \label{cor:FC}
Fix a nontrivial minor-closed family
  $\mathcal K$ of edge-weighted
  graphs.  There is a polynomial-time approximation scheme for uniform
  uncapacitated facility location in graphs of $\mathcal K$.
\end{cor}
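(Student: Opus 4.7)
The plan is straightforward: the approximation quality comes directly from Theorem~\ref{thm:FC}, so it only remains to verify that Algorithm~\ref{algo:FacilityLocation} with $s = 1/\eps^c$ runs in polynomial time. This reduces to two standard arguments, bounding the work per iteration and bounding the number of iterations.

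For a single iteration, the algorithm must decide whether there exists an improving $S'$ with $|S\setminus S'| + |S'\setminus S|\leq s$. I would enumerate by brute force all candidate pairs consisting of a subset of $S$ of size at most $s$ to remove and a subset of $V\setminus S$ of size at most $s$ to add; the number of such pairs is at most $n^{O(s)} = n^{O(1/\eps^c)}$, which is polynomial for fixed $\eps$. After a one-time all-pairs-shortest-path precomputation on $G$, evaluating $\cost(S')$ for any single candidate takes $O(n\,|S'|) = O(n^2)$ time.

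To bound the iteration count, I would initialize $S$ to a single arbitrary vertex, so that $\cost(S_0) \le f + n\cdot D$, where $D$ is the weighted diameter of $G$. Each successful swap shrinks the cost by a factor of at least $(1-\eps/n)$, so after $T$ iterations the cost is at most $\cost(S_0)\,(1-\eps/n)^T \le \cost(S_0)\,e^{-T\eps/n}$. Since any nonempty solution has cost at least $\min(f, d_{\min})$, where $d_{\min}$ is the smallest positive inter-point distance, after the usual rescaling the ratio $\cost(S_0)/\opt$ is polynomially bounded in the input length. Hence it suffices to take $T = O((n/\eps)\log(\cost(S_0)/\opt))$, which is polynomial, and combining this with Theorem~\ref{thm:FC} yields a $(1+\eps)$-approximation in polynomial time.

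I do not foresee a substantive obstacle: the hard work of proving the locality gap is entirely absorbed into Theorem~\ref{thm:FC}, and the remaining analysis is the textbook framework for local-search schemes driven by a $(1-\eps/n)$-improvement rule. The only mildly delicate point is choosing a starting solution whose cost is only polynomially larger than $\opt$ in bit length, which any reasonable seed (a single open facility, or the entire vertex set) handles.
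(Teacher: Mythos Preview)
Your proposal is correct and matches the paper's (implicit) argument: the paper states the corollary as an immediate consequence of Theorem~\ref{thm:FC} without a separate proof, relying on the same standard observations you spell out---that each iteration can be implemented by brute-force enumeration of $n^{O(s)}$ swaps and that the $(1-\eps/n)$-improvement rule bounds the number of iterations polynomially. The only cosmetic difference is that the paper cites an $O(n/\eps)$ iteration bound (implicitly starting from a constant-factor approximation), whereas you start from an arbitrary singleton and pick up an extra logarithmic factor; both are polynomial and the distinction is immaterial.
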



\subsection{Related work}
In arbitrary metric spaces, it is NP-hard to approximate  the $k$-median and $k$-means problems 
within a factor of $1+2/e$ and $1+3/e$ respectively, see Guha and Khuller~\cite{GuK99} and 
Jain et al.~\cite{JMS02}. 
In the case of Euclidean space, Guruswami and Indyk~\cite{GI03} showed that there is no PTAS for 
$k$-median if both $k$ and $d$ are part of the input.
More recently, Awasthi et al.~\cite{ACKS15} showed APX-Hardness for $k$-means if both $k$ and $d$ are part of the input. 

In Euclidean spaces, $(1+\varepsilon)$-approximation algorithms for $k$-median have been
proposed when $k$ or $d$ is fixed.
For example, when $k$ is fixed, there exists different PTAS (See \cite{BHPI02,KSS04,KSS10,FMS07,HaM04} and ~\cite{FeL11} for the best
known so far).
When $d$ is fixed, Arora et al. gave the first PTAS~\cite{ARR98} for the $k$-median problem.
This result was subsequently improved to an efficient PTAS by Kolliopoulos et al.~\cite{KoR07} 
and Har-Peled et al.~\cite{HaK07,HaM04}.

For the $k$-means problem, Kanungo et al.~\cite{KMNPSW04} showed that
local search achieves a $9+\eps$-approximation in general metrics
and this remains the best known approximation guarantee so far even for 
fixed $d$.
There are also a variety of results for $k$-means and $k$-median 
when the input has some stability conditions 
(see for example \cite{AwS12,ABS10,BaL16,BBG09,BiL12,KuK10,ORSS12}) 
or in the context of smoothed analysis (see for example \cite{ArV09,AMR11}). 

Local Search for metric $k$-median was first analyzed by Korupolu et al~\cite{KPR00}. 
They gave a bicriteria approximation using $k\cdot(1+\varepsilon)$ centers an achieving a cost of at most 
$3+5/\varepsilon$ times the cost of the optimum $k$-clustering. 
This was later improved to $k\cdot(1+\varepsilon)$ centers an achieving a cost 
of at most $2+2/\varepsilon$ times the cost of the optimum $k$-clustering by Charikar an Guha~\cite{ChG05}.
Arya et al.~\cite{AGKMMP04} gave the first analysis showing that Local Search with a neighborhood of 
size $1/\varepsilon$ gives a $3+2\varepsilon$ approximation to $k$-median.
Moreover, they show that this bound is tight.
As mentioned earlier, Kanungo et al.~\cite{KMNPSW04} showed that local
search is a $9+\eps$-approximation for $k$-means in general metrics.
Local search is a very popular algorithm for clustering and has been 
widely used : see~\cite{BeT10}  in the context of parallel algorithms,
~\cite{GMMMO03} in the streaming model and ~\cite{BBLM14} for 
distributed computing. See \cite{AaL97} for a general introduction
to theory and practice of local search.

\paragraph{Note added:} 
After we had written up our results and while we were editing the submission, we noticed a recent ArXiv paper \cite{FRS16} that 
has similar results for doubling metrics.

\section{Techniques}

\subsection{$r$-divisions in minors}

One key ingredient in our analyses is the existence of a certain kind
of decomposition of the input called a \emph{weak $r$-division}.  The 
concept (in a
stronger form) is due to Frederickson~\cite{Frederickson87} in the
context of planar graphs.  It is straightforward to extend it to any
family of graphs with balanced separators of size
sublinear-polynomial.  We also define a weak $r$-division for points
in a Euclidean space, and show that such a decomposition always
exists.  These definitions and results are in
Sections~\ref{sec:graph-r-division} and~\ref{sec:Euclidean-r-division}.
Note that $r$-divisions play no role in our algorithm; only the
analysis uses them.

Chan and Har-Peled~\cite{CH12} showed
that local search can be used to obtain a PTAS for (unweighted)
maximum independent pseudo-disks in the plane, which implies the
analogous result for planar graphs.  More generally, Har-Peled and
Quanrud~\cite{HQ15} show that local search can be used
to obtain PTASs for several problems including {\em independent set},
{\em set cover}, and {\em dominating set}, in graphs with polynomial
expansion.  These graphs have small separators and therefore
$r$-divisions.  However, our analysis of local search for clustering
requires not only that the {\em input graph} have an $r$-division but
that a {\em minor} of the input graph have an $r$-division.  This is
not true of graphs of polynomial expansion.  Indeed, we show in
Section~\ref{sec:tightness} that there are low-density graphs in
low-dimensional space (which are therefore polynomial-expansion
graphs) for which our local-search algorithm produces a solution that
is worse than the optimum by at least a constant factor.  

Thus one of our technical contributions is showing how to take
advantage of a property possessed by nontrivial minor-closed graph
families that is not possessed by polynomial-expansion graph families.

\subsection{Isolation}
\label{sec:isolation}
In order to obtain our approximation schemes for $k$-means and
$k$-median clustering, we need another technique. As mentioned
earlier, a bicriteria approximation scheme for $k$-means was already
known; the solution it returns has more than $k$ centers.  It seems
hard to avoid an increase in the number of centers in comparing a
locally optimal solution to a globally optimal solution.  It would help if
we could show that the globally optimal solution could be modified so
as to {\em reduce}
the number of centers
below $k$ while only slightly increasing the cost; we could then
compare the local solution to this modified global solution, and 
the increase in the number of centers would leave the number no more
than $k$.

Unfortunately, we cannot unconditionally reduce the number of
centers.  However, consider a globally optimal solution $\globalS$ and
a locally optimal solution $\local$.  A facility $f$ in $\globalS$
might correspond to a facility $\ell$ in $\local$ in the sense that
they serve almost exactly the same set of clients.  In this case, we
say the pair $(f, \ell)$ is {\em 1-1 isolated} (the formal definition is
below).  Such centers do not contribute much to the increase in cost
in going from global solution to local solution, so let's ignore
them.  Among the remaining centers of $\globalS$, there are a
substantial number that can be removed without the cost increasing much.
The analysis of the local solution then proceeds as discussed above.

We now give the formal definition of 1-1 isolated.

\begin{definition}\label{def:1-1-isolated}
  Let $\eps <1/2$ be a positive constant and $\local$ and $\globalS$ be two solutions for the $k$-clustering problem with parameter $p$.
  Given a facility $f_0 \in \globalS$ and a facility $\ell\in \local$, 
we say that the pair
  $(f,\ell)$ is  {\em 1-1-isolated} if 
  most of the clients served by $\ell$ in $\local$ are served by $f$ in $\globalS$, and
  most of the clients served by $f$ in $\globalS$ are served by $\ell$ in $\local$ : in other words, 
  $$|\Vloc(\ell) \cap \Vglob(f)  |   \ge \left\{ \begin{array}{l} (1-\eps)   |\Vloc(\ell)|    \\  (1-\eps) |\Vglob(f)| \end{array}\right. $$
\end{definition}

\begin{theorem}\label{thm:deletion}
  Let $\eps <1/2$ be a positive constant and $\local$ and $\globalS$ be two solutions for the $k$-clustering problem with exponent $p$.
  Let $\bar k$ denote the number of facilities $f$ of $\globalS$  that
  are not in a 1-1 isolated region. 
  There exists a set $S_0$ of facilities of $\globalS$ of size at least $\eps^3 \bar k/6$ that can be removed 
  from $\globalS$ at low cost:  $\cost(\globalS \setminus S_0) \le (1+2^{3p+1}\eps) \cost(\globalS) +  2^{3p+1}\eps\cost(\local)$.
\end{theorem}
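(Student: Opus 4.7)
The plan is to identify a set $S_0$ of non-isolated global facilities that can be removed together while only slightly increasing the cost, by exploiting a structural dichotomy on how each non-isolated $f$ fails the $1$-$1$-isolation condition. The reassignment of affected clients will be routed through the local solution $\local$, yielding a bound of the form $(1+O(\eps))\cost(\globalS)+O(\eps)\cost(\local)$.

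For the setup, I would define $\sigma:\local\to\globalS$ by $\sigma(\ell)=\arg\max_{f\in\globalS}|\Vloc(\ell)\cap\Vglob(f)|$, so $\sigma(\ell)$ is the global facility serving the largest share of $\ell$'s clients, and dually $\tau(f)=\arg\max_\ell|\Vloc(\ell)\cap\Vglob(f)|$. Because $\eps<1/2$, any $\ell$ with $|\Vloc(\ell)\cap\Vglob(f)|\ge(1-\eps)|\Vloc(\ell)|$ must have $\sigma(\ell)=f$, so every $1$-$1$-isolated pair $(f,\ell)$ satisfies $\sigma(\ell)=f$ and $\tau(f)=\ell$. For any non-isolated $f$ the pair $(f,\tau(f))$ violates Definition~\ref{def:1-1-isolated}, so one of the following holds: either (a) $|\Vloc(\tau(f))\cap\Vglob(f)|<(1-\eps)|\Vglob(f)|$, in which case at least $\eps|\Vglob(f)|$ clients $c\in\Vglob(f)$ have $\local(c)\ne\tau(f)$ and, by uniqueness of the maximum, most satisfy $\sigma(\local(c))\ne f$; or (b) $\tau(f)$ captures at least $(1-\eps)|\Vglob(f)|$ of $f$'s clients but $|\Vloc(\tau(f))\setminus\Vglob(f)|\ge\eps|\Vloc(\tau(f))|$, providing foreign witnesses $c'$ with $\globalS(c')\ne f$. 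Either way, I obtain a pool of an $\Omega(\eps)$-fraction of clients near $f$ each possessing a natural alternative global facility different from $f$.

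For each $f$ slated for removal, the reassignment plan would send each affected $c\in\Vglob(f)$ to a carefully chosen global facility: either $\sigma(\local(c))$ (in case~(a)) or $\globalS(c')$ for a foreign witness $c'\in\Vloc(\local(c))$ (in case~(b)). Two applications of the triangle inequality give
\[
\dist(c,\text{new center})\le \dist(c,\local(c))+\dist(\local(c),c')+\dist(c',\globalS(c'))=l(c)+l(c')+g(c'),
\]
where $l(x)=\dist(x,\local(x))$ and $g(x)=\dist(x,\globalS(x))$. Raising to the $p$-th power with the elementary estimate $(x+y+z)^p\le(1+\eps)^p x^p+C_{p,\eps}(y^p+z^p)$ (with $C_{p,\eps}\le 2^{3p}$ sufficient for $\eps<1/2$) bounds the per-client contribution by $(1+\eps)^p g(c)^p+C_{p,\eps}(l(c)^p+l(c')^p+g(c')^p)$. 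A careful choice of $c'$, for instance averaging uniformly over the foreign pool, is needed so that the witness terms amortise across all removed facilities.

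To form $S_0$, I would sample each non-isolated $f$ independently at rate $\eps^3/6$; since each reassignment target is a single element of $\globalS$, a first-moment computation plus iterative purification produces a deterministic $S_0$ of size at least $\eps^3\bar k/6$ whose reassignment targets all lie in $\globalS\setminus S_0$. Summing the per-client inequalities over $\bigcup_{f\in S_0}\Vglob(f)$ and using bounded multiplicity of each witness, the foreign-witness terms should aggregate to at most $\eps\cdot(\cost(\local)+\cost(\globalS))$; combining with the $(1+\eps)^p$ inflation on primary $g(c)^p$ terms yields the claimed inequality $\cost(\globalS\setminus S_0)\le(1+2^{3p+1}\eps)\cost(\globalS)+2^{3p+1}\eps\cost(\local)$. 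The three factors of $\eps$ in $\eps^3/6$ correspond respectively to the mismatch ratio from non-isolation, the averaging step needed to keep the local-cost contribution proportional to $\eps\cost(\local)$ rather than $\cost(\local)/\eps^{p-1}$, and the sampling rate controlling conflicts during simultaneous removal. The main obstacle will be coordinating these three $\eps$-factors and the combinatorics of witnesses so that the final inequality has $O(\eps)$ error on both the multiplicative and additive sides while the set $S_0$ still attains the promised size.
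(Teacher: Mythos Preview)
Your approach has a structural gap in case~(a). You assert that among clients $c\in\Vglob(f)$ with $\local(c)\ne\tau(f)$, ``by uniqueness of the maximum, most satisfy $\sigma(\local(c))\ne f$.'' This fails: take $f$ whose clients are split evenly between two local facilities $\ell_1,\ell_2$, each serving \emph{only} clients of $\Vglob(f)$. Then $f$ is not 1-1 isolated (so it contributes to $\bar k$), case~(a) applies with $\tau(f)=\ell_1$, yet every client $c$ with $\local(c)=\ell_2$ has $\sigma(\ell_2)=f$, and your reassignment sends it right back to $f$. The trouble is that such an $f$ \emph{is} isolated in the broader sense of Definition~\ref{defn:isolated} (with $\local_0=\{\ell_1,\ell_2\}$), just not 1-1 isolated; your dichotomy cannot detect this. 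The paper therefore restricts attention to $\tilde\globalS$, the facilities lying in \emph{no} isolated region: for $f\in\tilde\globalS$ the definition forces the set $\hat\local(f)=\{\ell:1\le|\Vglob(f)\cap\Vloc(\ell)|<(1-\eps)|\Vloc(\ell)|\}$ to cover more than an $\eps$-fraction of $\Vglob(f)$, and each such $\ell$ automatically has at least $\eps|\Vloc(\ell)|$ clients outside $\Vglob(f)$, i.e.\ genuine foreign witnesses. A separate counting argument then shows $|\tilde\globalS|\ge\bar k/2$ (each isolated-but-not-1-1 region uses at least two local facilities), which is where the factor $1/6$ ultimately comes from.

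Two further issues. First, your reassignment targets depend on the client (through $\local(c)$ and the witness $c'$), so ``iterative purification'' must ensure that \emph{every} target of \emph{every} affected client avoids $S_0$; this is not a single-arc problem and it is not clear it can be done at rate $\eps^3/6$. The paper sidesteps this by producing a \emph{single} map $\phi:\tilde\globalS\to\globalS$ (Lemma~\ref{lem:redirect}) whose total rerouting cost is at most $2^{3p+1}\eps^{-2}(\cost(\local)+\cost(\globalS))$; the resulting out-degree-$1$ digraph on $\globalS$ is $3$-colored, the largest color class is cut into $1/\eps^3$ equal pieces, and averaging yields $S_0$ with $\phi(f)\notin S_0$ for all $f\in S_0$ and cost increase at most $2^{3p+1}\eps(\cost(\local)+\cost(\globalS))$. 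Second, your estimate $(x+y+z)^p\le(1+\eps)^p x^p+2^{3p}(y^p+z^p)$ is false for small $\eps$ when $p\ge2$ (try $p=2$, $\eps=1/100$, $x\approx 78$, $y=z=1$); the paper only uses the symmetric bound of Lemma~\ref{lem:TI}, paying a factor $2^p$ on the $g_c$ term as well, and absorbs this into the $2^{3p+1}\eps^{-2}$ overhead of Lemma~\ref{lem:redirect} before the $1/\eps^3$ averaging brings it down to $2^{3p+1}\eps$.
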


Note that the following theorem does not assume that $\local$ is a local optimum and $\globalS$ is an optimal solution. Thus 
we believe that this theorem can be of broader interest.
We now define the concept of \emph{isolated} regions; 1-1-isolated regions correspond to the special case of 
isolated regions when $\local_0 $ consists of a single facility.  
\begin{defn}[Isolated Region] \label{defn:isolated}
Given a facility $f_0 \in \globalS$ and a set of facilities $\local_0 \subseteq \local$, 
we say that the pair $(f_0, \local_0)$ is an {\em isolated region} if 
  \begin{itemize}
    \item For each facility $f' \in \local_0$, most of the clients served by $f'$ in $\local$ are served by $f_0$ in $\globalS$: in other words, 
    $| V_\local(f') \cap  V_\globalS(f_0) | \ge (1-\eps)|V_\local(f')|$, and
    \item Most of the clients served by $f_0$ in $\globalS$ are served by facilities of $\local_0$ in $\local$: in other words, $|V_\local(\local_0) \cap V_\globalS(f_0)| \ge (1-\eps)|V_\globalS(f_0)|$;
  \end{itemize}
\end{defn}

Finally, if $(f_0, \local_0)$ is an isolated region, we say that $f_0$
and the elements of $\local_0$ are {\em isolated}.

\subsection{Tightness of the results}
\label{sec:tightness}
\begin{prop}
  For any $w, t$, there exists an infinite family of graphs 
  excluding $K_{w}$ as a $t$-shallow minor such that for any constant
  $\eps$, local search with neighborhoods of size $1/\eps$
  might return a solution of cost at least $3 \opt$.
\end{prop}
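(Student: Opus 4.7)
The plan is to realize, as shortest-path metrics in a sparse graph, the classical tight instance of Arya, Garg, Khandekar, Meyerson, Munagala, and Pandit~\cite{AGKMMP04} for local-search $k$-median, which attains the lower bound $3 + 2/s$ on the locality gap of $s$-swap local search in general metrics. Given the target parameters $w$ and $t$, I would set $s = 1/\eps$ and construct, for infinitely many $k$, a graph $G_k$ on $\Theta(k)$ vertices of bounded degree $d$ and girth at least $g := \max(4t+2,\, 10s)$; the existence of such graphs is classical (random regular bipartite graphs, Cayley graphs on nonabelian simple groups, or explicit Ramanujan constructions). I would place the $k$ local-optimum facilities on one side of a bipartite subgraph of $G_k$, the $k$ optimal facilities on the other side, and the clients on subdivided edges, choosing edge weights so as to reproduce the distance pattern of~\cite{AGKMMP04}: namely $\dist(c, l_c) = 1$ for the one local facility $l_c$ assigned to client $c$, and $\dist(c, l) \geq 3$ for every other local facility $l \neq l_c$, together with the symmetric property on the optimal side.

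The shallow-minor exclusion would follow from the combination of bounded degree and high girth by a standard counting argument: any $t$-shallow minor of $G_k$ arises from contracting vertex-disjoint connected subgraphs of radius at most $t$, and because $g > 2t+1$, each such subgraph is a tree on at most $d^{O(t)}$ vertices. A $K_w$-minor would require $w$ pairwise-adjacent branch sets, giving a sub-structure of at most $w \cdot d^{O(t)}$ vertices but at least $\binom{w}{2}$ edges between distinct branch sets; once $d$ is chosen small enough that $w \cdot d^{O(t)+1} < \binom{w}{2}$, this is impossible. Thus the entire family $\{G_k\}_k$ lies in the class excluding $K_w$ as a $t$-shallow minor.

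For the locality gap, the crucial point is that because $g \geq 10s$, any two distinct facilities involved in an $s$-swap are at graph distance $\geq g/2 \gg 2s$ in $G_k$. This spatial separation decouples the reassignment analysis across swapped facilities: no client's reassignment is influenced by more than one removed facility, so the standard $s$-swap inequality of~\cite{AGKMMP04} applies verbatim and gives a locality gap of at least $3 + 2/s \geq 3$. Letting $k \to \infty$ then produces the required infinite family of counterexamples.

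The main obstacle is the design of the weighted construction so that the realized shortest-path metric truly obeys the Arya et al.\ distance bounds---in particular ensuring that no alternative length-${<}3$ path from a client $c$ to an unassigned local facility $l$ exists through the high-girth backbone. The high girth is precisely what precludes such shortcuts, but verifying this requires a case analysis over all paths traversing a subdivided edge, together with a careful accounting of how the degree parameter $d$ of the bipartite backbone controls both the tightness constant and the shallow-minor exclusion threshold. Since $w, t, \eps$ are fixed constants and the target factor $3$ is achieved already at constant degree, a suitable parameter choice exists.
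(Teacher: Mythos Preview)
Your proposal is essentially correct and follows the same approach as the paper, which in fact gives almost no proof at all: it simply points to Figure~\ref{fig:tightness} and to~\cite{AGKMMP04}, noting that the Arya et~al.\ tight example for $s$-swap local search can be realized as a graph instance and that, by stretching (subdividing) edges, one can push the depth at which any dense minor appears arbitrarily high. Your high-girth bipartite construction is the same idea---subdivision is just a way to raise the girth---so you have supplied considerably more detail than the paper does.

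One small wrinkle worth tightening: your edge-counting argument for excluding $K_w$ as a $t$-shallow minor requires $\binom{w}{2} > w\cdot d^{O(t)+1}$, i.e.\ $w$ large compared to $d^{O(t)}$. Since $w$ and $t$ are \emph{given} and $d\ge 2$, this fails when $w$ is small. The clean argument is purely via girth and does not need the counting at all: a $t$-shallow $K_3$ minor forces a cycle of length at most $6t+3$ (traverse the three branch sets and the three connecting edges), so girth $>6t+3$ already rules out every $K_w$ $t$-shallow minor for $w\ge 3$. Replace the counting paragraph by this observation and take $g\ge\max(6t+4,\,c\cdot s)$ for a suitable constant $c$; then your construction goes through for all $w\ge 3$, and for $w\le 2$ the statement is vacuous.
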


See Figure \ref{fig:tightness} and \cite{AGKMMP04} for a complete proof that local search
performs badly on the instance depicted in the figure.

\begin{prop}
  For any $\rho$, there exists an infinite family of graphs 
  that are low-density graphs such that for any constant
  $\eps$, local search with neighborhoods of size $1/\eps$
  might return a solution of cost at least $3 \opt$.
\end{prop}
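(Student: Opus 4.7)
The plan is to reuse the tight instance from Arya et al.~\cite{AGKMMP04} that establishes the previous proposition, and to realize it as a low-density graph. Their instance consists of $k$ disjoint gadgets, each a small configuration of clients and candidate facilities arranged so that (a) the global optimum picks one designated facility per gadget, (b) a ``bad'' local placement in each gadget costs essentially $3$ times the gadget optimum, and (c) no swap involving fewer than $\Omega(k)$ centers can simultaneously fix more than a tiny fraction of the gadgets.

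First I would instantiate this construction geometrically, placing each gadget in a small region of a low-dimensional Euclidean space and spacing the gadgets far enough apart that no client ever prefers a facility outside its own gadget. This separation guarantees that the total cost of any candidate solution decomposes as a sum over gadgets, which is the property needed to invoke the Arya et al.\ analysis verbatim per gadget.

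Second I would form the graph $G_\rho$ as, e.g., a Delaunay-like graph or nearest-neighbor graph on the resulting point set. For sufficiently large inter-gadget spacing every bounded-radius ball meets at most $O(1)$ gadgets, and since each gadget contributes only $O(1)$ vertices, the density can be made at most $\rho$ for any prescribed $\rho$ by scaling up the spacing. This places $G_\rho$ in the family of low-density graphs. Applying the Arya et al.\ argument to the metric completion of $G_\rho$ then shows that local search with neighborhood size $1/\eps$ can only touch a $1/\eps$ fraction of the gadgets per iteration, so it remains stuck in the bad configuration in the remaining $k-1/\eps$ gadgets and returns a solution of cost at least $(3-o(1))\,\opt$.

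The main obstacle is simultaneously meeting three competing goals: the induced shortest-path metric must faithfully reproduce the intra-gadget distances used in~\cite{AGKMMP04}; the inter-gadget distances must dominate so that the total cost decomposes cleanly over gadgets; and the overall density must be at most~$\rho$. Balancing these requires careful choice of which edges to include and of the inter-gadget spacing, but no essentially new idea beyond the existing lower bound of Arya et al.
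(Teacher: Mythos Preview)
Your high-level approach---take the Arya et al.\ tight example and realize it geometrically as a low-density graph---is exactly what the paper does; its entire proof is the single sentence ``The proposition follows from encoding the graph of Figure~\ref{fig:tightness} as a low-density graph.''

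However, your description of the Arya et al.\ instance contains a genuine gap. You describe it as ``$k$ disjoint gadgets'' and argue that local search ``can only touch a $1/\eps$ fraction of the gadgets per iteration, so it remains stuck in the bad configuration in the remaining $k-1/\eps$ gadgets.'' This reasoning does not establish local optimality. If the gadgets were truly disjoint and each had an independently improvable bad placement, then a single-center swap inside one gadget would strictly decrease the cost, and local search would keep improving until it reached $\opt$. Being able to fix only a few gadgets per step is not the same as being unable to improve at all.

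The actual construction (Figure~\ref{fig:tightness} in the paper, following~\cite{AGKMMP04}) is precisely \emph{not} a union of independent pieces: the local and optimal facilities are interleaved so that any swap of at most $1/\eps$ centers that opens some optimal facilities must simultaneously close local facilities whose orphaned clients then incur enough extra cost to cancel the gain. That interlocking is what makes the bad solution a genuine local optimum. Your embedding plan (place the configuration in low-dimensional space, take a sparse geometric graph, scale so density is at most $\rho$) is fine, but you should correct the account of why no small swap improves the cost before the argument goes through.
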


The proposition follows from encoding the graph of Figure 
\ref{fig:tightness} as an low-density graph.

We additionally remark that Awasthi et al.~\cite{ACKS15} show that the $k$-means problem is APX-Hard for 
any non-constant dimension $d$.
Moreover, Kanungo et al.~\cite{KMNPSW04} give an example where local search returns a solution
of cost at least $9\opt$.

\begin{figure}
  \begin{center}
    \includegraphics[scale = 0.7]{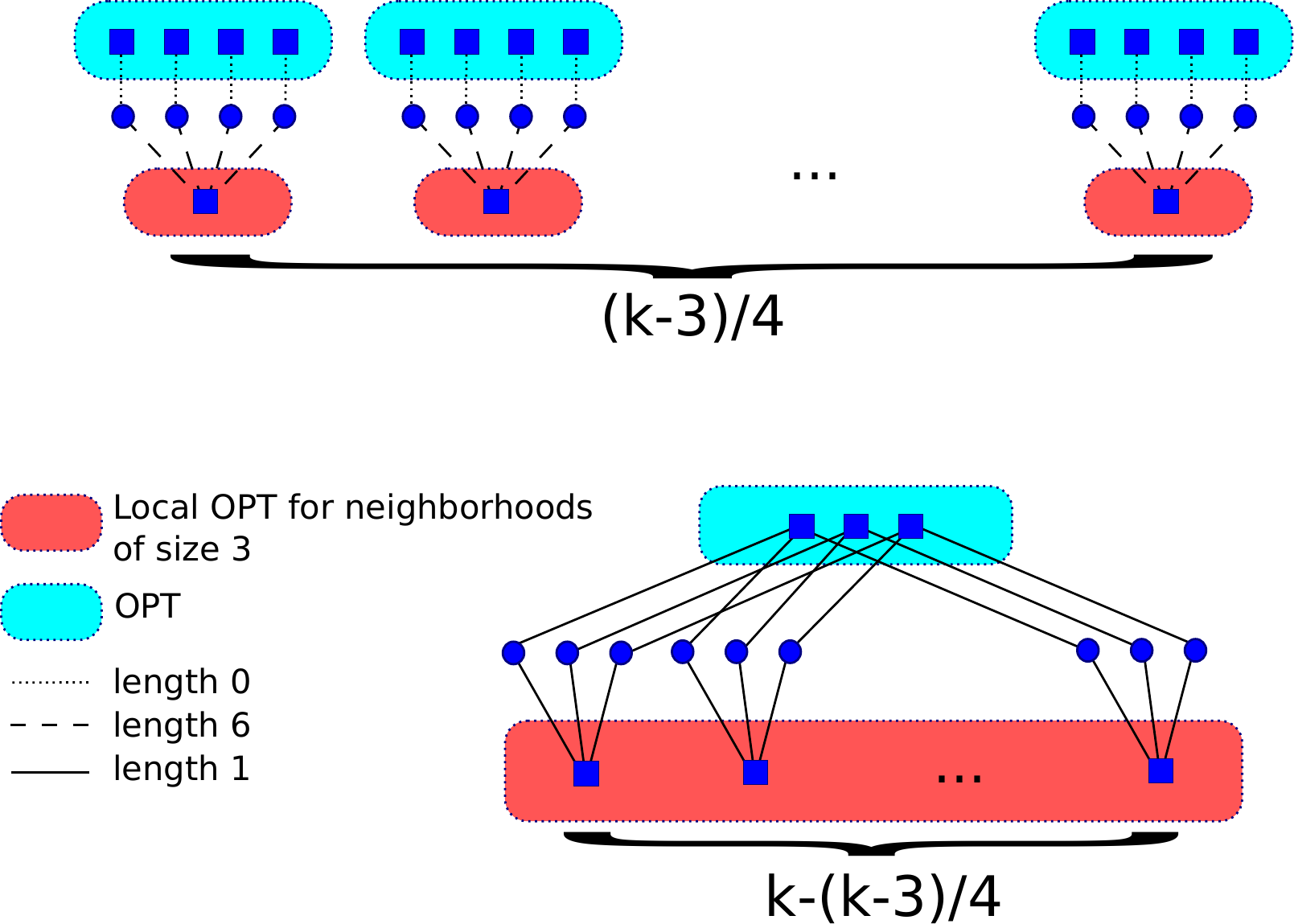}
  \end{center}
  \caption{
    This instance contains the complete
    bipartite graph $K_{3,3}$ as 
    a 1-shallow minor (and so is not planar) but 
    no non-trivial 0-shallow minor. Clients are denoted by circle and
    candidate centers by squares.
    There exists a local (for neighborhoods of size 3) optimum
    whose cost is at least $3\opt$. This example can be
    generalized to handle locality for neighborhoods of size $1/\eps$
    for any constant $\eps>0$.
    This is based on an example of 
    \cite{AGKMMP04} and can be extended to form a $i$-shallow minor graph for 
    any $i = o(n)$.
  }
  \label{fig:tightness}
\end{figure}

\section{Preliminaries}

We will use the following technical lemma in order to give a general proof that encompasses 
both the cases of $k$-median and $k$-means.
Throughout the paper we assume $p$ constant and define $\eps_1$ to be a positive constant. 
\begin{lem}\label{lem:TI}
  Let $p \ge 0$ and $0< \eps_1< 1/2$. For any $a,b,c \in A \cup F$, we have
  $$\dist(a,b)^p \le 
  \begin{cases}
    (1+\eps_1)^p (\dist(a,c)^p + \dist(c,b)^p/\eps_1^p) \\    
    2^p(\dist(a,c)^p + \dist(c,b)^p).
  \end{cases}
  $$ 
\end{lem}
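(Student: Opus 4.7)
The plan is to reduce both inequalities to the ordinary triangle inequality $\dist(a,b) \le \dist(a,c) + \dist(c,b)$ and then bound the $p$-th power of the right-hand side in two different ways.

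For the second (easier) bound, I would observe that $\dist(a,c) + \dist(c,b) \le 2 \max\{\dist(a,c),\dist(c,b)\}$, so raising to the $p$-th power gives $\dist(a,b)^p \le 2^p \max\{\dist(a,c)^p, \dist(c,b)^p\} \le 2^p(\dist(a,c)^p + \dist(c,b)^p)$, which is exactly the second case.

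For the first (sharper) bound, I would split into two cases according to which of the two distances dominates, using $\eps_1$ as the threshold. In the case $\dist(c,b) \le \eps_1 \dist(a,c)$, the triangle inequality yields $\dist(a,b) \le (1+\eps_1)\dist(a,c)$, so $\dist(a,b)^p \le (1+\eps_1)^p \dist(a,c)^p$, which is dominated by the right-hand side. In the complementary case $\dist(c,b) > \eps_1 \dist(a,c)$, I would write $\dist(a,c) < \dist(c,b)/\eps_1$, so $\dist(a,b) \le \dist(a,c) + \dist(c,b) < (1+1/\eps_1)\dist(c,b)$, and raising to the $p$-th power gives $\dist(a,b)^p \le (1+\eps_1)^p \dist(c,b)^p/\eps_1^p$, again dominated by the right-hand side. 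Combining the two cases yields the claimed first inequality.

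There is no real obstacle here; the only minor bookkeeping is making sure the constant $(1+\eps_1)^p/\eps_1^p$ in the ``large $\dist(c,b)$'' branch is written in the same form the lemma states, which is immediate since $(1+1/\eps_1)^p = (1+\eps_1)^p / \eps_1^p$. The case split on the ratio of the two subdistances is the standard trick for turning a triangle inequality into a weighted one with an asymmetric $(1+\eps_1, 1/\eps_1^p)$ split, and it does not require any assumption on the metric beyond the triangle inequality, which is why the statement holds for arbitrary $a,b,c \in A \cup F$.
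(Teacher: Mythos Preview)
Your proof is correct and follows essentially the same approach as the paper: both use the triangle inequality followed by the case split on whether $\dist(c,b) \le \eps_1 \dist(a,c)$ for the first bound, and the elementary $x+y \le 2\max\{x,y\}$ observation (which the paper phrases as ``the binomial theorem'') for the second bound.
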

\begin{proof}
  By the triangular inequality,
  $$ \dist(a,b)^p = (\dist(a,c) + \dist(c,b))^p \le 
  \begin{cases}
    (1+\eps_1)^p \dist(a,c)^p &\mbox{If $\dist(c,b) \le \eps_1 \dist(a,c)$.} \\
    (1+\eps_1)^p\dist(c,b)^p/\eps^p_1 &\mbox{Otherwise.} \\
  \end{cases}
  $$
  Moreover, by the binomial theorem,
  $ \dist(a,b)^p = (\dist(a,c) + \dist(c,b))^p \le 2^p (\cost(a,c)^p  + \cost(c,b)^p).$
\end{proof}

\subsection{Graph $r$-division and other definitions}
\label{sec:graph-r-division}
For a graph $G$, we use $V(G)$ and $E(G)$ to denote the set of
vertices of $G$ and the set of edges of $G$, respectively.
For a subgraph $H$ of $G$, the {\em vertex boundary} of $H$ in $G$, denoted $\partial_G(H)$,
is the set of vertices $v$ such that $v$ is in $H$ but has an incident
edge that is not in $H$.  (We might write $\partial(H)$ if $G$ is
unambiguous.)  A vertex in the vertex boundary of $H$ is called a {\em
  boundary vertex} of $H$.  A vertex of $H$ that is not a boundary
vertex of $H$ is called an {\em internal vertex}.  We denote the set
of internal vertices of $H$ as $\mathcal I(H)$.

\begin{definition}\label{def:rdivision} 
Let $c_1$ and $c_2$ be constants (depending on $\mathcal
  G$).  For a number $r$, a  {\em
weak $r$-division} of a graph $G$ (with respect to $c_1, c_2$) is a collection $\mathcal R$ of
subgraphs of $G$, called {\em regions}, with the following properties.
\begin{enumerate}
\item Each edge of $G$ is in exactly one region. \label{def:rdivisiona}
\item The number of regions is at most $c_1 |V(G)|/r$.\label{def:rdivisionb}
\item Each region contains at most $r$ vertices.\label{def:rdivisionc}
\item The number of boundary vertices, summed over all regions, is at
  most $c_2|V(G)|/r^{1/2}$.\label{def:rdivisiond}
\end{enumerate}
\end{definition}

A family of graphs $F$ is said to be closed under taking minor (\emph{minor-closed}) if for any graph $G \in F$,
for any minor $H$ of $G$, we have $H \in F$.

\begin{theorem}[Frederickson~\cite{Frederickson87} + Alon, Seymour,
  and Thomas~\cite{AlonST}] 
\label{thm:r-division}
  Let $\mathcal K$ be a nontrivial minor-closed family of graphs.  There exist $c_1,c_2$ such that
  every graph in $\mathcal K$ has a weak $r$-division with respect to
  $c_1, c_2$.
\end{theorem}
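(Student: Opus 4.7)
The plan is to combine the separator theorem of Alon, Seymour, and Thomas with Frederickson's recursive partitioning scheme. First, I invoke the separator theorem: for any nontrivial minor-closed family $\mathcal K$, there exists a constant $\alpha = \alpha(\mathcal K)$ such that every $n$-vertex graph $H \in \mathcal K$ admits a subset $X \subseteq V(H)$ with $|X| \le \alpha \sqrt{n}$ whose removal leaves components of size at most $\tfrac{2}{3}n$. Because edge deletion and vertex deletion are minor operations, every subgraph that arises in the subsequent recursion still lies in $\mathcal K$, so the separator theorem can be reapplied at every step.

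Next, I construct the $r$-division recursively. Initialize $\mathcal R = \{G\}$; while some region $H \in \mathcal R$ has more than $r$ vertices, compute a balanced separator $X$ of $H$, group the components of $H - X$ greedily into two collections of total size at most $\tfrac{2}{3}|V(H)|$ each, and replace $H$ by two subregions each consisting of one collection together with $X$, which is adjoined as boundary to both subregions. Each edge of $H$ is assigned to exactly one of the two subregions (edges with both endpoints in $X$ are placed arbitrarily; edges from $X$ to a component go with that component's side), which preserves property~\ref{def:rdivisiona}. Property~\ref{def:rdivisionc} holds at termination by definition.

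For the quantitative properties, I use two standard recurrence arguments. Let $B(n)$ denote the worst-case total boundary produced on a subproblem of $n$ vertices. Then $B(n) \le B(n_1) + B(n_2) + 2\alpha\sqrt{n}$ with $\max(n_1, n_2) \le \tfrac{2}{3}n + \alpha\sqrt{n}$, and $B(n) = 0$ for $n \le r$. An induction gives $B(n) \le c_2\, n/\sqrt{r}$ for a constant $c_2$ depending only on $\alpha$, establishing property~\ref{def:rdivisiond}. For property~\ref{def:rdivisionb}, I observe that every region that is the result of a split had, just before being declared a leaf, size at least some constant fraction of $r$ (since otherwise its parent would have stopped splitting earlier), and that the sum of region sizes is at most $|V(G)| + B(|V(G)|) = O(|V(G)|)$; hence the number of regions is $O(|V(G)|/r)$.

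The main obstacle I expect is calibrating the constants in the induction for $B(n)$: one needs $c_2$ large enough that $c_2 n_1/\sqrt{r} + c_2 n_2/\sqrt{r} + 2\alpha\sqrt{n} \le c_2 n/\sqrt{r}$ whenever $n > r$, which works because $n \ge r$ forces $\sqrt{n} \le n/\sqrt{r}$ and the balance condition prevents $n_1 + n_2$ from exceeding $n + \alpha\sqrt{n}$. A secondary subtlety is the bookkeeping that ensures each edge ends up in exactly one region despite the fact that separator vertices are duplicated across the two children; this is resolved by the edge assignment rule above and propagates cleanly through the recursion.
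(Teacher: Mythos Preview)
Your approach is exactly the one the paper takes: invoke the Alon--Seymour--Thomas separator theorem (which applies because a nontrivial minor-closed family excludes some fixed minor) and then run Frederickson's recursive separator construction, noting that nothing in that construction uses planarity beyond the existence of $O(\sqrt{n})$ balanced separators and closure under subgraphs. The paper's own proof is in fact just this two-line citation, so you are supplying detail the paper omits.

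There is, however, a genuine gap in your analysis of Property~\ref{def:rdivisiond}. The induction hypothesis $B(n)\le c_2\,n/\sqrt{r}$ cannot be pushed through the recurrence as you describe. Since the separator $X$ is placed in \emph{both} children, you have $n_1+n_2 = n + |X| \ge n$, so
\[
c_2\,\frac{n_1}{\sqrt r}+c_2\,\frac{n_2}{\sqrt r}+2\alpha\sqrt n
\;\ge\; c_2\,\frac{n}{\sqrt r}+2\alpha\sqrt n
\;>\; c_2\,\frac{n}{\sqrt r},
\]
regardless of how large $c_2$ is chosen. The observation that $\sqrt n\le n/\sqrt r$ for $n\ge r$ does not help here: it lets you compare $2\alpha\sqrt n$ to the target, but there is no slack in the target to absorb it, because the children already consume all of $c_2 n/\sqrt r$ and then some.

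The standard repair (Frederickson's trick) is to strengthen the hypothesis to $B(n)\le \beta\,n/\sqrt r - \gamma\sqrt n$ for suitable constants $\beta,\gamma>0$. The negative term creates the needed slack: concavity of $\sqrt{\cdot}$ together with the balance condition $n_i\le \tfrac23 n + \alpha\sqrt n$ forces $\sqrt{n_1}+\sqrt{n_2}\ge (1+\delta)\sqrt n$ for some $\delta>0$, and the extra $\gamma\delta\sqrt n$ absorbs both the $2\alpha\sqrt n$ separator cost and the $\beta\alpha\sqrt n/\sqrt r$ overcount. The paper carries out exactly this computation in Section~\ref{sec:Euclidean-r-division-proof} for the Euclidean analogue (with exponent $1-1/d$ in place of $1/2$), so you can transcribe that argument with $d=2$.
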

\ifshort
\else
\begin{proof} Alon, Seymour, and Thomas ~\cite{AlonST} proved a separator theorem for the family of 
graphs excluding a fixed graph as a minor.  Any nontrivial minor-closed family excludes some graph
 as a minor (else it is trivial).  Frederickson~\cite{Frederickson87}
 gave a construction for a stronger kind of $r$-division of a
 planar graph.   The construction
 uses nothing of planar graphs except that they have such separators.
\end{proof}
\fi

Let $G$ be an undirected graph with edge-lengths.  
Fix an arbitrary priority ordering of the vertex set $V(G)$. 
For every subset $S$ of $V(G)$, we define the {\em Voronoi partition with respect to $S$}.
For each vertex $v\in S$, the {\em Voronoi cell with center $v$},
denoted $V_S(v)$, is the set of vertices 
that are closer to $v$ than to any other vertex in $S$, breaking ties in favor of the highest-priority vertex of $S$.
\begin{fact}
  \label{fact:connected_subgraph}
  For any $S$, for any vertex $v \in S$, the induced subgraph $G[V_S(v)]$ is a connected subgraph of $G$.
\end{fact}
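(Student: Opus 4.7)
The plan is to show that for any $u \in V_S(v)$ there exists a path in $G$ from $u$ to $v$ all of whose vertices lie in $V_S(v)$; the natural candidate is any shortest $u$-to-$v$ path in $G$.

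First I would fix $u \in V_S(v)$, pick an arbitrary shortest $u$-to-$v$ path $P$ in $G$ (using the given edge-lengths), and take any intermediate vertex $w$ on $P$. The key identity is that $w$ lies on a shortest path, so $\dist(u,v) = \dist(u,w) + \dist(w,v)$. I want to show $w \in V_S(v)$, i.e.\ $v$ wins the Voronoi competition at $w$ under the prescribed tie-breaking.

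Suppose, for contradiction, that some $v' \in S \setminus \{v\}$ beats $v$ at $w$. There are two cases. If $\dist(w,v') < \dist(w,v)$, then by the triangle inequality $\dist(u,v') \le \dist(u,w) + \dist(w,v') < \dist(u,w) + \dist(w,v) = \dist(u,v)$, contradicting $u \in V_S(v)$. If instead $\dist(w,v') = \dist(w,v)$ and $v'$ has higher priority than $v$, then $\dist(u,v') \le \dist(u,w) + \dist(w,v') = \dist(u,v)$; strict inequality is ruled out as in the first case, so equality holds, and the tie-breaking rule would have forced $u \in V_S(v')$, again a contradiction.

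Hence every vertex on $P$ lies in $V_S(v)$, so $P$ is a path in the induced subgraph $G[V_S(v)]$ joining $u$ to $v$. Since $u$ was arbitrary, every vertex in $V_S(v)$ lies in the same connected component as $v$ in $G[V_S(v)]$, proving the fact. The only subtlety is handling the tie-breaking rule correctly; once that case is dispatched by the priority argument, the rest is routine triangle-inequality bookkeeping.
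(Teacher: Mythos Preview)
Your proof is correct and takes essentially the same approach as the paper: pick a shortest $u$-to-$v$ path and show every intermediate vertex $w$ lies in $V_S(v)$ by contradiction, handling the distance-tie case via the priority rule. One cosmetic point: in the tie case you should conclude $u \notin V_S(v)$ rather than $u \in V_S(v')$ (a third center might win instead), but that is already the contradiction you need.
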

\ifshort
\else
\begin{proof}
Let $u\in V_S(v)$, and let $p$ denote a $v$-to-$u$ shortest path.  
Let $w$ be a vertex on $P$.  Assume for a contradiction that, for some vertex $v'\in S$, either the $v'$-to-$w$ shortest path $p'$ is shorter than the shortest $v$-to-$w$ path, or it is no longer and $v'$ has higher priority than $v$.  Replacing the $v$-to-$w$ subpath of $p$ with $p'$ yields a $v$-to-$u$ path that either is shorter than $p$ or is no longer than $p$ and originates at a higher-priority vertex than $v$.  
\end{proof}
\fi

It follows that, for any vertex $v$ of $G$, contracting the edges of
the subgraph $G[V_S(v)]$ yields a single vertex.

\begin{definition} \label{def:graph-Voronoi}
We define $G_{\vor(S)}$ as the graph obtained from $G$ by contracting every edge of $G[V_S(v)]$ for every vertex $v\in S$. 
For each vertex $v\in S$, we denote by $\hat v$ the vertex of
$G_{\vor(S)}$ resulting from contracting every edge of $G[V_S(v)]$.
\end{definition}
If $G$ belongs to a minor-closed family $\mathcal K$ then so does $G_{\vor(S)}$.

\subsection{Euclidean space $r$-division}
\label{sec:Euclidean-r-division}
We define analogous notions for the case of Euclidean spaces of fixed dimension $d$.
Consider a set of points $C$ in $\R^d$.  For a set $Z$ of points in
$\R^d$ and a bipartition $C_1\cup C_2$ of $C$, we say that $Z$
\emph{separates} $C_1$ and $C_2$ if, in the Voronoi
diagram of $C\cup Z$, the boundaries of cells of points in $C_1$ are
disjoint from the boundaries of cells of points in $C_2$.

\begin{definition}\label{def:rdivision-euclid}
Let $c_1$ and $c_2$ be constants.  Let $C$ be a set of points in $\R^d$.
For an integer $r>1$, a  {\em
weak $r$-division} of $C$ (with respect to $c_1,c_2$) is a set of {\em boundary points} 
$Z \subset \R^d$ together with a collection of subsets $\mathcal R$ of
$C \cup Z$ called {\em regions}, with the following properties.
\begin{enumerate}
\item ${\mathcal R}\setminus Z$ is a partition of $C$.
\item The number of regions is at most $c_1 |C|/r$.\label{def:rdivisionb-euclid}
\item Each region contains at most $r$ points of $C \cup Z$.\label{def:rdivisionc-euclid}
\item $\sum_{R \in \mathcal{R}} |R \cap Z| \le c_2 |C|/r^{1/d}$.  \label{def:rdivisiond-euclid}
\end{enumerate}
Moreover, for any region $R_i$, $R_i \cap Z$ is a Voronoi separator for $R_i - Z$ and $(C \cup Z) - R_i$.
\end{definition}


The following theorem is from \cite[Theorem~3.7]{BH-P14}.
\begin{theorem} \cite[Theorem~3.7]{BH-P14}
  \label{thm:euclid-sep}
  Let $P$ be a set of $n$ points in $\R^d$. One can compute, in expected linear time, a sphere $S$,
  and a set $Z \subseteq S$, such that
  \begin{itemize}
  \item $|Z| \leq c n^{1-1/d}$,
  \item There are most $\sigma n$ points of $P$ in the sphere $S$ and at
    most $\sigma n$ points of $P$ not in $S$, and
  \item $Z$ is a Voronoi separator of the points of $P$ inside $S$ from the points of $P$ outside $S$.
  \end{itemize}
  Here $c$ and $\sigma <1$ are constants that depends only on the dimension $d$.
\end{theorem}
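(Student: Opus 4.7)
The plan is to adapt the classical geometric separator theorem of Miller-Teng-Thurston-Vavasis into one that yields an actual Voronoi separator, not merely a balanced cut. First I would lift $P$ by stereographic projection onto the unit sphere $\mathbb{S}^d \subset \R^{d+1}$, compute a constant-factor approximation to the centerpoint of the lifted set in expected linear time via random sampling (a Clarkson-style construction), and apply a conformal map of $\mathbb{S}^d$ sending this approximate centerpoint to the origin. By the centerpoint theorem, every hyperplane through the origin leaves at least $n/(d+2)$ lifted points on each side; pulling back to $\R^d$, a uniformly random such hyperplane becomes a sphere $S$ that balances $P$ inside versus outside by some constant factor $\sigma<1$ depending only on $d$.

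Next I would bound the number of cells of $\vor(P)$ cut by $S$. Averaging over a uniformly random orientation of the separating hyperplane and using a surface-area argument shows that, in expectation, only $O(n^{1-1/d})$ cells of $\vor(P)$ meet $S$: this matches the standard bound that a generic $(d-1)$-sphere crosses $O(n^{1-1/d})$ cells of the Voronoi diagram of $n$ points in $\R^d$. If a random draw fails either the balance condition or this cell-count bound, I restart; the expected number of trials is $O(1)$, keeping the time budget linear.

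Finally I would build $Z$ cell by cell: for every cut cell $\vor(p)$, place a bounded number of points on the slice $S \cap \vor(p)$, densely enough that in the augmented diagram $\vor(P\cup Z)$ every potential Voronoi edge between an inside site and an outside site of $P$ is preempted by the cell of some $z\in Z$. Because the slice $S\cap \vor(p)$ has $(d-1)$-dimensional measure at most $O(\text{diam}(\vor(p))^{d-1})$ and these diameters sum (in the $(d-1)$-st power) to $O(n^{1-1/d})$, a constant number of points per cut cell suffices, giving $|Z|\le c\, n^{1-1/d}$. The main obstacle is verifying the correctness of this last step: one must show that for every would-be inside-outside Voronoi edge of $\vor(P\cup Z)$ there is some $z\in Z$ strictly closer to the corresponding dual vertex than either site of $P$, so the local geometry of the net placed on each cut cell has to be chosen with care. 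Expected linear running time then follows for free, since the centerpoint approximation, the random direction sampling, and the insertion of $Z$ all use only local nearest-neighbor queries and never require building $\vor(P)$ explicitly.
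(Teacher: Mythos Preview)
The paper does not prove this theorem at all: it is quoted verbatim as \cite[Theorem~3.7]{BH-P14} and used as a black box to derive the Euclidean weak $r$-division (Theorem~\ref{thm:r-division-euclid}). So there is no ``paper's own proof'' to compare your proposal against; the authors simply import the result.

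As for your sketch on its own merits: the first two stages (stereographic lift, approximate centerpoint, random great-sphere separator, and the expected $O(n^{1-1/d})$ bound on the number of Voronoi cells crossed) are exactly the Miller--Teng--Thurston--Vavasis template and are fine. The third stage, however, is where the actual work in \cite{BH-P14} lies, and your account glosses over it. Two specific issues: (i) the assertion that the $(d-1)$st powers of the diameters of the cut cells sum to $O(n^{1-1/d})$ is not automatic---a cut Voronoi cell can be unbounded or have diameter much larger than the local interpoint spacing, so you need an argument tied to the \emph{area of the slice} $S\cap \vor(p)$ rather than to $\text{diam}(\vor(p))^{d-1}$; and (ii) ``a constant number of points per cut cell'' does not in general suffice to make $Z$ a Voronoi separator in the sense defined here (cell boundaries of inside and outside sites must become disjoint in $\vor(P\cup Z)$). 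The actual construction places a net on $S$ whose local density is governed by the distance from each point of $S$ to its nearest site in $P$, and one then argues that any point equidistant from an inside site and an outside site is strictly closer to some net point. You flag this yourself as ``the main obstacle,'' and indeed it is the whole content of the cited theorem; the rest is standard.
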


From that theorem we can easily derive the following (see
Section~\ref{sec:Euclidean-r-division-proof} for the proof):
\begin{theorem}
  \label{thm:r-division-euclid}
  Let $r$ be a positive integer and $d$ be fixed. Then there exist $c_1,c_2$ such that
  every set of points $C \subset \R^d$ has a weak $r$-division with respect
  to $c_1, c_2$.
\end{theorem}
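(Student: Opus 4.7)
The plan is to build the weak $r$-division by recursive top-down decomposition using Theorem~\ref{thm:euclid-sep}. I maintain a collection of subproblems, each being a set $P \subseteq C \cup Z$ together with a designated boundary $B \subseteq P$. Initially the only subproblem is $(C, \emptyset)$. While some subproblem has $|P| > r$, apply Theorem~\ref{thm:euclid-sep} to $P$ to obtain a sphere $S$, a separator set $Z' \subseteq S$ of size at most $c |P|^{1-1/d}$, and a bipartition of $P \setminus Z'$ into interior points $P_{\text{in}}$ and exterior points $P_{\text{out}}$, each of size at most $\sigma|P|$; add $Z'$ to the global boundary set $Z$ and replace the subproblem with $(P_{\text{in}} \cup Z' \cup (B \cap S^{\le}),\, B' )$ and $(P_{\text{out}} \cup Z' \cup (B \cap S^{\ge}), \,B'')$ where $B', B''$ inherit the relevant previous boundary points. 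Stop when every subproblem has at most $r$ points; these are the regions $\mathcal R$.

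Next I verify the counting bounds. Since Theorem~\ref{thm:euclid-sep} produces a split with $|P_{\text{in}}|,|P_{\text{out}}| \le \sigma |P|$ and hence also $\ge (1-\sigma)|P|$, the recursion tree is balanced: it has depth $O(\log(n/r))$ and at most $c_1 n/r$ leaves, yielding property~\ref{def:rdivisionb-euclid}. Property~\ref{def:rdivisionc-euclid} holds by the stopping rule. For property~\ref{def:rdivisiond-euclid}, the total number of boundary points contributed throughout the recursion satisfies
\[
T(n) \;\le\; T(n_1) + T(n_2) + c n^{1-1/d}, \qquad n_1 + n_2 \le n,\ n_1,n_2 \le \sigma n,
\]
with base case $T(n) = 0$ for $n \le r$. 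A standard induction using $\sigma < 1$ and $d$ fixed gives $T(n) \le c_2 n / r^{1/d}$, where $c_2$ depends only on $c,\sigma,d$.

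The hard part is verifying the Voronoi-separator property required at the end of Definition~\ref{def:rdivision-euclid}: for each final leaf region $R_i$, the boundary set $R_i \cap Z$ must be a Voronoi separator of $R_i \setminus Z$ from $(C \cup Z) \setminus R_i$ \emph{in the Voronoi diagram of the entire set $C \cup Z$}, not merely within the subproblem in which $R_i$ was created. The argument is by induction on recursion depth, using the monotonicity of Voronoi diagrams under point insertion: adding a new site to a point set can only subdivide existing cells, never merge them or create new adjacencies between old cells. Hence, when Theorem~\ref{thm:euclid-sep} guarantees that $Z'$ is a Voronoi separator between $P_{\text{in}}$ and $P_{\text{out}}$ within $P$, subsequent insertion of further separator points, whether on the interior or exterior side, preserves the fact that cells of $P_{\text{in}}$-points do not border cells of $P_{\text{out}}$-points; the cells of $Z'$ along $S$ remain an insulating layer in the global diagram.

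Carrying out this induction from the leaves up, the accumulated boundary $R_i \cap Z$ of each leaf region consists exactly of the separator points along all spheres on the root-to-leaf path that bound $R_i$, and together these form a Voronoi separator in the global diagram of $C \cup Z$. The main technical care is in checking this monotonicity claim cleanly (so that a separator introduced at depth $k$ is still a separator after depths $k+1, k+2, \dots$ add points strictly on one side of it); everything else in the construction and the counting follows directly from Theorem~\ref{thm:euclid-sep}.
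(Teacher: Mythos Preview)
Your construction is the same recursive separator procedure the paper uses, so the approaches coincide. The difference is only in where the work goes: the paper treats the Voronoi-separator requirement as ``clear'' and spends all its effort on Property~\ref{def:rdivisiond-euclid}, while you do the reverse.

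Two remarks. First, your recurrence is written with $n_1+n_2\le n$, but in the construction the separator $Z'$ is passed to \emph{both} children, so the subproblem sizes satisfy $n_1+n_2 \le n + 2c\,n^{1-1/d}$. Consequently the ansatz $T(n)\le \beta n/r^{1/d}$ does not close by itself; the paper handles this by proving the strengthened bound $b(n)\le \beta n/r^{1/d}-\gamma n^{1-1/d}$ and using the strict concavity of $x\mapsto x^{1-1/d}+(1-x)^{1-1/d}$ on $[1-\sigma,\sigma]$ to absorb the additive $cn^{1-1/d}$. Your ``standard induction'' is this argument, but it is exactly the step the paper chooses to spell out.

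Second, your monotonicity observation---inserting a site can only shrink existing cells and hence cannot create new adjacencies between \emph{old} cells---is correct and is the right tool. It immediately gives that any two \emph{original} points in different leaf regions have non-adjacent cells in the global diagram. The separator condition in Definition~\ref{def:rdivision-euclid}, however, asks that cells of $R_i\setminus Z$ avoid the cells of \emph{all} of $(C\cup Z)\setminus R_i$, including separator points created later in sibling subtrees; monotonicity alone says nothing about adjacency between an old cell and a \emph{new} cell. The paper does not address this either. If you want to close the gap, use the stronger guarantee implicit in Theorem~\ref{thm:euclid-sep} (the separator points are dense enough on $S$ that $S$ is covered by their Voronoi cells relative to the current subproblem), together with the fact that ancestor separators are inherited by every descendant region; this keeps each original point's global cell strictly inside the ball at every ancestor sphere.
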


\subsection{Properties of the $r$-Divisions}
We present the properties of the $r$-divisions that we will be using for the analysis of the solution
output by the local search algorithm.
\begin{lem}
  \label{lem:sep-graphs}
  Let $G=(V,E)$ be a graph excluding a fixed minor $H$ and $\calF \subseteq V$.
  Let $H_i$ be a region of the $r$-division of $G_{\vor(\calF)}$.
  Suppose $c$ and $v$ are vertices of $G$
  such that one of the vertices in $\set{\hat c, \hat v}$ is a
  vertex of $H_i$ and the other is {\em not} an internal vertex of $H_i$.
  Then there exists a vertex $x\in \calF$ such that $\hat x$ is a
  boundary vertex of the region $H_i$ and
  $\dist(c,x) \leq \dist(c,v)$.
\end{lem}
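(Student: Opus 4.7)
The plan is to trace a shortest $c$-to-$v$ path in $G$ and show that its projection to $G_{\vor(\calF)}$ must pass through a boundary vertex of $H_i$ whose associated facility is at distance at most $\dist(c,v)$ from $c$.

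First I would fix a shortest $c$-to-$v$ path $P = (c = u_0, u_1, \ldots, u_k = v)$ in $G$; note that $\hat c$ and $\hat v$ are only defined for $c, v \in \calF$, so in particular $v \in \calF$. Each $u_i$ lies in a unique Voronoi cell $V_\calF(f_i)$ with $f_0 = c$ and $f_k = v$; after suppressing repetitions this yields a sequence of distinct facilities $g^{(0)} = c, g^{(1)}, \ldots, g^{(m)} = v$. By Fact~\ref{fact:connected_subgraph} each Voronoi cell induces a connected subgraph of $G$, so the edges of $P$ crossing cell boundaries project to genuine edges $(\hat{g^{(\ell)}}, \hat{g^{(\ell+1)}})$ of $G_{\vor(\calF)}$, and $P$ thus becomes an honest walk from $\hat c$ to $\hat v$ in the contracted graph.

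Next, assume without loss of generality that $\hat c \in H_i$ (the opposite case is completely symmetric). If $\hat v$ is itself a boundary vertex of $H_i$, setting $x := v$ immediately satisfies both required conditions. Otherwise $\hat v \notin H_i$, and I would let $\ell^*$ be the largest index with $\hat{g^{(\ell^*)}} \in H_i$ and set $x := g^{(\ell^*)}$. The edge $(\hat{g^{(\ell^*)}}, \hat{g^{(\ell^*+1)}})$ is incident to $\hat x \in H_i$ but cannot lie in $H_i$: by Definition~\ref{def:rdivision} every edge of $G_{\vor(\calF)}$ belongs to exactly one region, and since $\hat{g^{(\ell^*+1)}} \notin H_i$ this edge must live in some other region. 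Hence $\hat x$ is a boundary vertex of $H_i$, as required.

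To bound the distance, let $w$ be any vertex of $P$ that lies in $V_\calF(x)$. Because $w \in V_\calF(x)$, the facility $x$ is at least as close to $w$ as any other facility of $\calF$, so in particular $\dist(w, x) \leq \dist(w, v)$. Because $w$ lies on a shortest $c$-to-$v$ path, $\dist(c, w) + \dist(w, v) = \dist(c, v)$, and the triangle inequality then gives $\dist(c, x) \leq \dist(c, w) + \dist(w, x) \leq \dist(c, v)$. The symmetric case ($\hat v \in H_i$, $\hat c$ not internal) is handled by running the same argument from $v$ back toward $c$, taking $\ell^*$ to be the smallest rather than the largest index of a cell vertex in $H_i$. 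The main obstacle is the preceding step: verifying that the exit vertex along the walk is a boundary vertex, which is exactly where the $r$-division property that each edge belongs to a unique region is essential.
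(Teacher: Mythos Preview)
Your proof is correct and follows essentially the same route as the paper's: trace a shortest $c$-to-$v$ path, observe that its image in $G_{\vor(\calF)}$ must hit a boundary vertex of $H_i$, and use the Voronoi property at a point $w$ of the path lying in that boundary cell to get $\dist(w,x)\le\dist(w,v)$, hence $\dist(c,x)\le\dist(c,v)$. The paper's version simply asserts the existence of such a $w$ in one sentence, whereas you spell out the walk and the edge-in-one-region argument to locate it; that extra detail is fine and arguably clearer.

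One small correction: your remark that ``$\hat c$ and $\hat v$ are only defined for $c,v\in\calF$'' is not quite right. In the paper $\hat w$ is used for arbitrary vertices $w$ of $G$ (e.g.\ clients in Lemma~\ref{lem:client}), with the natural meaning ``the vertex of $G_{\vor(\calF)}$ that $w$ maps to after contraction''. So you should not assume $c\in\calF$, and your setting $f_0=c$ is unwarranted; but this does not affect the argument, since the walk from $\hat c$ to $\hat v$ still exists and your boundary-crossing and distance arguments go through verbatim with $g^{(0)}$ being the center of the Voronoi cell containing $c$. You do implicitly need $v\in\calF$ for the inequality $\dist(w,x)\le\dist(w,v)$, and indeed every application of the lemma in the paper has $v$ a facility.
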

\ifshort
\else
\begin{proof}
  Let $p$ be a shortest $c$-to-$v$ path in $G$.  By the conditions on
  $\hat c$ and $\hat v$, there is
  some vertex $w$ of $p$ such that $\hat w$ is a boundary vertex of
  $H_i$.  Let $x$ be the center of the Voronoi cell whose contraction
  yields $\hat w$.  By definition of Voronoi cell,
  $\dist(w,x) \leq \dist(w, v)$.  Therefore replacing the $w$-to-$v$
  subpath of $p$ with the shortest $w$-to-$x$ path yields a path no
  longer than $p$.
\end{proof}
\fi
We obtain the analogous lemma for the Euclidean case, whose proof
follows directly from the definition of $r$-division 
(i.e.: the fact that $Z$ is a Voronoi separator).
\begin{lem}
  \label{lem:sep-euclid}
  Let $C$ be a set of points in $\R^d$ and $Z$ be an $r$-division of $C$.
  For any two different regions $R_1,R_2$, for any points $c \in R_1,v \in R_2$ 
  there exists a boundary vertex $x \in Z \cap R_1$ such that 
  $\dist(c,x) \leq \dist(c,v)$.
\end{lem}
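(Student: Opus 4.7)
The plan is to exploit the Voronoi-separator property built into Definition~\ref{def:rdivision-euclid}: in the Voronoi diagram of $C \cup Z$, the boundaries of cells centered at points of $R_1 \setminus Z$ are disjoint from the boundaries of cells centered at points of $(C \cup Z) \setminus R_1$. First I would dispose of the trivial case $c \in Z$ by taking $x = c$, which gives $\dist(c,x) = 0 \le \dist(c,v)$. Otherwise $c \in R_1 \setminus Z$ and, since $R_1 \neq R_2$, $v \in (C \cup Z) \setminus R_1$, so the closures of the Voronoi cells of $c$ and $v$ are disjoint.

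Next I would consider the straight-line segment $\sigma$ from $c$ to $v$ and track the sequence of (closed) Voronoi cells that $\sigma$ meets. At one endpoint $\sigma$ lies in the cell of $c$ (a cell of $R_1 \setminus Z$) and at the other endpoint it lies in the cell of $v$ (a cell of $(C \cup Z) \setminus R_1$). By a connectedness argument on $\sigma$, at the first moment $\sigma$ leaves the union of closed cells centered at points of $R_1 \setminus Z$, it must enter a cell sharing a boundary with one of those cells. The separator condition forbids this neighboring cell from being centered at a point of $(C \cup Z) \setminus R_1$, so it must be centered at some $x \in Z \cap R_1$. Let $p$ be this transition point; $p$ lies on $\sigma$ and belongs to the closed Voronoi cell of $x$.

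From here the distance inequality drops out immediately. Because $p$ lies in the Voronoi cell of $x$, we have $\dist(p,x) \le \dist(p,y)$ for every $y \in C \cup Z$, in particular $\dist(p,x) \le \dist(p,v)$. Because $p$ lies on the segment from $c$ to $v$, $\dist(c,p) + \dist(p,v) = \dist(c,v)$. Combining these with the triangle inequality gives
$$\dist(c,x) \le \dist(c,p) + \dist(p,x) \le \dist(c,p) + \dist(p,v) = \dist(c,v),$$
as required.

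The only step requiring genuine care is the continuity argument producing the cell of $Z \cap R_1$ along $\sigma$; I would phrase it by noting that the union of closed cells of $R_1 \setminus Z$ and the union of closed cells of $(C \cup Z) \setminus R_1$ are disjoint closed sets (by the separator property), so a continuous path from the first to the second cannot be covered by their union and must enter the remaining piece of the diagram, which consists of cells centered at points of $Z \cap R_1$. Everything after that is just the defining inequality of a Voronoi cell combined with the triangle inequality along $\sigma$.
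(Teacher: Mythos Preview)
Your argument is essentially what the paper intends: the paper simply asserts that the lemma ``follows directly from the definition of $r$-division (i.e., the fact that $Z$ is a Voronoi separator),'' and you have filled in the connectedness/triangle-inequality details that make that assertion precise. One small slip: the inference ``since $R_1 \neq R_2$, $v \in (C \cup Z) \setminus R_1$'' is not justified, because a point of $Z$ may lie in several regions; however, if $v \in R_1 \cap R_2$ then by Property~1 of Definition~\ref{def:rdivision-euclid} we have $v \in Z \cap R_1$, and taking $x = v$ disposes of that case immediately.
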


\section{Facility Location in minor-closed graphs: Proof of Theorem~\ref{thm:FC}}
As a warm-up, we analyze Local Search for Uniform Facility Location
(Algorithm~\ref{algo:FacilityLocation}) applied to the metric
completion of an edge-weighted graph $G$ belonging to a nontrivial
minor-closed family $\mathcal K$.
The proof of the $k$-median and $k$-means results (for both Euclidean and 
minor-closed metrics), involve the use of Theorem \ref{thm:deletion}
and a more complex analysis.

Throughout this section we consider a solution $\calL$ output by
Algorithm \ref{algo:FacilityLocation} (the ``local'' solution) and
a globally optimal solution $\calG$ of value $\opt$. 
Let $\calF=\calL\cup\calG$. 
Let $r=1/\eps^2$.  Consider the graph $G_{\vor(\calF)}$ defined in
Definition~\ref{def:graph-Voronoi}, and recall that each vertex of $G$
maps to a vertex $\hat v$ in the contracted graph $G_{\vor(\calF)}$.

Since $G$ belongs to $\mathcal K$ and $G_{\vor(\calF)}$ is obtained
from $G$ by contraction, it too belongs to $\mathcal K$ and hence it
has an $r$-division.  Let $H_1,
\ldots H_{\kappa}$ be the regions of this $r$-division.   For $i=1,
\ldots, \kappa$, define $V_i$ and $B_i$ as follows:
\begin{eqnarray*}
V_i & =& \set{v\in \calF \ :\ \hat v \text{ is a vertex of } H_i}\\
B_i & =& \set{v\in \calF \ :\ \hat v \text{ is a boundary vertex of } H_i}
\end{eqnarray*}
That is, $V_i$ is the set of vertices in the union of the local
solution and the global solution
 that map via
contraction to vertices of the region $H_i$, and $B_i$ is the set of
vertices in the union that map to {\em boundary} vertices of $H_i$.

Let $\calG'= \calG \cup \bigcup_{i=1}^\kappa B_i$.

Fix a region $H_i$ of the $r$-division of $G_{\vor(\calF)}$. We define
$\calL_i = \calL \cap V_i$ and $\calG'_i = \calG' \cap V_i$.
  We consider the mixed solution $\calM^i$ defined as follows:
  $$\calM^i = (\calL \setminus \calL_i) \cup \calG'_i.$$
  \begin{lem}\label{lem:localneighbFL}
  $|\calM^i \setminus \local| + |\local \setminus \calM^i| \leq 1/\eps^{2}$.
\end{lem}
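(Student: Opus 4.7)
The plan is to observe that $\calM^i$ and $\calL$ can only disagree on vertices whose contracted images lie in the region $H_i$, so the symmetric difference is contained in $V_i$, and then to bound $|V_i|$ using property~\ref{def:rdivisionc} of the $r$-division.

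First I would unpack the symmetric difference explicitly. Since $\calM^i = (\calL \setminus \calL_i) \cup \calG'_i$, the elements of $\calM^i \setminus \calL$ all come from $\calG'_i \setminus \calL$, and $\calG'_i \subseteq V_i$ by definition. Conversely, $\calL \setminus \calM^i = \calL \setminus \bigl((\calL \setminus \calL_i) \cup \calG'_i\bigr) \subseteq \calL_i$, and $\calL_i = \calL \cap V_i \subseteq V_i$. Hence
$$
|\calM^i \setminus \calL| + |\calL \setminus \calM^i| = |\calG'_i \setminus \calL| + |\calL_i \setminus \calG'_i| \le |V_i|,
$$
since $\calG'_i \setminus \calL$ and $\calL_i \setminus \calG'_i$ are disjoint subsets of $V_i$ (the first is disjoint from $\calL$, the second lies in $\calL$).

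Second, I would bound $|V_i|$. The graph $G_{\vor(\calF)}$ is obtained from $G$ by contracting each nonempty Voronoi cell $G[V_\calF(v)]$ (for $v \in \calF$) to a single vertex $\hat v$. Since the Voronoi cells partition $V(G)$, the map $v \mapsto \hat v$ is a bijection from $\calF$ onto $V(G_{\vor(\calF)})$. Under this bijection, $V_i = \{v \in \calF : \hat v \in V(H_i)\}$ corresponds exactly to $V(H_i)$, so $|V_i| = |V(H_i)|$. By property~\ref{def:rdivisionc} of Definition~\ref{def:rdivision}, $|V(H_i)| \le r = 1/\eps^2$. Combining the two steps,
$$
|\calM^i \setminus \calL| + |\calL \setminus \calM^i| \le |V_i| \le 1/\eps^2,
$$
as desired.

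There is no real obstacle here beyond bookkeeping: the whole point of defining $\calM^i$ by swapping only within $V_i$ and using an $r$-division with $r = 1/\eps^2$ is precisely to make this counting trivial. The step that deserves a moment of care is verifying the bijection between $\calF$ and $V(G_{\vor(\calF)})$, which is used to pass from a bound on $|V(H_i)|$ (an object in the contracted graph) to a bound on $|V_i|$ (an object in $\calF$).
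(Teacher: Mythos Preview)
Your proof is correct and follows essentially the same approach as the paper: bound the symmetric difference by $|V_i|$ (the paper writes this as $|(\calL\cup\calG')\cap V_i|$, which equals $|V_i|$ since $V_i\subseteq\calF\subseteq\calL\cup\calG'$), then use injectivity of $v\mapsto\hat v$ on $\calF$ together with property~\ref{def:rdivisionc} of the $r$-division to get $|V_i|\le r=1/\eps^2$. Your write-up is slightly more explicit about the set-theoretic bookkeeping and the bijection, but the argument is the same.
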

\ifshort
\else
\begin{proof}
  To obtain $\calM^i$ from $\calL$, one can
remove the vertices in $\calL\cap V_i$ that are not in
  $\cal G'$, and
add the vertices in ${\cal G'}\cap V_i$ that are not in
  $\calL$.
Thus the size of the symmetric difference is at most $|(\calL \cup
\calG') \cap V_i|$.  Since the vertices of $\calL\cup \calG'$ are
centers of Voronoi cells, these vertices all map to different vertices
in the contracted graph $G_{\vor(\calF)}$.  Therefore $|(\calL \cup
\calG') \cap V_i|$ is at most the number of vertices in region $H_i$,
which is at most $r=1/\eps^2$.
\end{proof}
\fi


\begin{figure} \centering
\includegraphics[scale=.7]{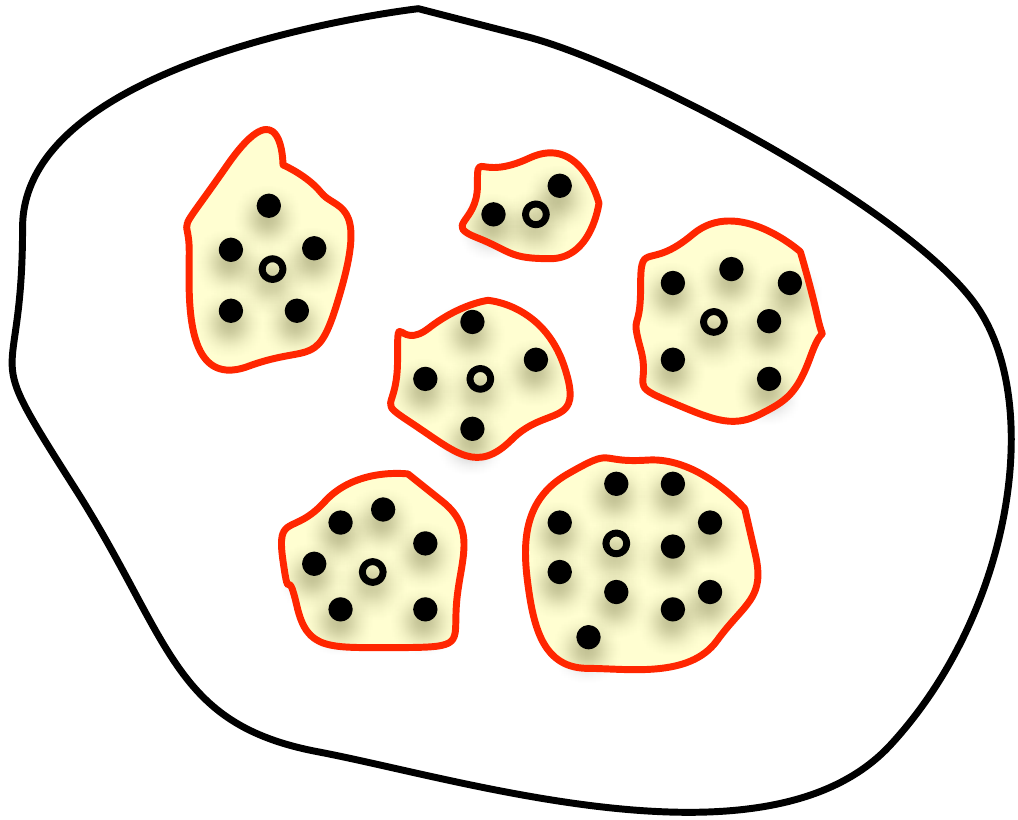}
\caption{The diagram shows a region of the weak $r$-division.  The
  blobs represent vertices of the region.  Each blob is obtained by
  coalescing a set of vertices of the input graph.  These vertices are
indicated by circles.  The unfilled circled represent the centers of
the Voronoi cells.}
\end{figure}

\begin{lem} \label{lem:client} 
Let $c$ be a vertex of $G$ and $H_i$ a region. Then:
$$m_c^i-l_c\leq\begin{cases}
g_c-\ell_c& \mbox{if $\hat c$ is an internal vertex of $H_i$}\\
0 & \mbox{otherwise.} \\
\end{cases}$$
\end{lem}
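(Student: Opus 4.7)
The plan is a case analysis on where $\hat c$ sits inside the contracted graph $G_{\vor(\calF)}$: either $\hat c$ is an internal vertex of the region $H_i$, or it is not (meaning either a boundary vertex of $H_i$ or outside $H_i$ altogether). In each case I will exhibit an explicit facility of $\calM^i = (\calL \setminus \calL_i) \cup \calG'_i$ whose distance to $c$ witnesses the claimed bound on $m_c^i$. The workhorse in both cases is Lemma~\ref{lem:sep-graphs}, which lets me replace any facility whose contracted image lies on the ``wrong side'' of the region's boundary by a boundary vertex at no larger distance. The key structural observation is that $B_i \subseteq \calG'$ by construction of $\calG'$, so every element of $\calF$ whose contracted image is a boundary vertex of $H_i$ is automatically available in $\calG'_i \subseteq \calM^i$.

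For the internal case, my witness will be a closest global facility $f \in \calG$ to $c$, so $\dist(c,f) = g_c$. If $\hat f \in H_i$ then $f \in \calG \cap V_i \subseteq \calG'_i \subseteq \calM^i$ and I get $m_c^i \le g_c$ immediately. Otherwise $\hat f \notin H_i$ and Lemma~\ref{lem:sep-graphs} applies with $\hat c$ internal and $\hat f$ not in $H_i$, producing some $x \in \calF$ with $\hat x$ a boundary vertex of $H_i$ and $\dist(c,x) \le \dist(c,f) = g_c$; since $x \in B_i \subseteq \calG'_i \subseteq \calM^i$ the same bound holds. Subtracting $\ell_c$ from both sides gives the first case of the stated inequality.

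For the non-internal case I will instead use a closest local facility $\ell^* \in \calL$, where $\dist(c, \ell^*) = \ell_c$. If $\ell^* \notin \calL_i$ then $\ell^* \in \calL \setminus \calL_i \subseteq \calM^i$ directly, so $m_c^i \le \ell_c$ and $m_c^i - \ell_c \le 0$. Otherwise $\ell^* \in \calL_i$, meaning $\hat{\ell^*} \in H_i$, and together with $\hat c$ not internal this again matches the hypothesis of Lemma~\ref{lem:sep-graphs}; the lemma yields a boundary vertex $x \in B_i$ with $\dist(c,x)\le \ell_c$, and since $x \in \calG'_i \subseteq \calM^i$ I conclude $m_c^i \le \ell_c$.

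The only subtle point I expect to have to handle is matching each subcase to the asymmetric hypothesis of Lemma~\ref{lem:sep-graphs}: when $\hat c$ is internal it is the witness $\hat f$ that must lie outside $H_i$, whereas when $\hat c$ is non-internal it is $\hat c$ itself that plays the role of the ``not internal'' vertex, while the witness $\hat{\ell^*}$ lies in $H_i$. Beyond that bookkeeping, nothing about local optimality of $\calL$ or optimality of $\calG$ enters here---the bound is purely a consequence of the $r$-division structure together with how boundary vertices are absorbed into $\calG'$.
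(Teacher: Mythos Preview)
Your proof is correct and follows essentially the same approach as the paper's: a two-case analysis on whether $\hat c$ is internal to $H_i$, in each case exhibiting a facility of $\calM^i$ witnessing the bound, and invoking Lemma~\ref{lem:sep-graphs} to replace a facility on the wrong side of the region by a boundary vertex in $B_i\subseteq\calG'_i$. The only cosmetic difference is that in the internal case you start from a closest facility in $\calG$ whereas the paper starts from a closest facility in $\calG'$ (yielding the marginally stronger $m_c^i\le g'_c$, which is not needed for the stated lemma); the structure of the argument is otherwise identical.
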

\ifshort
\else
\begin{proof} 
First suppose $\hat c$ is an internal vertex
of $H_i$, and let $v$ be the facility in $\calM^i$ closest to $c$.  If
$v$ is in $V_i$ then it is in $\calG_i$, so $m_c^i = g'_i$.  Suppose
$v$ is not in $V_i$.  Then by
Lemma~\ref{lem:sep-graphs} there is a vertex $x\in \calF$ such that $\hat x$
is a boundary vertex of $H_i$ and $\dist(c,x)\leq \dist(c,v)$.  As
before, $x$ is in $\calG'_i$ so $m_c^i \leq g'_c$.  Since $g'_c \leq g_c$, 
this proves the claimed upper bound. 

Now, suppose $\hat c$ is not an
internal vertex of $H_i$ and let $v$ be the facility in $\calL$ closest
to $c$.   If $v$ is not in $V_i$ then it is in the mixed solution
$\calM^i$, so $m_c^i=\ell_c$.  Suppose $v$ is in $V_i$.  Then by
Lemma~\ref{lem:sep-graphs} there is a vertex $x\in \calF$ such that $\hat x$
is a boundary vertex of $H_i$ and $\dist(c,x)\leq \dist(c,v)$.  Since
$x$ is in $\calF$ and $\hat x$ is a boundary vertex of $H_i$, we know
$x$ is in $\calG'\cap V_i$, which is $\calG'_i$.  Therefore $x$ is in
$\calM^i$.  Since $\dist(c,x)\leq \dist(c,v)$, we obtain $m_c^i\leq
\ell_c$, which proves the claimed upper bound. 
\end{proof}
\fi

\begin{lem} \label{lem:size-of-augmented-global-solution}
$$\sum_{i=1}^\kappa |\calG'_i| \leq |\calG| + c_2\eps
(|\cal G|+|\calL|)$$
\end{lem}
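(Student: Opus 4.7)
The plan is to expand $\calG'_i$ using $\calG'=\calG\cup\bigcup_j B_j$, so that $\calG'_i=(\calG\cap V_i)\cup(V_i\cap\bigcup_j B_j)$. A preliminary observation I would make is that $V_i\cap\bigcup_j B_j=B_i$: if $v\in V_i$ and $v\in B_j$, then $\hat v$ is a vertex of $H_i$ and a boundary vertex of $H_j$. When $j=i$ this is immediate, and when $j\ne i$ the vertex $\hat v$ lies in at least two regions, hence is a boundary vertex of each, giving $v\in B_i$. Thus $\calG'_i=(\calG\cap V_i)\cup B_i$, and since $B_i\subseteq V_i$ I can use disjoint-union counting to write $|\calG'_i|=|\calG\cap V_i|+|B_i\setminus\calG|$.

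Summing over $i$ and splitting, the second piece is immediately bounded by $\sum_i|B_i|$. The main bookkeeping step is to rewrite $\sum_i|\calG\cap V_i|$ by partitioning $\calG$ according to whether $\hat v$ is internal to (its unique) region or sits on some region boundary. A vertex $v\in\calG$ with $\hat v$ internal contributes $1$ to the sum, while a vertex $v\in\calG$ with $\hat v$ on a boundary lies in $V_i$ exactly for those $i$ with $v\in B_i$. Writing $\calG_{\text{int}}$ for the internal vertices of $\calG$, this yields
\[
\sum_i |\calG\cap V_i| \;=\; |\calG_{\text{int}}| + \sum_i |\calG\cap B_i|.
\]

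Plugging back gives $\sum_i|\calG'_i|=|\calG_{\text{int}}|+\sum_i|\calG\cap B_i|+\sum_i|B_i\setminus\calG|=|\calG_{\text{int}}|+\sum_i|B_i|$, which telescopes cleanly because every element of $B_i$ is counted once in either $\calG\cap B_i$ or $B_i\setminus\calG$. From here I would finish by bounding $|\calG_{\text{int}}|\le|\calG|$ and applying property~\ref{def:rdivisiond} of Definition~\ref{def:rdivision} to the $r$-division of $G_{\vor(\calF)}$: $\sum_i|B_i|\le c_2|V(G_{\vor(\calF)})|/r^{1/2}$. Since Voronoi contraction produces exactly one vertex per element of $\calF$, we have $|V(G_{\vor(\calF)})|=|\calF|\le|\calG|+|\calL|$, and the choice $r=1/\eps^2$ (so $r^{1/2}=1/\eps$) yields the stated inequality.

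I do not expect a real obstacle; the only point that needs care is the three-way case analysis showing $V_i\cap\bigcup_j B_j=B_i$, which is what lets the boundary contributions from $\calG$ and from $\bigcup_j B_j$ combine so neatly into a single $\sum_i|B_i|$ term rather than being double-counted.
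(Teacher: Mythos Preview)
Your proof is correct and follows essentially the same approach as the paper: separate the contribution of internal vertices (at most $|\calG|$) from that of boundary vertices (at most $\sum_i|B_i|$), then bound the latter using Property~\ref{def:rdivisiond} of the $r$-division together with $|V(G_{\vor(\calF)})|=|\calF|\le|\calG|+|\calL|$ and $r=1/\eps^2$. Your bookkeeping via the identity $V_i\cap\bigcup_j B_j=B_i$ is a bit more explicit than the paper's and even yields the exact equality $\sum_i|\calG'_i|=|\calG_{\text{int}}|+\sum_i|B_i|$ rather than just the inequality, but the argument is the same in substance.
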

\ifshort
\else
\begin{proof} %
Let $v$ be a vertex of $\calG'$.  For $i=1, \ldots, \kappa$, if $\hat v$ is an internal vertex of
the region $H_i$ then $v$ contributes only one towards the left-hand
side.  If $\hat v$ is a boundary vertex of $H_i$ then $v\in B_i$.
Therefore 
$$\sum_{i=1}^\kappa |\calG'\cap V_i| \leq  |\calG| + \sum_{i=1}^\kappa
|B_i|.$$
To finish the proof, we bound the sum in the right-hand side.
  Each vertex in $\calF$ is the center of one Voronoi
  cell, so $G_{\vor(\calF)}$ has $|\calF|$ vertices.  For each region
  $H_i$, there is one vertex in $B_i$ that corresponds to each
  boundary vertex of $H_i$, so $\sum_{i=1}^\kappa
|B_i|$ is the sum over all regions of the number of boundary vertices
of that region, which, 
by
  Property~\ref{def:rdivisiond} of $r$-divisions, is at most $c_2
  |\calF|/r^{1/2}$, which, by choice of $r$, is at most $c_2 \eps
  |{\cal F}|$, which in turn is at most $c_2 \eps(|{\cal G}|+|\calL|)$.

\end{proof}
\fi

\begin{proof}(Proof of Theorem~\ref{thm:FC})
Lemma~\ref{lem:localneighbFL} and the stopping condition of
 Algorithm~\ref{algo:FacilityLocation} imply the following: 

  \begin{equation}
    \label{eq:local_hypo}
-{1\over n}  \cost(\calL)\leq     \cost({\calM}^i) -\cost(\calL).
  \end{equation}

  We now decompose the right-hand side. For a client $c$, we denote by $\ell_c$, $g'_c$ and $m_c^i$
  the distance from the client $c$ to the closest facilities in $\calL$, $\calG'$ and $\calM^i$ respectively.   This gives
\begin{equation} \label{eq:local_hypo-decomposed}
   \cost(\calM^i) -\cost(\calL)= (|\calG'_i| - |\calL_i| ) \cdot f 
  + \sum_{c} ( m_c^i - \ell_c  ).
\end{equation}
Using Lemma~\ref{lem:client} and summing over $c$ shows that 
\begin{equation}\label{eq:clientsum}
\sum_{c} ( m_c^i - \ell_c  ) \leq \sum_{c: \hat
  c\in \mathcal I(H_i)}  ( g_c - \ell_c  ).
  \end{equation}
Combining  Inequalities~(\ref{eq:local_hypo}),~(\ref{eq:local_hypo-decomposed}) and~(\ref{eq:clientsum}), we obtain
\begin{equation}
-{1\over n}  \cost(\calL) \leq (|\calG'_i| - |\calL_i| ) \cdot f 
  + \sum_{c: \hat c\in \mathcal I(H_i)}  ( g'_c - \ell_c  )
\end{equation}
We next sum this inequality over all $\kappa$ regions of the weak $r$-division and use Lemma~\ref{lem:size-of-augmented-global-solution}.
\begin{eqnarray*}
-{\kappa \over n}  \cost(\calL)
&\leq & 
 \left(\sum_{i=1}^\kappa|\calG'_i| - \sum_{i=1}^\kappa |\calL_i| \right) \cdot f + \sum_{i=1}^\kappa \sum_{c: \hat c\in \mathcal I(H_i)} (g_c-\ell_c)\\
& \leq & 
( |\calG| + \left(c_2\eps (|\calG|+|\calL|) -|\calL|\right)\cdot f + \sum_c (g_c-\ell_c)\\
& = & 
\left((1+ c_2\eps)|\calG| - (1-c_2\eps)|\calL|\right)\cdot f + \sum_c (g_c-\ell_c)\\
& \leq & 
(1+ c_2\eps) \cost(\calG) - (1-c_2\eps) \cost(\calL)
\end{eqnarray*}
Since $\kappa \leq c_1 |\calF|/r\leq c_1\eps^2 n$, we obtain
$$-c_1\eps^2 \cost(\calL) \leq (1+c_2 \eps) \cost(\calG) - (1-c_2\eps) \cost(\calL)$$
so
$$\cost(\calL) \leq (1-c_2\eps - c_1 \eps^2)^{-1} (1+ c_2 \eps)\cost(\calG)$$
This completes the proof of Theorem~\ref{thm:FC}.
\end{proof}

\section{Clusters in minor-closed graphs: Proof of Theorem \ref{thm:Clustering}}


\label{sec:AlgConseq}

We prove Theorem \ref{thm:Clustering}. The proof is similar for graphs and
for points lying in $R^d$.  It builds on the notions of isolation and
1-1 isolation introduced in Section~\ref{sec:isolation}.  

We consider a solution $\local$ output by Algorithm \ref{algo:Clustering} and an optimal solution $\globalS$.
Let $\bar \calF$ be the set of facilities of $\local$ and $\globalS$ that are not in a 1-1 isolated region
and let $\bar k = |\bar \calF|$.

We apply Theorem \ref{thm:deletion} to $\globalS$ and $\local$ in order to find a set $S_0 \subset \globalS$ 
such that $\cost(\globalS - S_0) \le (1+2^{3p+1} \eps) \cost(\globalS) +  2^{3p+1} \eps\,\cost(\local)$ and 
$|S_0|\ge \eps^3\bar k/6$.
Let $\globalS_1=\globalS \setminus S_0$.
We define a subgraph $G' = (V',E')$ of $G$ as follows: For each isolated
  region $(\local_0,f_0)$, for each client $c\in \Vloc(\local_0)\cap
  \Vglob(f_0)$, designate $c$ as a {\em good} client, and 
include in $E'$ the edges of a
  c-to-$L_0$ shortest path and a $c$-to-$f_0$ shortest path.
For every nonisolated facility $\ell\in \local$ and every nonisolated
facility $f\in \globalS$,  for every client $c \in \Vloc(\ell)\cap
\Vglob(f)$, also designate $c$ as a {\em good} client, and include in
$E'$ the edges of a  c-to-$\ell$ shortest path and a $c$-to-$f_0$ shortest path.
Let $\mathcal C_1$ be the set of clients designated as {\em good}.  The
remaining clients are considered {\em bad}.  

Let $\calF = \globalS_1 \cup \local$.
Let $R_1, R_2, \ldots$ be an $r$-division of $G'_{\vor( \calF)}$ where
$r=1/\eps^7$.  Define $\globalSs = \globalS_1 \cup \set{\text{boundary vertices of the $r$-division}}$.

\newcommand{\OmegaSubset}{\widehat{\mathcal G}}

The vertex sets of regions are of course not disjoint---a boundary vertex
is in multiple regions---but it is convenient to represent them by
disjoint sets.  We therefore define a ground set $\Omega= \set{(v, R)\ : v \text{ a vertex of }
  G'_{\vor(  \calF)}, R \text{ a region containing } v}$, and, for
each region $R$, we define $\widehat{R}=\set{(v,R)\ :\ v \text{ a
    vertex of } R}$. Now $\widehat{R_1}, \widehat{R_2}, \ldots$ form a
partition of $\Omega$.  To allow us to go from an element of $\Omega$
back to a vertex, if $x=(v, R)$ we define $\widecheck x= v$.
Finally, define $\OmegaSubset=\set{(v,R)\in \Omega :\ v\in \globalSs}$.





Let $\bar \calF$ be the set of facilities of $local$ and $\globalS$
that are not in 1-1 isolated regions.

\begin{lem}
  \label{lem:boundSs}
  $|\OmegaSubset|\leq |\globalS_1| + c_2  \eps^{3.5} |\bar \calF|$,
  where $c_2$ is the constant in the definition of $r$-division.
\end{lem}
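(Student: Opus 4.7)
The plan is to decompose $|\OmegaSubset|$ according to whether each vertex is internal to or on the boundary of some region of the $r$-division, and then combine Property~\ref{def:rdivisiond} of the weak $r$-division with the structural properties of 1-1 isolated pairs.

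First I would write
$$|\OmegaSubset|=\sum_{v\in\globalSs}|\{R_i : v\in R_i\}|.$$
Because $\globalSs=\globalS_1\cup\{\text{boundary vertices of the }r\text{-division}\}$, and because every internal vertex lies in a unique region, the sum splits as
$$|\OmegaSubset|\le |\globalS_1|+\sum_i|\partial H_i|,$$
where the first term bounds the contribution of vertices of $\globalS_1$ that are internal to their region (each contributing $1$) and the second counts boundary--region incidences. Property~\ref{def:rdivisiond} then gives $\sum_i|\partial H_i|\le c_2\cdot|V(G'_{\vor(\calF)})|/r^{1/2}$, and with $r=1/\eps^7$ the factor becomes $\eps^{3.5}$.

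The naive bound $|V(G'_{\vor(\calF)})|\le|\calF|$ is too weak, since $|\calF|$ can be as large as $2k$ while $|\bar\calF|$ may be much smaller. To remedy this, I would apply Theorem~\ref{thm:r-division} not to $G'_{\vor(\calF)}$ directly but to the minor obtained by contracting, for each 1-1 isolated pair $(f_0,\ell_0)$, the edge(s) between the supervertices $\widehat{f_0}$ and $\widehat{\ell_0}$. Such edges exist in $G'_{\vor(\calF)}$ because every good client $c\in V_\local(\ell_0)\cap V_\globalS(f_0)$ contributes shortest paths $c\to\ell_0$ and $c\to f_0$ to $G'$, which in the contracted graph induce a connection between $\widehat{f_0}$ and $\widehat{\ell_0}$. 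Since $\mathcal K$ is minor-closed, the contracted graph still lies in $\mathcal K$, so Theorem~\ref{thm:r-division} yields an $r$-division whose boundary--region incidences are bounded by $c_2|\bar\calF|/r^{1/2}$. Lifting this $r$-division back to $G'_{\vor(\calF)}$ by expanding each contracted node into its two constituent supervertices within the same region keeps both supervertices internal to that region, so no new boundary vertices are introduced. Combining with the first step yields $|\OmegaSubset|\le|\globalS_1|+c_2\eps^{3.5}|\bar\calF|$.

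The main obstacle is the final step: verifying rigorously that the lift of the $r$-division of the contracted minor gives a valid $r$-division of $G'_{\vor(\calF)}$ with the boundary count unchanged. This hinges on the specific definition of $G'$ (only shortest-path edges of good clients are included), which ensures that 1-1 isolated pair supervertices are sufficiently locally connected that they can be treated as a single atomic vertex for the purposes of the separator theorem underlying Theorem~\ref{thm:r-division}.
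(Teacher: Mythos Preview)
Your decomposition of $|\OmegaSubset|$ into an internal part (bounded by $|\globalS_1|$) and a boundary part is correct, and you rightly identify that the naive bound $c_2|\calF|/r^{1/2}$ is too weak. However, your proposed fix has a genuine gap.

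Contracting each 1-1 isolated pair $(\widehat{f_0},\widehat{\ell_0})$ into a single vertex reduces the vertex count of $G'_{\vor(\calF)}$ from $|\calF|$ to $|\calF|-m$, where $m$ is the number of 1-1 isolated pairs. But the contracted vertices are still present in the minor, so Theorem~\ref{thm:r-division} gives a boundary bound of $c_2(|\calF|-m)/r^{1/2}$, not $c_2|\bar\calF|/r^{1/2}$. Since $|\bar\calF|\approx|\calF|-2m$, the gap is of order $m$, which can be arbitrarily large relative to $|\bar\calF|$ (consider the case where almost all facilities are 1-1 isolated). Your claimed bound $c_2|\bar\calF|/r^{1/2}$ therefore does not follow from the contraction argument as stated.

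The paper takes a simpler route that avoids contraction and lifting entirely. The key observation is that each 1-1 isolated pair $(f_0,\ell_0)$ forms a connected component of size exactly~2 in $G'_{\vor(\calF)}$, disconnected from everything else. This holds because $G'$ includes only shortest-path edges for \emph{good} clients, and for a 1-1 isolated pair the good clients lie in $\Vloc(\ell_0)\cap\Vglob(f_0)$; their shortest paths to $\ell_0$ and $f_0$ stay within the Voronoi cells $V_\calF(\ell_0)\cup V_\calF(f_0)$, so after contraction the supervertices $\widehat{f_0},\widehat{\ell_0}$ acquire no edges to the rest of the graph. A size-2 connected component contributes zero boundary vertices to any $r$-division (its single edge lies in one region, and neither endpoint has an edge leaving that region). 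Hence the boundary bound of Property~\ref{def:rdivisiond} effectively applies only to the vertices of $G'_{\vor(\calF)}$ that are \emph{not} in 1-1 isolated pairs, and there are at most $|\bar\calF|$ of those. No lifting argument is needed.
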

\ifshort
\else
\begin{proof}
  Consider the $r$-division.  Each 1-1 isolated region results in a connected component of size 2 in $G'_{\vor( \calF)}$ and so no
  boundary vertices arise from such connected components.
By the definition of $r$-division, the sum over regions of boundary vertices is at most $c_2 \cdot |\bar n_0| / r^{1/2}$, where $\bar n_0$  is the total number
  of elements of $\globalS_1$ and $\local$ that are not in 1-1
  isolated regions.
Since $r = 1/\eps^7$, 
  we have that $|\OmegaSubset| \le |\globalS_1| + c_2 \cdot \eps^4 |\bar \calF|$.
\end{proof}
\fi

\begin{lem}
  \label{lem:globstruct-cardinality}
 $|\OmegaSubset| \le k$. \label{prop:feasible}
\end{lem}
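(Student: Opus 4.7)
The plan is to substitute the cardinality guarantee from the application of Theorem~\ref{thm:deletion} into the bound supplied by Lemma~\ref{lem:boundSs}, and then check that the resulting expression is at most $k$ for $\eps$ sufficiently small. The argument is essentially a two-line calculation once the two ingredients are in hand.

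First I would invoke Lemma~\ref{lem:boundSs} to obtain
$$|\OmegaSubset|\le |\globalS_1| + c_2\,\eps^{3.5}\,|\bar\calF|.$$
Since $\globalS_1 = \globalS\setminus S_0$ and $|\globalS|=k$, this rewrites as $|\OmegaSubset|\le k - |S_0| + c_2\,\eps^{3.5}\,\bar k$, where $\bar k = |\bar\calF|$.

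Next I would substitute the lower bound $|S_0|\ge \eps^3\bar k/6$ coming from the invocation of Theorem~\ref{thm:deletion} at the start of this section, yielding
$$|\OmegaSubset|\le k + \bar k\Bigl(c_2\,\eps^{3.5} - \tfrac{\eps^3}{6}\Bigr).$$
For $\eps$ below the family-dependent threshold $1/(6c_2)^2$, the parenthesized quantity is non-positive, so $|\OmegaSubset|\le k$ as desired.

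The only subtlety is arranging that $\eps$ actually lies below the threshold $1/(6c_2)^2$. Since $c_2$ is a constant depending only on the minor-closed family $\mathcal K$, and since the $(1+\eps)$-approximation conclusion of Theorem~\ref{thm:Clustering} is vacuous for $\eps$ above any absolute constant, we may restrict without loss of generality to $\eps$ satisfying this threshold. Thus the main step is purely a book-keeping one, and no genuine obstacle is expected; the work was really done in establishing Lemma~\ref{lem:boundSs} and Theorem~\ref{thm:deletion}, which were carefully designed so that $|S_0|$ scales like $\eps^3\bar k$ and the extra boundary contribution scales like the strictly smaller $\eps^{3.5}\bar k$.
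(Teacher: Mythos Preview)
Your proposal is correct and follows essentially the same argument as the paper: invoke Lemma~\ref{lem:boundSs}, substitute the lower bound on $|S_0|$ from Theorem~\ref{thm:deletion}, and observe that the $\eps^{3.5}$ boundary term is dominated by the $\eps^{3}$ deletion term for sufficiently small $\eps$. The paper's version differs only cosmetically (it records the constant as $\eps^3\bar k/12$ rather than $\eps^3\bar k/6$ and simply writes ``for $\eps$ small enough'' instead of your explicit threshold).
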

\ifshort
\else
\begin{proof}
    
  By Theorem \ref{thm:deletion}, we have that $|\globalS_1| \le k - \eps^3 \bar k/12$.
  By Lemma \ref{lem:boundSs} we thus have
  $$|\OmegaSubset| \le |\globalS_1| + c_2 \eps^{3.5} \bar k \le  k - \eps^3 \bar k/12 + c_2 \eps^{3.5} \bar k \le k,$$
  for $\eps$ small enough.  
\end{proof}
\fi 

Throughout the rest of the proof, we will bound the cost of $\local$ by the cost of $\globalSs$.
We now slightly abuse notations in the following way : each facility $\ell$ of $\local$ that belongs to an isolated region
and that is a boundary vertex is now in $\globalSs$. We say that this facility is isolated.

The following lemma first appears in~\cite{CM15}.

\begin{lem}[Balanced Partitioning]
  \label{lem:kmed_partition}
  Let $\mathcal{S} = \{S_1,...,S_p\}$ and $\set{A,B}$  be partitions of some ground
  set.  Suppose $|A|\geq |B|$ and, for $=1,\ldots, p$, $1/(2\eps^2)
  \leq |S_i| \leq 1/\eps^2$.  

  There exists a partition that is a coarsening of $\mathcal S$
  satisfying the two following properties. For any part $C$ of the
  coarser partition,
  \begin{itemize}
  \item \textbf{Small Cardinality}:  $C$ is the union of
    $\mathcal{O}(1/\eps^5)$ parts of $\mathcal S$.
  \item \textbf{Balanced}: $|C\cap A| \geq |C \cap B|$.
  \end{itemize}
\end{lem}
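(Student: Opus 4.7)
The plan is to reduce the lemma to a one-dimensional balancing question. For each part $S_i$, introduce the signed imbalance $d_i := |S_i \cap A| - |S_i \cap B|$. The size bound gives $|d_i| \leq |S_i| \leq 1/\eps^2$, and the hypothesis $|A| \geq |B|$ yields $\sum_{i=1}^{p} d_i \geq 0$. A block $C$ of a coarsening of $\mathcal{S}$ is balanced in the sense of the lemma precisely when $\sum_{S_i \in C} d_i \geq 0$, so the task becomes: partition the index set $\{1,\dots,p\}$ into blocks of cardinality $\mathcal{O}(1/\eps^5)$, each having non-negative $d$-sum.

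The construction splits indices by the sign of $d_i$ into a surplus class $P = \{i : d_i \geq 0\}$ and a deficit class $N = \{i : d_i < 0\}$. Indices in $P$ trivially satisfy the balance condition on their own. For each $j \in N$, setting $m := |d_j| \leq 1/\eps^2$, I open a fresh block around $S_j$ and greedily draw so-far-unused surplus parts (from a global pool) until the block's $d$-sum reaches a non-negative value. Since each positive $d_i$ is an integer $\geq 1$ and $m \leq 1/\eps^2$, at most $1/\eps^2$ surplus parts are consumed per deficit, giving a block cardinality of at most $1 + 1/\eps^2$, well within the allotted $\mathcal{O}(1/\eps^5)$. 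Any leftover surplus parts form singleton blocks.

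The main obstacle is arguing that the greedy never exhausts the surplus pool prematurely: $\sum_i d_i \geq 0$ guarantees only that the \emph{total} surplus exceeds the \emph{total} deficit, while each of my blocks may pull slightly more surplus than the current deficit strictly requires (an overshoot bounded by $\max_i d_i \leq 1/\eps^2$ per block). To handle this, I would process deficits in decreasing order of $|d_j|$ and, whenever possible, cover a deficit using a single large surplus part rather than an aggregation of small ones, keeping the per-block overshoot small; a straightforward invariant ``remaining surplus sum $\geq$ remaining deficit sum'' is then preserved throughout. The slack between my actual block size $\mathcal{O}(1/\eps^2)$ and the statement's $\mathcal{O}(1/\eps^5)$ comfortably absorbs any residual waste, so feasibility follows by a direct inductive accounting on the number of deficits processed so far.
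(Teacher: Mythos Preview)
The paper does not prove this lemma; it only states that it first appears in \cite{CM15}. So there is no in-paper proof to compare against, and your attempt has to stand on its own.

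Your reduction to the signed integers $d_i = |S_i\cap A|-|S_i\cap B|$ and the reformulation ``partition the index set into bounded groups with non-negative $d$-sum'' are exactly right. The gap is in the greedy step. You correctly flag overshoot as the obstacle, but the proposed patch does not close it: the invariant ``remaining surplus sum $\geq$ remaining deficit sum'' is \emph{not} preserved. After covering a deficit of magnitude $m$ with surplus parts summing to $s\geq m$, the surplus pool drops by $s$ while the deficit pool drops by only $m$, so the invariant degrades by exactly the overshoot $s-m$, and nothing bounds the accumulated overshoot by the initial slack $\sum_i d_i$. Concretely, take one surplus part with $d=1/\eps^2$ and $1/\eps^2$ deficit parts each with $d=-1$ (this is realisable with $|S_i|\in[1/(2\eps^2),1/\eps^2]$ and $|A|=|B|$, so the hypothesis holds with equality). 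Your greedy opens a block around the first $-1$, is forced to spend the unique $+1/\eps^2$ surplus to close it, and is then stranded with $1/\eps^2-1$ uncovered deficits and an empty surplus pool. Processing deficits by decreasing $|d_j|$ does nothing here (all $|d_j|=1$), nor does ``prefer a single large surplus'' (there is only one). The $\mathcal O(1/\eps^5)$ budget on block size is also irrelevant: the failure mode is surplus exhaustion, not block-size overflow.

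The cure is to abandon the one-deficit-per-block scheme and allow a block to absorb \emph{several} deficits together with the surpluses that jointly cover them (in the example above, the unique correct block is all $1/\eps^2+1$ parts at once). One then has to control simultaneously that each block's $d$-sum is non-negative \emph{and} that each block has bounded cardinality; achieving both at once is where the real content of the lemma lies, and it is precisely this joint control that your sketch does not supply.
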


We now apply Lemma \ref{lem:kmed_partition} to the partition
$\widehat{R_1}, \widehat{R_2}, \ldots$ with $A=\set{(v,R)\in \Omega\ :\
  v\in \local}$ and $B=\OmegaSubset$. We refer to the parts of the
resulting coarse partition as {\em super-regions}.  Each super-region
$\calR$ naturally corresponds to a subgraph of $G'_{\vor( \calF)}$,
the subgraph induced by $\set{v\ :\ (v,R)\in \calR}$,
and we sometimes use $\calR$ to refer to this subgraph.

For a super-region $\calR$, let $\local(\calR)$ (resp. $\globalSs(\calR)$) be the set of facilities
of $\local$ (resp. $\globalSs$) in the super-region $\calR$, i.e.: the set $\set{\ell \mid \ell \in \local \text{ and } (\ell,R) \in \Omega}$
(resp. $\set{f \mid f \in \globalSs \text{ and } (f,R) \in \Omega}$).
We consider the mixed solution $$\calM_{\calR} =( \local \setminus \local(\calR)) \cup \globalSs(\calR).$$

\begin{lem}\label{lem:localneighb}
  $|\calM_{\mathcal{R}} \setminus \local| + |\local \setminus \calM_{\mathcal{R}}| = O(1/\eps^{12})$ and $|\calM_{\mathcal{R}}| \le k$.
\end{lem}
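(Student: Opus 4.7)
The plan is to establish both halves of the lemma by controlling the size of the super-region $\calR$ produced by Lemma~\ref{lem:kmed_partition} and by exploiting its Balanced property.

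For the symmetric-difference bound, I would first observe that since $\calM_\calR = (\local \setminus \local(\calR)) \cup \globalSs(\calR)$, the symmetric difference $\calM_\calR \bigtriangleup \local$ is contained in $\local(\calR) \cup \globalSs(\calR)$. Every facility in $\local \cup \globalSs$ is the center of a distinct Voronoi cell and thus corresponds to a distinct vertex of the contracted graph $G'_{\vor(\calF)}$, so it suffices to bound the number of vertices that appear in $\calR$. By the Small Cardinality property of Lemma~\ref{lem:kmed_partition}, $\calR$ is the union of $O(1/\eps^5)$ regions of the $r$-division, and by Property~\ref{def:rdivisionc} each such region contains at most $r=1/\eps^7$ vertices. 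Multiplying these bounds yields $|\calM_\calR \bigtriangleup \local| = O(1/\eps^{12})$, as claimed.

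For the inequality $|\calM_\calR| \le k$, I would start from the identity
$$|\calM_\calR| \;=\; |\local| - |\local(\calR)| + |\globalSs(\calR)|,$$
which holds because $\globalSs(\calR) \cap \local \subseteq \local(\calR)$. Since Algorithm~\ref{algo:Clustering} maintains $|\local| \le k$, it suffices to show $|\globalSs(\calR)| \le |\local(\calR)|$. This is where I would invoke the Balanced property of Lemma~\ref{lem:kmed_partition}, applied with $A = \{(v,R)\in\Omega : v\in\local\}$ and $B = \OmegaSubset$. The hypothesis $|A|\ge|B|$ reduces to $|\local|\ge|\OmegaSubset|$, which follows from Lemma~\ref{lem:globstruct-cardinality}. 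The conclusion $|\calR\cap A|\ge|\calR\cap B|$ should then translate into the desired facility-count inequality $|\local(\calR)|\ge|\globalSs(\calR)|$.

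The main technical point I anticipate is precisely this translation from pair counts in $\Omega$ to facility counts in $\local$ and $\globalSs$: a single boundary vertex of the $r$-division can appear in several regions of the super-region, so both $|\calR\cap A|$ and $|\calR\cap B|$ may overcount the underlying facility sets. The paper's convention that every local facility which lies on the boundary of an isolated region is also placed in $\globalSs$ makes the overcounting on the two sides parallel, so the Balanced inequality survives the passage to facility counts. Verifying this symmetry carefully, and confirming that the sized hypothesis of Lemma~\ref{lem:kmed_partition} is applied at the right scale (our regions have up to $r = 1/\eps^7$ vertices rather than $\Theta(1/\eps^2)$), are the only detailed bookkeeping steps required.
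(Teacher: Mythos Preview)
Your argument for the symmetric-difference bound is exactly the paper's: region size $r=1/\eps^7$ times $O(1/\eps^5)$ regions per super-region from the Small Cardinality clause of Lemma~\ref{lem:kmed_partition}. The paper's written proof in fact stops there and omits the inequality $|\calM_\calR|\le k$ entirely, leaving it implicit in the Balanced-Partitioning setup; your approach via the Balanced clause is the intended one.

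One correction on the pairs-to-facilities translation. The point is not that the overcounting is \emph{parallel} on the two sides, but that it is \emph{at least as large} on the $B$ side. By construction $\globalSs$ contains every boundary vertex of the $r$-division, so any vertex with multiplicity greater than one in $\calR$ already lies in $\globalSs$; hence each overcounted pair in $\calR\cap A$ (coming from a boundary vertex in $\local$) is matched by the same overcounted pair in $\calR\cap B$, while $\calR\cap B$ may carry further overcounting from boundary vertices in $\globalSs\setminus\local$. This one-sided domination, together with $|\calR\cap A|\ge|\calR\cap B|$, is what yields $|\local(\calR)|\ge|\globalSs(\calR)|$. (Your reference to ``boundary of an isolated region'' should read ``boundary vertex of the $r$-division''; the paper's abuse-of-notation remark about isolated facilities is not what drives this step.)
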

\ifshort
\else
\begin{proof}
  Each region of the $r$-division contains at most $c_1 /\eps^{7}$
  facilities where $c_1$ is the constant in the definition of
  $r$-divisions. By Lemma \ref{lem:kmed_partition}, each super-region
  is the union of $\mathcal{O}(1/\eps^5)$ regions
\end{proof}
\fi

We now define $g_c$ to be the cost of client $c$ in solution $\globalSs$ and $l_c$ to be the cost
of client $c$ in solution $\local$.
For any client $c \in V_{\globalS}(f_0) \setminus V_{\local}(\local_0)$ for some isolated region $(f_0,\local_0)$, 
define $\globtoloc(c)$ as the cost of assigning $c$ to the facility of $\local_0$ that is the closest to $f_0$.
We let $\eps_1$ be a positive constant that will be chosen later.

\begin{lem}
  \label{lem:globtoloc}
  Consider an isolated region $(f_0, \local_0)$.
  $$\sum\limits_{c \in V_{\globalS}(f_0) \setminus V_{\local}(\local_0)} \globtoloc(c) \le \costPeps 
  \sum\limits_{c \in V_{\globalS}(f_0) \setminus V_{\local}(\local_0)} g_c
  +  \frac{2^p \costPeps \costPinv \eps}{1-\eps} \sum\limits_{c \in V_{\globalS}(f_0)} (g_c + l_c),$$

\end{lem}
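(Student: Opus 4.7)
The plan is to reduce $\globtoloc(c)$ to $g_c$ plus a small error term, then show that the error term can be amortized against the cost of clients in the ``nice'' intersection $V_\globalS(f_0) \cap V_\local(\local_0)$ whose size is guaranteed to be large by the isolation hypothesis. Specifically, let $\ell^{*}$ be the facility of $\local_0$ closest to $f_0$, so that $\globtoloc(c) = \dist(c, \ell^{*})^p$ for every client $c$ under consideration.

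First, I would apply the first bound of Lemma~\ref{lem:TI} with the triple $(a,b,c_0) = (c, \ell^{*}, f_0)$ to obtain
\[
\globtoloc(c) \;=\; \dist(c,\ell^{*})^p \;\le\; (1+\eps_1)^p\Bigl(\dist(c,f_0)^p + \eps_1^{-p}\dist(f_0,\ell^{*})^p\Bigr).
\]
Since $c \in V_\globalS(f_0)$ gives $\dist(c,f_0)^p = g_c$, summing over $c \in V_\globalS(f_0)\setminus V_\local(\local_0)$ yields
\[
\sum_{c \in V_\globalS(f_0)\setminus V_\local(\local_0)} \globtoloc(c) \;\le\; (1+\eps_1)^p \sum_{c \in V_\globalS(f_0)\setminus V_\local(\local_0)} g_c \;+\; (1+\eps_1)^p\eps_1^{-p}\,\bigl|V_\globalS(f_0)\setminus V_\local(\local_0)\bigr|\cdot \dist(f_0,\ell^{*})^p.
\]
The second isolation condition gives $|V_\globalS(f_0)\setminus V_\local(\local_0)| \le \eps\,|V_\globalS(f_0)|$, so it remains to control $|V_\globalS(f_0)| \cdot \dist(f_0,\ell^{*})^p$.

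For this, I would exploit that every client $c \in V_\globalS(f_0) \cap V_\local(\local_0)$ is served in $\local$ by some facility $\ell(c) \in \local_0$, and by the choice of $\ell^{*}$, $\dist(f_0,\ell^{*}) \le \dist(f_0, \ell(c))$. Applying the second (binomial) bound of Lemma~\ref{lem:TI} to the triple $(f_0, \ell(c), c)$ gives
\[
\dist(f_0,\ell^{*})^p \;\le\; \dist(f_0,\ell(c))^p \;\le\; 2^p\bigl(\dist(f_0,c)^p + \dist(c,\ell(c))^p\bigr) \;=\; 2^p(g_c + l_c).
\]
Averaging this inequality over $c \in V_\globalS(f_0) \cap V_\local(\local_0)$ and using $|V_\globalS(f_0)\cap V_\local(\local_0)| \ge (1-\eps)|V_\globalS(f_0)|$ from the isolation condition yields
\[
|V_\globalS(f_0)|\cdot \dist(f_0,\ell^{*})^p \;\le\; \frac{2^p}{1-\eps}\sum_{c \in V_\globalS(f_0)}(g_c + l_c).
\]

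Combining the two bounds, using $|V_\globalS(f_0)\setminus V_\local(\local_0)|\le \eps |V_\globalS(f_0)|$, and dropping the restriction on the first sum (extending to all $c \in V_\globalS(f_0)\setminus V_\local(\local_0)$ costs nothing) gives exactly the claimed inequality. The main subtlety is the bootstrapping step: the quantity $\dist(f_0,\ell^{*})^p$ is not tied to any particular client, so we must charge it to an \emph{average} over the large intersection $V_\globalS(f_0) \cap V_\local(\local_0)$—this is precisely where the isolation hypotheses are essential and where the factor $1/(1-\eps)$ arises.
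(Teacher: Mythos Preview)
Your proposal is correct and follows essentially the same approach as the paper: apply the first bound of Lemma~\ref{lem:TI} to split $\globtoloc(c)$ into $g_c$ plus a term involving $\dist(f_0,\ell^{*})^p$, then control that distance by an averaging argument over clients in the large intersection guaranteed by the isolation hypothesis, using the second bound of Lemma~\ref{lem:TI}. If anything, your averaging step over $V_\globalS(f_0)\cap V_\local(\local_0)$ (using $\dist(f_0,\ell^{*})\le \dist(f_0,\ell(c))$ for each such client) is slightly cleaner than the paper's version, which averages over $V_\local(\ell)\cap V_\globalS(f_0)$ for a single $\ell$ and contains some notational inconsistencies.
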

\ifshort
\else
\begin{proof}
  Consider a client $c \in \Vglob(f_0) \setminus \Vloc(\local_0)$, and let $\ell$ denote the facility of $\local$ that is the closest
  to $f_0$. By Lemma \ref{lem:TI},
  $\dist(c,\ell)^p \leq \costPeps(\dist(c,f_0)^p + \costPinv \dist(\ell,f_0)^p) = \costPeps(g_c+ \costPinv \dist(\ell,f_0)^p)$. 
  Summing over $c\in \Vglob(f_0) \setminus \Vloc(\local_0)$,
  $$    \sum\limits_{c \in V_{\globalS}(f_0) \setminus V_{\local}(\local_0)} \loctoglob(c) \le 
  \costPeps\sum\limits_{c \in V_{\globalS}(f_0) \setminus V_{\local}(\local_0)} g_c + \costPeps \costPinv | \Vglob(f_0) \setminus \Vloc(\local_0)| 
  \dist(\ell,f_0)^p.
  $$
  To upper bound $ \dist(\ell,f_0)^p$, we use an averaging argument. 
  For each client $c' \in V_{\local}(\ell) \cap V_{\globalS}(f_0)$, let $\local(c)$ be the facility of $\local_0$ 
  that serves it in $\local$. By Lemma \ref{lem:TI} we have 
  $\dist(\ell,f_0)^p \leq 2^p (\dist(\ell,c')^p+\dist(c',f_0)^p) \leq 2^p(\dist(\local(c),c')^p + \dist(c',f_0)^p) =2^p(l_{c'}+g_{c'})$, thus 
  $$\dist(\ell,f_0)^p \le \frac{2^p}{|V_{\local}(\ell) \cap V_{\globalS}(f_0)| }
  \sum\limits_{c \in V_{\local}(\ell) \cap V_{\globalS}(f_0)} (l_c + g_c).
  $$
  Substituting, we have that $\sum\limits_{c \in V_{\globalS}(f_0) \setminus V_{\local}(\local_0)} \loctoglob(c)$ is at most
  $$
  \costPeps \sum\limits_{c \in V_{\globalS}(f_0) \setminus V_{\local}(\local_0)} g_c + 
  2^p \costPeps \costPinv \frac{|  V_{\globalS}(f_0) \setminus V_{\local}(\local_0)|}{ |V_{\local}(\ell) \cap V_{\globalS}(f_0)|}
  \sum\limits_{c \in V_{\local}(\ell) \cap V_{\globalS}(f_0)} (l_c + g_c).
  $$
  By definition of isolated regions,  $V_{\globalS}(f_0) \setminus V_{\local}(\local_0) \le \eps |\Vglob(f_0)|$ and  
  $|V_{\globalS}(f_0) \setminus V_{\local}(\local_0)| \ge (1-\eps) |\Vglob(f_0)|$, so the ratio is at most $\eps / (1-\eps)$. 
  Summing over $\ell\in \local_0$ proves the Lemma.
  
\end{proof}
\fi

Similarly, for any client $c \in V_{\local}(\local_0)  \setminus V_{\globalS}(f_0)$ for some isolated region $(f_0,\local_0)$, 
define $\loctoglob$ as the cost of assigning $c$ to $f_0$.

\begin{lem}
  \label{lem:loctoglob}
  Consider an isolated region $(f_0, \local_0)$.
  $$\sum\limits_{c \in V_{\local}(\local_0) \setminus V_{\globalS}(f_0)} \loctoglob(c) \le  \costPeps
  \sum\limits_{c \in V_{\local}(\local_0) \setminus V_{\globalS}(f_0)} l_c
  + \frac{2^p \costPeps \costPinv \eps}{1-\eps}  \sum\limits_{c \in V_{\globalS}(f_0)} (g_c + l_c).$$
\end{lem}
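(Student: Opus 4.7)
The plan is to mirror the proof of Lemma~\ref{lem:globtoloc}, exchanging the roles of $\local$ and $\globalS$ throughout. The statement is the natural ``reverse'' of Lemma~\ref{lem:globtoloc}: instead of reassigning clients of $f_0$ that are not served by $\local_0$ to the nearest facility of $\local_0$, we reassign clients of $\local_0$ that are not served by $f_0$ directly to $f_0$. Since $(f_0,\local_0)$ is an isolated region, the definition is symmetric enough in $\local$ and $\globalS$ to make the same strategy work.

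First, I would fix any $\ell\in\local_0$ and consider a client $c\in V_\local(\ell)\setminus V_\globalS(f_0)$. Since $c$ is served by $\ell$ in $\local$, I would apply Lemma~\ref{lem:TI} at the triple $(c,\ell,f_0)$ to get
\[
\dist(c,f_0)^p \;\le\; \costPeps\bigl(\dist(c,\ell)^p+\costPinv\,\dist(\ell,f_0)^p\bigr)
\;=\;\costPeps\bigl(l_c+\costPinv\,\dist(\ell,f_0)^p\bigr).
\]
Summing over $c\in V_\local(\ell)\setminus V_\globalS(f_0)$ yields one ``linear'' term in $l_c$ plus a term proportional to $|V_\local(\ell)\setminus V_\globalS(f_0)|\cdot\dist(\ell,f_0)^p$.

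Next, to control $\dist(\ell,f_0)^p$, I would use the averaging trick of Lemma~\ref{lem:globtoloc}: for each $c'\in V_\local(\ell)\cap V_\globalS(f_0)$, Lemma~\ref{lem:TI} with constant~$2$ gives $\dist(\ell,f_0)^p\le 2^p(l_{c'}+g_{c'})$, so averaging over such $c'$ yields
\[
\dist(\ell,f_0)^p \;\le\; \frac{2^p}{|V_\local(\ell)\cap V_\globalS(f_0)|}\sum_{c\in V_\local(\ell)\cap V_\globalS(f_0)}(l_c+g_c).
\]
This is legitimate because the first bullet of Definition~\ref{defn:isolated} gives $|V_\local(\ell)\cap V_\globalS(f_0)|\ge(1-\eps)|V_\local(\ell)|>0$, and it also shows that the ratio $|V_\local(\ell)\setminus V_\globalS(f_0)|/|V_\local(\ell)\cap V_\globalS(f_0)|$ is at most $\eps/(1-\eps)$. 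Substituting the bound on $\dist(\ell,f_0)^p$ into the previous display therefore replaces the $|V_\local(\ell)\setminus V_\globalS(f_0)|\cdot\dist(\ell,f_0)^p$ factor by at most $\frac{2^p\eps}{1-\eps}\sum_{c\in V_\local(\ell)\cap V_\globalS(f_0)}(l_c+g_c)$.

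Finally, summing over $\ell\in\local_0$, the left-hand side becomes $\sum_{c\in V_\local(\local_0)\setminus V_\globalS(f_0)}\loctoglob(c)$, the ``linear'' $l_c$ sum becomes $\costPeps\sum_{c\in V_\local(\local_0)\setminus V_\globalS(f_0)}l_c$, and the averaging sums telescope into a sum over $c\in V_\local(\local_0)\cap V_\globalS(f_0)\subseteq V_\globalS(f_0)$ of $g_c+l_c$, which is in turn bounded above by $\sum_{c\in V_\globalS(f_0)}(g_c+l_c)$. This yields exactly the stated inequality. The proof is essentially mechanical once one notes this symmetry; the only mild subtlety is keeping track that after summing over $\ell\in\local_0$ the Voronoi cells $V_\local(\ell)$ partition $V_\local(\local_0)$, so no client is counted twice and the right-hand side bound involves only sums over $V_\globalS(f_0)$ rather than larger sets. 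I do not foresee any real obstacle: it is a direct dualization of Lemma~\ref{lem:globtoloc}.
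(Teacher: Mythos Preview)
Your proposal is correct and follows essentially the same argument as the paper's proof: apply Lemma~\ref{lem:TI} to the triple $(c,\ell,f_0)$, sum over $c\in V_\local(\ell)\setminus V_\globalS(f_0)$, bound $\dist(\ell,f_0)^p$ by averaging over $V_\local(\ell)\cap V_\globalS(f_0)$, use the isolated-region definition to bound the ratio by $\eps/(1-\eps)$, and sum over $\ell\in\local_0$. The only cosmetic difference is that the paper starts from a client and lets $\ell$ be its server, whereas you fix $\ell$ first; the computations are identical.
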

\ifshort
 \else
\begin{proof}
  Consider a client $c \in V_{\local}(\local_0) \setminus V_{\globalS}(f_0)$, and let $\ell$ denote the facility serving it in $\local$. 
  By Lemma \ref{lem:TI},
  $\dist(c,f_0)^p \leq \costPeps (\dist(c,\ell)^p + \costPinv \dist(\ell,f_0)^p)=\costPeps(\ell_c+\costPinv \dist(\ell,f_0)^p)$. 
  Summing over $c\in V_{\local}(\ell) \setminus V_{\globalS}(f_0)$,
  $$\sum\limits_{c \in V_{\local}(\ell) \setminus V_{\globalS}(f_0)} \loctoglob(c) \le \costPeps \left(
  \sum\limits_{c \in V_{\local}(\ell) \setminus V_{\globalS}(f_0)} l_c + \costPinv | V_{\local}(\ell) \setminus V_{\globalS}(f_0)| \dist(\ell,f_0)^p\right).
  $$
  To upper bound $ \dist(\ell,f_0)$, we use an averaging argument. For each client $c' \in V_{\local}(\ell) \cap V_{\globalS}(f_0)$, 
  by Lemma \ref{lem:TI} we have 
  $\dist(\ell,f_0)^p \leq 2^p (\dist(\ell,c')^p+\dist(c',f_0)^p)=2^p(l_{c'}+g_{c'})$, thus 
  $$ \dist(\ell,f_0)^p \le \frac{2^p}{|V_{\local}(\ell) \cap V_{\globalS}(f_0)| }
  \sum\limits_{c \in V_{\local}(\ell) \cap V_{\globalS}(f_0)} (l_c + g_c).
  $$
  Substituting,
  $$
  \sum\limits_{c \in V_{\local}(\ell) \setminus V_{\globalS}(f_0)} \loctoglob(c) \le 
  \costPeps\left(  \sum\limits_{c \in V_{\local}(\ell) \setminus V_{\globalS}(f_0)} l_c + 
  \frac{2^p\costPinv| V_{\local}(\ell) \setminus V_{\globalS}(f_0)|}{ |V_{\local}(\ell) \cap V_{\globalS}(f_0)|}
  \sum\limits_{c \in V_{\local}(\ell) \cap V_{\globalS}(f_0)} (l_c + g_c)\right).
  $$
  By definition of isolated regions,  $| V_{\local}(\ell) \setminus V_{\globalS}(f_0)| \le \eps |V_{\local}(\ell)|$ and  $| V_{\local}(\ell) \cap V_{\globalS}(f_0)| \ge (1-\eps) |V_{\local}(\ell)|$, so the ratio is at most $\eps / (1-\eps)$. Summing over $\ell\in \local_0$ proves the Lemma.
\end{proof}
\fi

\begin{lem}
  \label{lem:presence}
  Consider an isolated region $(f,\local_0)$. 
  Let $\ell$ be a facility of $\local_0$.
  For any super-region $\calR$, $\calM_{\calR}$ contains $f$ or a facility that is at distance at most $\dist(\ell,f)$ from $f$.
\end{lem}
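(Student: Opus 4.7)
The plan is a case analysis on whether $\ell$ survives the swap forming $\calM_\calR$, and whether $f$ is reintroduced into $\calM_\calR$. The workhorse will be Lemma~\ref{lem:sep-graphs}: whenever two vertices of $\calF$ lie on opposite sides of a region boundary in the $r$-division of $G'_{\vor(\calF)}$, a boundary vertex of that region is at least as close to one of them as the other is.

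First I would dispose of two easy cases. If $f \in \calM_\calR$ we are immediately done. If $\ell \notin \local(\calR)$ then $\ell \in \local \setminus \local(\calR) \subseteq \calM_\calR$ witnesses the claim at distance exactly $\dist(\ell,f)$. So henceforth assume $f \notin \calM_\calR$ and $\ell \in \local(\calR)$. Since $(f,\local_0)$ is isolated, the deletion set $S_0$ produced by Theorem~\ref{thm:deletion} can be taken to exclude $f$, so $f \in \globalS_1 \subseteq \globalSs$.

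Next I would locate $\hat \ell$ and $\hat f$ with respect to $\calR$. The assumption $\ell \in \local(\calR)$ means that $\hat\ell$ is a vertex of some region $R$ of the super-region $\calR$. On the other hand, $f \in \globalSs$ together with $f \notin \calM_\calR$ forces $\hat f$ to lie in no region of $\calR$; in particular, $\hat f$ is not an internal vertex of $R$. I can therefore invoke Lemma~\ref{lem:sep-graphs} with $c=f$, $v=\ell$, and $H_i = R$, obtaining a vertex $x \in \calF$ such that $\hat x$ is a boundary vertex of $R$ and $\dist(f,x) \leq \dist(f,\ell)$. By definition of $\globalSs$, every boundary vertex of the $r$-division lies in $\globalSs$, so $x \in \globalSs$; and since $\hat x \in R \in \calR$ we get $x \in \globalSs(\calR) \subseteq \calM_\calR$. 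Thus $x$ is the desired close facility.

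The main obstacle will be the book-keeping around $\globalSs$-membership: one must verify that $f \in \globalSs$ in the nontrivial case (using that $f$ is isolated and so is not touched by the deletion procedure), and that the boundary vertex $x$ produced by Lemma~\ref{lem:sep-graphs}, which lies in $\calF = \globalS_1 \cup \local$, automatically belongs to $\globalSs(\calR)$ regardless of whether it originated in $\globalS_1$ or in an isolated $\local$-facility -- this is precisely the purpose of the preceding abuse of notation that places isolated $\local$-boundary facilities into $\globalSs$. Once these conventions are in hand, the case split above completes the argument.
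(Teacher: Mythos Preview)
Your case split and your use of Lemma~\ref{lem:sep-graphs} mirror the paper's proof, but you omit the one structural fact the paper puts first: $\hat\ell$ and $\hat f$ lie in the \emph{same connected component} of $G'_{\vor(\calF)}$. This holds because $(f,\local_0)$ is an isolated region with $\ell\in\local_0$, so there is a good client $c\in V_\local(\ell)\cap V_\globalS(f)$, and by construction $G'$ contains the $c$-to-$\ell$ and $c$-to-$f$ shortest paths, linking $\ell$ and $f$ in $G'$.

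This is not merely book-keeping. The $r$-division here is of $G'_{\vor(\calF)}$, not $G_{\vor(\calF)}$, and $G'$ is a proper, generally disconnected, subgraph of $G$. Lemma~\ref{lem:sep-graphs} applied with $G'$ as the ambient graph locates a boundary vertex by tracing a shortest $f$-to-$\ell$ path in $G'$; if $f$ and $\ell$ are in different components of $G'$ there is no such path, and the region $R\ni\hat\ell$ could be an entire connected component with \emph{no} boundary vertices at all, so the lemma yields nothing. Your assertion that ``$\hat f$ lies in no region of $\calR$'' does not by itself force $R$ to have boundary. Once you add the connectedness observation, the rest of your argument is correct and coincides with the paper's.
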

\ifshort
\else
\begin{proof}
  Since $\ell$ and $f$ belong to the same isolated region $(f,\local_0)$ and $\ell \in \local_0$, they belong to the same connected
  component of $G'_{\vor}(\calF)$. 
  Now consider a super-region $\calR$ which does not contain $\ell$. Then $\ell \in \local(\calR)$. Thus, either $f \in \calR$
  or by Lemma \ref{lem:sep-graphs}, a boundary element $\ell' \in \calR$ of the $r$-division is on the path from $\ell$ to $f$
  and $\dist(\ell',f) \le \dist(\ell,f)$.  Thus, $\ell' \in \calM_{\calR}$, proving the lemma.
\end{proof}
\fi

For a client $c$ and a super-region $\calR$, we define $m_{\calR}(c)$
to be the cost of $c$ in the mixed solution $\calM_{\calR}$.
Moreover, for each client $c$, we consider the facilities $v$ and $w$ that serve this client in solution $\local$ and $\globalSs$ respectively.
We define $l(c)$ to be an arbitrary pair $(v,R) \in \Omega$ and $g^*(c)$ to be an arbitrary pair $(w,R) \in \Omega$.
We slightly abuse notation and say that $(v,R)$ is isolated if $v$ belongs to one of the isolated regions.

\begin{lem}\label{lem:costregion-good}
Let $c$ be a good client and $\calR$ a super-region. The value of $m_{\calR}(c)-l_c$ is less than or equal to:
$$\begin{cases}
g_c - l_c& \mbox{if $g^*(c) \in \calR$}\\
0 &\mbox{otherwise}\\
\end{cases}$$
\end{lem}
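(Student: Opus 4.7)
The plan is to produce, in each case, a facility of $\calM_\calR$ that is close enough to $c$ to witness the claimed inequality.

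For the first branch, I will handle the case $g^*(c) \in \calR$ directly. Since the representative pair equals $(w,R)$ with $R \in \calR$, the facility $w$ serving $c$ in $\globalSs$ lies in $\globalSs(\calR) \subseteq \calM_\calR$. This immediately gives $m_\calR(c) \le \dist(c,w) = g_c$ and hence $m_\calR(c) - l_c \le g_c - l_c$.

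For the second branch I assume $g^*(c) \notin \calR$ and aim for $m_\calR(c) \le l_c$. First I will dispose of the easy subcase $\ell \notin \local(\calR)$: then $\ell$ survives into $\calM_\calR$ and the bound is immediate. Otherwise $\ell \in \local(\calR)$, and I will invoke the good-client hypothesis, which guarantees that $E'$ contains a shortest $c$-to-$\ell$ path so that $\dist_{G'}(c,\ell) = l_c$. I then fix a region $R \in \calR$ with $\hat\ell \in R$ and apply Lemma \ref{lem:sep-graphs} with $G = G'$: provided $\hat c$ is not internal to $R$, the lemma produces a vertex $x \in \calF$ whose contracted image is a boundary vertex of $R$, with $\dist(c,x) \le l_c$. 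Since boundary vertices of regions of $\calR$ belong to $\globalSs(\calR) \subseteq \calM_\calR$, this yields $m_\calR(c) \le \dist(c,x) \le l_c$.

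The main obstacle is the remaining subcase, in which $\hat c$ is internal to the chosen region $R$ and the separator lemma cannot be invoked through the $c$-to-$\ell$ path. In that case I will look at the closest-in-$\calF$ vertex $v$ with $\hat v = \hat c$; by the closest-in-$\calF$ property, $\dist(c,v) \le \dist(c,\ell) = l_c$. If $v \in \globalS_1$, then $v \in \globalSs(\calR) \subseteq \calM_\calR$ and we are done. If instead $v \in \local$, the closest-in-$\local$ property forces $v = \ell$ (up to tie-breaking), and I will fall back on the \emph{second} shortest path encoded in $E'$ for good clients---the $c$-to-$f_0$ path in the isolated case, or the $c$-to-$f$ path in the nonisolated case---and combine the hypothesis $g^*(c) \notin \calR$ with one more application of Lemma \ref{lem:sep-graphs} to extract a boundary vertex of $R$ in $\calM_\calR$ at distance at most $l_c$ from $c$. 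The delicate point will be to relate the representative pair $g^*(c)$ to the location of the $\globalS$-facility actually used as endpoint of the $E'$-path, which depends on whether $w$ coincides with $f$ (or $f_0$) or is a strictly closer boundary vertex; this matching is where the interplay between the two shortest paths stored in $E'$ and the separator structure of the $r$-division becomes essential.
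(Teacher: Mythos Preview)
Your first two cases and the subcase ``$\hat c$ not internal to $R$'' are correct and match the paper. The gap is in your ``main obstacle'' subcase, where $\hat c=\hat\ell$ is internal to $R\in\calR$ and you fall back on the $c$-to-$f$ (or $c$-to-$f_0$) path in $E'$.

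You assert that Lemma~\ref{lem:sep-graphs} applied along this second path will produce a boundary vertex of $R$ at distance \emph{at most $l_c$} from $c$. It will not: the lemma gives you $x$ with $\dist_{G'}(c,x)\le\dist_{G'}(c,f)=\dist_G(c,f)$, and there is no reason for $\dist_G(c,f)\le l_c$. Once $\ell$ has been removed from $\calM_\calR$ (it is internal, hence not in $\globalSs$), every remaining $\local$-facility is at distance $\ge l_c^{1/p}$ and every $\globalSs$-facility is at distance $\ge g_c^{1/p}$; if $g_c>l_c$ and the second-closest $\local$-facility is strictly farther than $\ell$, then $m_\calR(c)>l_c$ is unavoidable. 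The hypothesis $g^*(c)\notin\calR$ locates $w$ but does not shrink $\dist(c,f)$ below $l_c$, so it cannot rescue the argument.

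The paper handles this subcase differently: instead of aiming for the bound $0$, it argues via the separator property that when $\hat c\in\calR$ one is effectively in the \emph{first} case, i.e.\ $m_\calR(c)\le g_c$ (the paper phrases this as ``$g^*(c)\in\calR$''). The point is that the $c$-to-$f$ path in $G'$ either ends inside $R$ (putting a $\globalSs$-facility of $\calR$ within $\dist(c,f)$ of $c$) or exits $R$ through a boundary vertex $x\in\globalSs(\calR)$ with $\dist(c,x)\le\dist(c,f)$; either way $m_\calR(c)\le g_c$. This weaker bound is all that is needed for Corollary~\ref{cor:costclient}, since the super-region containing $l(c)$ is unique and every other super-region already satisfies $m_\calR(c)\le l_c$. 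So the fix is not to push harder for $l_c$ here but to accept $g_c$, as the paper does.
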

\begin{proof}
  Observe that if $g^*(c) \in \calR$, then $m_{\calR}(c) \le g_c$ and the first case holds.
  Now, for any super-region $\calR \not\ni l(c),g^*(c)$, $\calM_{\calR}$ contains the facility serving 
  client $c$ in local. Thus its cost is at most $l_c$ and the second case holds.
  Finally, assume that $\calR$ contains $l(c)$ and does not contain $g^*(c)$. If $\hat c$ belongs to 
  $\calR$, then by the separation property of the $r$-division (see Lemmas \ref{lem:sep-graphs}, \ref{lem:sep-euclid}), $g^*(c) \in \calR$
  and $m_{\calR}(c) \le g_c$. Otherwise, $\hat c \notin \calR$, and so, by the separation property 
  there must be a boundary vertex of $\calR$ that is closer to $c$ than the facility that serves
  it in $\local$. Therefore, we have $m_{\calR}(c) \le l_c$ and the second case holds.
\end{proof}

We now turn to the bad clients.

\begin{lem}\label{lem:costregion-bad}
Let $c$ be a bad client and $\calR$ a super-region. The value of $m_{\calR}(c)-l_c$ is less than or equal to:
$$\begin{cases}
g_c-l_c& \mbox{if $\ell(c) \in \calR$ and $g^*(c) \in \calR$}\\
\globtoloc(c)  - l_c & \mbox{if $\ell(c) \in \calR$ and $g^*(c) \notin \calR$ and $g^*(c)$ is isolated}\\
g_c - l_c &\mbox{if $\ell(c) \notin \calR$ and $g^*(c) \in \calR$ and $g^*(c)$  is not isolated}\\
\loctoglob(c) - l_c &  \mbox{if $\ell(c) \in \calR$ and $g^*(c) \notin \calR$ and $g^*(c)$ is not isolated}\\
0&\mbox{otherwise.}
\end{cases}$$
\end{lem}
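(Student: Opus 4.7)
The plan is to do a case analysis, producing for each case a specific facility in $\calM_\calR$ whose cost bounds $m_\calR(c)$. Cases~1 and~3 are immediate: whenever $g^*(c)\in\calR$, we have $g^*(c)\in\globalSs(\calR)\subseteq \calM_\calR$, giving $m_\calR(c)\le g_c$. The ``otherwise'' case (case~5) goes through by noting that its hypothesis forces $\ell(c)\notin\calR$, so $\ell(c)$ survives in $\calM_\calR$ and $m_\calR(c)\le l_c$, matching the bound of~$0$. Unlike in the good-client case (Lemma~\ref{lem:costregion-good}), the separation lemmas~\ref{lem:sep-graphs} and~\ref{lem:sep-euclid} cannot be applied at~$c$, because $c$ is a bad client whose shortest paths to its serving centers were not added to~$G'$; the replacement tool is Lemma~\ref{lem:presence}.

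The substantive content is in cases~2 and~4, both of which exploit the isolated-region structure. In case~2, $g^*(c)=f_0$ for some isolated region $(f_0,\local_0)$ and, because $c$ is bad with $c\in \Vglob(f_0)$, we have $c\in \Vglob(f_0)\setminus \Vloc(\local_0)$, so $\globtoloc(c)$ is well-defined. Applying Lemma~\ref{lem:presence} with $\ell$ chosen as the facility $\ell_{\min}\in\local_0$ closest to~$f_0$ produces either $f_0$ itself or a surrogate $f^\star\in\calM_\calR$ with $\dist(f^\star,f_0)\le \dist(\ell_{\min},f_0)$. Combining this with the triangle inequality through~$f_0$ controls $\dist(c,f^\star)$, and hence $m_\calR(c)$, in terms of $\globtoloc(c)=\dist(c,\ell_{\min})^p$. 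Case~4 is symmetric: now $\ell(c)$ lies in some isolated $\local_0$, $c\in \Vloc(\local_0)\setminus \Vglob(f_0)$, $\loctoglob(c)=\dist(c,f_0)^p$, and Lemma~\ref{lem:presence} with $\ell=\ell(c)$ delivers a facility in $\calM_\calR$ near~$f_0$ whose distance to~$c$ I will bound by~$\loctoglob(c)$ via the same triangle-inequality argument.

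The main difficulty will be the precise form of the bound in cases~2 and~4: Lemma~\ref{lem:presence} only guarantees a surrogate close to~$f_0$, and the triangle inequality through~$f_0$ introduces a slack term of the form $\dist(\ell,f_0)$ that must be controlled. To land exactly on $\globtoloc(c)$ and $\loctoglob(c)$, I expect to invoke Lemma~\ref{lem:TI}, absorbing the slack into a $(1+\eps_1)^p$ factor. This is consistent with the $(1+\eps_1)^p$ factors already appearing in Lemmas~\ref{lem:globtoloc} and~\ref{lem:loctoglob}, so the apparent loss should cleanly thread through the subsequent cost summation without degrading the final approximation ratio.
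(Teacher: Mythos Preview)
Your case analysis matches the paper's exactly, including the appeal to Lemma~\ref{lem:presence} for cases~2 and~4 and the observation that the ``otherwise'' branch forces $\ell(c)\notin\calR$. The paper's proof is terser: it simply asserts that Lemma~\ref{lem:presence} directly gives $m_{\calR}(c)\le\globtoloc(c)$ (resp.\ $\loctoglob(c)$) without spelling out the triangle-inequality detour through $f_0$ or invoking Lemma~\ref{lem:TI}; your more careful accounting of the possible $(1+\eps_1)^p$ slack is not wrong, but the paper treats that step as immediate and lets Lemmas~\ref{lem:globtoloc} and~\ref{lem:loctoglob} absorb any looseness downstream.
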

 \ifshort
\else
\begin{proof}
  Observe that the super-regions form a partition of the $l(c)$ and $g^*(c)$. 
  Let $\calR(\ell(c))$ be the region that contains $\ell(c)$ and $\calR(g^*(c))$ be the region that contains $g^*(c)$.
  If $\calR(\ell(c)) = \calR(g^*(c))$ then, the facility serving $c$ in $\globalSs$ is in $\calM_{\calR(\ell(c))}$, 
  hence $m_{\calR(\ell(c))}(c) \le g_c$.
  Moreover for any other region $\calR' \neq \calR(\ell(c))$, we have $\ell(c) \notin \calR'$ and so the facility 
  serving $c$ in $\local$ is in $\calM_{\calR'}$. Therefore $m_{\calR'}(c) \le l_c$.

  Thus, we consider $c$ such that $\calR(\ell(c)) \neq \calR(g^*(c))$.
  Since $c$ is bad, we have that $\ell(c)$ or $g^*(c)$ is isolated.
  Consider the case where $g^*(c)$ is isolated. The cost of $c$ in solution 
  $\calM_{\calR(\ell(c))}$ is, by Lemma \ref{lem:presence}, at most $\globtoloc(c)$ satisfying the lemma.
  Now, for any other region $\calR' \neq \calR(g^*(c))$, again we have 
  $\ell(c) \notin \calR'$ and so the facilitiy serving $c$ in $\local$ is in $\calM_{\calR'}$.
  Therefore, $m_{\calR'}(c) \le l_c$.
  
  Therefore, we consider the case where $c$ is such that $\calR(\ell(c)) \neq \calR(g^*(c))$ and 
  such that $g^*(c)$ is not isolated. Since $c$ is bad, $\ell(c)$ is isolated.
  Thence, by Lemma \ref{lem:presence}, the cost in solution $\calM_{\calR(\ell(c))}$ is at most $\loctoglob(c)$,
  satisfying the Lemma.
  Moreover, in solution $\calR(g^*(c))$, the cost is at most $g_c$.
  Finally, for any other region $\calR' \neq \calR(\ell(c)),\calR(g^*(c))$, 
  $\ell(c) \notin \calR'$ and so the facilitiy serving $c$ in $\local$ is in $\calM_{\calR'}$.
  Therefore, $m_{\calR'}(c) \le l_c$, concluding the proof of the lemma.
\end{proof}

We now partition the clients into three sets, $\Lambda_1,\Lambda_2,\Lambda_3$.
Let $\Lambda_1$ be the set of clients such that there exists a super-region $\calR$ such that 
$\ell(c) \in \calR$ and $g^*(c) \notin \calR$ and $g^*(c)$ is not isolated.
Let $\Lambda_2$ be the set of clients such that there exists a super-region $\calR$ such that 
$\ell(c) \in \calR$ and $g^*(c) \notin \calR$ and $g^*(c)$ is isolated.
Finally let $\Lambda_3 $ be the remaining clients : $\Lambda_3 = \calC \setminus \Lambda_1 \setminus \Lambda_2$.
The following corollary follows directly from combining Lemmas \ref{lem:costregion-good} and \ref{lem:costregion-bad} 
and by observing that the super-regions form a partition of the $l(c)$ and $g^*(c)$, and by the definition 
of $\Lambda_1,\Lambda_2,\Lambda_3$.

\begin{cor}
  \label{cor:costclient}
  For any client $c$, we have that 
  $$ \sum_{\calR} (m_{\calR}(c) - l_c) \le 
  \begin{cases}
    \loctoglob + g_c - 2l_c &\mbox{if $c \in \Lambda_1$}\\
    \globtoloc - l_c &\mbox{if $c \in \Lambda_2$} \\
    g_c - l_c &\mbox{if $c \in \Lambda_3$}\\    
  \end{cases}
  $$
\end{cor}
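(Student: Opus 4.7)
The plan is to exploit two structural facts and then do a three-way case split on $\Lambda_1, \Lambda_2, \Lambda_3$. First, because the super-regions partition $\Omega$, each of the pairs $\ell(c)$ and $g^*(c)$ lies in exactly one super-region; call these $\calR_\ell$ and $\calR_g$ (possibly equal). Second, for any super-region $\calR \notin \{\calR_\ell, \calR_g\}$, neither $\ell(c)$ nor $g^*(c)$ is removed in forming $\calM_\calR$, so the facility that serves $c$ in $\local$ still lies in $\calM_\calR$; hence $m_\calR(c) \le l_c$ and the term contributes at most $0$ to $\sum_\calR (m_\calR(c) - l_c)$. Thus the whole sum is controlled by the at-most-two contributions from $\calR_\ell$ and $\calR_g$.

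Given this reduction, I would carry out the case analysis as follows. For $c \in \Lambda_3$, the very definition of $\Lambda_1$ and $\Lambda_2$ forces $\calR_\ell = \calR_g$ (otherwise some super-region would witness $\ell(c) \in \calR, g^*(c) \notin \calR$, placing $c$ in $\Lambda_1$ or $\Lambda_2$ depending on whether $g^*(c)$ is isolated); whether $c$ is good or bad, Lemma \ref{lem:costregion-good} or case~1 of Lemma \ref{lem:costregion-bad} then bounds the sole nonzero term by $g_c - l_c$. For $c \in \Lambda_1$, the client is bad with $g^*(c)$ not isolated (so by the bad-client definition $\ell(c)$ is isolated), and $\calR_\ell \ne \calR_g$; case~4 of Lemma \ref{lem:costregion-bad} gives a contribution of $\loctoglob(c) - l_c$ from $\calR_\ell$, and case~3 gives $g_c - l_c$ from $\calR_g$, summing to $\loctoglob(c) + g_c - 2 l_c$. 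For $c \in \Lambda_2$, $c$ is bad with $g^*(c)$ isolated and $\calR_\ell \ne \calR_g$; case~2 of Lemma \ref{lem:costregion-bad} gives $\globtoloc(c) - l_c$ from $\calR_\ell$, while the $\calR_g$ contribution falls into the ``otherwise'' regime.

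The one point that requires a little care is verifying, for $c \in \Lambda_2$, that the term coming from $\calR_g$ really is bounded by $0$, since $g^*(c) \in \calR_g$ is isolated and the conditions in Lemma \ref{lem:costregion-bad} need to be matched. Because $\ell(c) \notin \calR_g$, the $\local$-facility serving $c$ remains in $\calM_{\calR_g}$, so $m_{\calR_g}(c) \le l_c$ directly; this is precisely the ``otherwise'' clause. Once this verification is made, the three bounds listed in the corollary follow by summing the at-most-two nonzero terms in each case, which is a one-line computation. Writing up will therefore amount to transcribing this bookkeeping cleanly rather than tackling any new technical obstacle.
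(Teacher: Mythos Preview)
Your proposal is correct and follows exactly the approach the paper indicates: the paper states that the corollary ``follows directly from combining Lemmas~\ref{lem:costregion-good} and~\ref{lem:costregion-bad} and by observing that the super-regions form a partition of the $l(c)$ and $g^*(c)$, and by the definition of $\Lambda_1,\Lambda_2,\Lambda_3$,'' and you have simply spelled out that bookkeeping explicitly. Your identification of which case of Lemma~\ref{lem:costregion-bad} applies at each of $\calR_\ell$ and $\calR_g$ is accurate, including the observation that for $c\in\Lambda_2$ the contribution at $\calR_g$ falls into the ``otherwise'' clause.
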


We now turn to the proof of Theorem \ref{thm:Clustering}.
\begin{proof}[Proof of Theorem \ref{thm:Clustering}]
  By Lemma \ref{lem:localneighb}, for any super-region $\calR$ the solution $\calM_{\calR}$ 
  is in the local neighborhood of $\local$. By local optimality, we have 
  $$(1-\eps/n) \sum_c l_c \le \sum_c m_{\calR}(c).$$
  Hence,
  $$-\frac{\eps}{n} \cost(\local) \le  \sum_c (m_{\calR}(c) - l_c).$$
  Observe that the number of regions is at most $k \le n$. Thus, summing over all regions we have 
  $$-\eps \cost(\local) \le  \sum_{\calR} \sum_c (m_{\calR}(c) - l_c).$$
  Inverting summations and applying Corollary \ref{cor:costclient}, we obtain
  $$-\eps \cost(\local) \le  \sum_{c \in \Lambda_1} (\loctoglob(c) + g_c - 2l_c)  + \sum_{c \in \Lambda_2}(\globtoloc(c) - l_c)  + \sum_{c \in \Lambda_3} 
  (g_c - l_c).$$
  By definition of $\Lambda_1$ and since each client in $\Lambda_1$ is bad, applying Lemma \ref{lem:loctoglob} yields
  \begin{align*}
    -\eps \cost(\local) \le  \sum_{c \in \Lambda_1} (g_c + (\costPeps -2) l_c)  &+ \sum_{c \in \Lambda_2}(\globtoloc(c) - l_c)\\  &+ \sum_{c \in \Lambda_3} 
    (g_c -  l_c) + \frac{\costPeps \costPinv 2^p \eps}{1-\eps} (\cost(\local) + \cost(\globalSs)).
  \end{align*}
  Hence, for $\eps$ small enough with respect to $p$ and $\eps_1$, we have
  \begin{align*}
    -\eps \cost(\local) \le  \sum_{c \in \Lambda_1} (g_c - (1-\eps) l_c)  &+ \sum_{c \in \Lambda_2}(\globtoloc(c) - l_c)\\  &+ \sum_{c \in \Lambda_3} 
    (g_c -  l_c) + \frac{\costPeps \costPinv 2^p \eps}{1-\eps} (\cost(\local) + \cost(\globalSs)).
  \end{align*}
  Now, by definition of $\Lambda_2$ and since each client in $\Lambda_2$ is bad, applying Lemma \ref{lem:globtoloc} gives
  \begin{align*}
    -\eps \cost(\local) \le  \sum_{c \in \Lambda_1} (g_c - (1-\eps) l_c)  &+ \sum_{c \in \Lambda_2}(\costPeps g_c - l_c)\\  &+ \sum_{c \in \Lambda_3} 
                          (g_c - l_c) + \frac{2^{p+1} \costPeps \costPinv \eps}{1-\eps} (\cost(\local) + \cost(\globalSs))
  \end{align*}
  Thus,
  \begin{align*}
    -\eps \cost(\local) &\le \sum_c (\costPeps g_c - (1-\eps) l_c) +  
                          \frac{2^{p+1} \costPeps \costPinv \eps}{1-\eps} (\cost(\local) + \cost(\globalSs))\\ &\le
                          \costPeps (1+\frac{2^{p+1} \costPinv \eps}{1-\eps}) (\cost(\globalSs) - \cost(\local)),
  \end{align*}
  since $\Lambda_1,\Lambda_2,\Lambda_3$ is a partition of the clients.
  Therefore, assuming $\eps$ is small enough with respect to $p$ and
  $\eps_1$,
  there exists a constant $c_1$ such that 
  \begin{align*}
    (1-c_1\frac{2\eps}{1-\eps}-\eps)  \cost(\local) &\le (1+c_1\frac{2\eps}{1-\eps})  \cost(\globalSs)\\
    (1-c_1\frac{2\eps}{1-\eps}-\eps)  \cost(\local) &\le (1+c_1\frac{2\eps}{1-\eps})(1+\eps)  \cost(\globalS) + c_1\eps\cost(\local)
  \end{align*}
  Now, observe that $\cost(\globalSs) \le \cost(\globalS_1)$ since $\globalS_1 \subseteq \globalSs$. By Theorem
  \ref{thm:deletion}, $\cost(\globalS_1) \le (1+c_1\eps) \cost(\globalS) + c_1\eps\cost(\local)$.
  Combining concludes the proof of Theorem \ref{thm:Clustering}
\end{proof}

\section{Clusters in Euclidean space : Proof of Theorem \ref{thm:ClusteringEuclid}}
The proof is similar for $\R^d$. We explain how to modify the beginning of the proof
of the graph case, the rest of the proof applies directly.
We let $\calC_1$ denote the set of clients that do not belong to the symetric difference 
of $\Vloc(\local_0)$ and $\Vglob(f_0)$ of any isolated region $(\local_0,f_0)$.
We call them {\em good} clients; the others are {\em bad} clients. 
Again, we define a solution $\globalS_1$ by applying Theorem \ref{thm:deletion} to $\globalS$.
Let $\calF =  \local \cup \globalS_1$.
We now consider each isolated region $(\local_0,f_0)$, with $|\local_0| > 1/\eps^{7d}-1$, and proceed to an $r$-division 
of $\local_0 \cup \{f_0\}$ with $r = 1/\eps^{7d}$.
Moreover, for the remaining facilities $\calF$ of that are not in any isolated region, we proceed to
an $r$-division of those points with $r = 1/\eps^{7d}$.
We denote by $R_1,R_2\ldots$ the subset of all the regions defined by the above $r$-divisions.
Let $Z$ denote the set of boundary elements of all the $r$-divisions.
Define $\globalSs = \globalS_1 \cup Z$.

The point sets of regions are not disjoint since points of $Z$ appear in various regions. Thus,
we again define a ground set $\Omega= \set{(v, R)\ : v \text{ a point of } 
\calF,~R \text{ a region containing } v}$, and, for
each region $R$, we define $\widehat{R}=\set{(v,R)\ :\ v \text{ a
    point of } R}$. Now $\widehat{R_1}, \widehat{R_2}, \ldots$ form a
partition of $\Omega$.  To allow us to go from an element of $\Omega$
back to a point, if $x=(v, R)$ we define $\widecheck x= v$.
Finally, define $\OmegaSubset=\set{(v,R)\in \Omega :\ v\in \globalSs}$.

We now branch with the rest of the proof of Theorem \ref{thm:Clustering}, starting from Lemma \ref{lem:boundSs}.
\section{Reducing the number of clusters : Proof of Theorem \ref{thm:deletion}}

We recall the statement of Theorem \ref{thm:deletion}.


{
\renewcommand{\thetheorem}{\ref{thm:deletion}}
\begin{theorem}
  Let $\eps <1/2$ be a positive constant and $\local$ and $\globalS$ be two solutions for the $k$-clustering problem with exponent $p$.
  Let $\bar k$ denote the number of facilities $f$ of $\globalS$ that are not in a 1-1 isolated region. 
  There exists a set $S_0$ of facilities of $\globalS$ of size at least $\eps^3 \bar k/6$ that can be removed 
  from $\globalS$ at low cost:  $\cost(\globalS \setminus S_0) \le (1+2^{3p+1}\eps) \cost(\globalS) +  2^{3p+1}\eps\cost(\local)$.
\end{theorem}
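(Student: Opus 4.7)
The plan is to exploit the failure of 1-1 isolation at each candidate facility to produce a concrete reassignment witness, and then to select $S_0$ by an averaging argument so that its size is at least $\eps^3 \bar k/6$ while the total reassignment cost is at most $O(\eps)$ times $\cost(\globalS)+\cost(\local)$. First I would classify each non-1-1-isolated $f \in \globalS$ into one of two mutually exclusive cases: (a)~for every $\ell \in \local$, $|\Vloc(\ell)\cap\Vglob(f)|<(1-\eps)|\Vglob(f)|$, so that the clients of $f$ are spread across many $\local$-facilities; or (b)~there is a unique $\ell^{\ast}=\ell^{\ast}(f)\in\local$ with $|\Vloc(\ell^{\ast})\cap\Vglob(f)|\ge(1-\eps)|\Vglob(f)|$ (uniqueness holding because $1-\eps>1/2$), but $|\Vloc(\ell^{\ast})\cap\Vglob(f)|<(1-\eps)|\Vloc(\ell^{\ast})|$, so $\ell^{\ast}$ has at least $\eps|\Vloc(\ell^{\ast})|$ ``witness'' clients in $\Vloc(\ell^{\ast})\setminus\Vglob(f)$.

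In each case I would define a reassignment map $\rho_f:\Vglob(f)\to\globalS\setminus\{f\}$ together with an upper bound $\alpha_f$ on $\sum_{c\in\Vglob(f)}\dist(c,\rho_f(c))^p$. In case (a), $\rho_f(c)$ is a $\globalS$-facility other than $f$ that is closest to $\ell_c$, the facility serving $c$ in $\local$; two applications of Lemma~\ref{lem:TI}, combined with $\dist(\ell_c,\rho_f(c))\le\dist(\ell_c,f)\le\dist(\ell_c,c)+\dist(c,f)$, give $\dist(c,\rho_f(c))^p\le 2^{2p+1}l_c^{\,p}+2^{2p}g_c^{\,p}$, where $l_c,g_c$ are the $\local,\globalS$ costs of $c$. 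In case (b), $\rho_f$ maps every client of $\Vglob(f)$ to a facility $f_{c'}\in\globalS\setminus\{f\}$ serving some witness $c'$; averaging over the choice of witness and applying Lemma~\ref{lem:TI} twice yields $\alpha_f\le 2^{O(p)}L_f+2^{O(p)}(|\Vglob(f)|/|W|)(L_W+G_W)$, where $L_W=\sum_{c'\in W} l_{c'}^{\,p}$ and $G_W=\sum_{c'\in W} g_{c'}^{\,p}$, and where the ratio $|\Vglob(f)|/|W|\le 1/(\eps(1-\eps))\le 2/\eps$.

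Next I would sum the $\alpha_f$ over all non-1-1-isolated $f$. The key accounting point is that by the uniqueness in case (b), each $\ell\in\local$ is the dominant $\ell^{\ast}$ for at most one $f$; hence each client $c'$ appears as a witness in at most one $\alpha_f$. Combined with the fact that each client lies in $\Vglob(f)$ for exactly one $f$, this gives a global bound $\sum_f\alpha_f\le 2^{O(p)}(\cost(\globalS)+\cost(\local))/\eps$.

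For the selection, by an averaging argument at least $\bar k/2$ of the non-1-1-isolated facilities satisfy $\alpha_f$ at most twice the mean, i.e.\ $O(2^{O(p)}(\cost(\globalS)+\cost(\local))/(\eps\bar k))$. Among these, I would pick an $\eps^3\bar k/6$-size subset $S_0$ compatible with simultaneous reassignment, namely such that for every $f\in S_0$ the image of $\rho_f$ is disjoint from $S_0$. Compatibility can be ensured by an independent-set argument in the functional-type graph induced by $\rho_f$ (whose out-structure has bounded size). The total extra cost is then at most $|S_0|\cdot 2\bar\alpha\le (\eps^3/3)\sum_f\alpha_f\le 2^{O(p)}\eps^2(\cost(\globalS)+\cost(\local))$, which is well below the $2^{3p+1}\eps$ target once constants are tracked. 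The main obstacle I anticipate is the careful averaging in case (b), so that the $1/\eps$ blowup appears only once in the per-facility bound and cancels exactly against the $\eps^3$ factor from the sampling density; the compatibility issue in the selection step is secondary and largely routine.
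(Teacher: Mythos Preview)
Your case split is natural but case~(a) has a genuine gap. You claim that when $\rho_f(c)$ is the facility of $\globalS\setminus\{f\}$ closest to $\ell_c$, one has $\dist(\ell_c,\rho_f(c))\le\dist(\ell_c,f)$. This is false in general: if $f$ happens to be the $\globalS$-facility closest to $\ell_c$, then by definition $\dist(\ell_c,\rho_f(c))\ge\dist(\ell_c,f)$, and nothing in your case~(a) hypothesis rules this out. Concretely, suppose $\Vglob(f)$ is partitioned among many $\local$-facilities $\ell_1,\ldots,\ell_m$ with $\Vloc(\ell_i)\subseteq\Vglob(f)$ for every $i$ and each $|\Vloc(\ell_i)|$ small relative to $|\Vglob(f)|$. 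Then every $\ell_i$ satisfies $|\Vloc(\ell_i)\cap\Vglob(f)|<(1-\eps)|\Vglob(f)|$, so you are in case~(a); yet every client of every $\ell_i$ is served by $f$ in $\globalS$, so the second-closest $\globalS$-facility to $\ell_i$ can be arbitrarily far. Your per-client bound $2^{2p+1}l_c^{\,p}+2^{2p}g_c^{\,p}$ therefore does not follow.

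The paper sidesteps this by \emph{not} attempting to reassign such $f$ at all. It restricts attention to the smaller set $\tilde\globalS\subseteq\globalS$ of facilities lying in no isolated region (Definition~\ref{defn:isolated}, allowing $|\local_0|>1$), for which one can guarantee that an $\eps$-fraction of $\Vglob(f)$ is served in $\local$ by facilities $\ell$ with $|\Vloc(\ell)\cap\Vglob(f)|<(1-\eps)|\Vloc(\ell)|$; that inequality is exactly what produces witnesses in $\Vloc(\ell)\setminus\Vglob(f)$ and hence a nearby $\globalS$-facility other than $f$. A separate counting step then shows $|\tilde\globalS|\ge\bar k/2$, recovering the stated size bound on $S_0$. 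Your example above is precisely an isolated region $(f,\{\ell_1,\ldots,\ell_m\})$, which the paper excludes from $\tilde\globalS$ but your case~(a) does not exclude. A secondary issue is your compatibility step: since $\rho_f$ sends different clients to different facilities, the induced digraph on $\globalS$ does not have outdegree~$1$, so the ``functional-type graph'' independent-set argument needs more than you have written; the paper avoids this by reassigning all of $\Vglob(f)$ to a \emph{single} $\phi(f)$, yielding an outdegree-$1$ graph that is $3$-colorable.
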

\addtocounter{theorem}{-1}
}
Let $\eps <1/2$ be a positive constant and $\local$ and $\globalS$ be two solutions for the $k$-clustering problem with exponent $p$. 
Observe  that since $\eps < 1/2$, each facility of $\local$ belongs to at most one isolated region.
Let $\tilde \globalS$ denote the facilities of $\globalS$ that are not in an isolated region.
Theorem~\ref{thm:deletion} relies on the following lemma, whose proof we momentarily defer.
\begin{lem}
  \label{lem:redirect}
  There exists a function $\phi : \tilde \globalS \mapsto \globalS$ such that 
  reassigning all the clients of $V_{\globalS}(f)$ to $\phi(f)$ for every facility $f \in \tilde \globalS$
  increases the cost of $\globalS$ by at most $2^{3p+1} \eps^{-2} (\cost(\local) + \cost(\globalS))$.
\end{lem}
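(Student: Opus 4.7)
The plan is to exploit the failure of isolation for each $f \in \tilde\globalS$. Since $f$ forms no isolated region with any $\local_0 \subseteq \local$, in particular it fails with the maximal candidate $\local^*(f) = \{\ell \in \local : |V_\local(\ell) \cap V_\globalS(f)| \ge (1-\eps)|V_\local(\ell)|\}$, which by construction already satisfies the first condition of Definition~\ref{defn:isolated}. Hence the second condition fails, so $B_f := V_\globalS(f) \setminus V_\local(\local^*(f))$ has $|B_f| > \eps|V_\globalS(f)|$. For each $c \in B_f$, writing $\ell_c = \local(c)$, the facility $\ell_c \notin \local^*(f)$ is ``promiscuous'' with respect to $f$: $|V_\local(\ell_c) \setminus V_\globalS(f)| > \eps|V_\local(\ell_c)|$. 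Any such $c' \in V_\local(\ell_c) \setminus V_\globalS(f)$ supplies an escape target $\globalS(c') \in \globalS \setminus \{f\}$; along the four-edge path $f \to c \to \ell_c \to c' \to \globalS(c')$ the triangle inequality gives $d(f, \globalS(c')) \le g_c + l_c + l_{c'} + g_{c'}$, and iterating Lemma~\ref{lem:TI} bounds $d(f, \globalS(c'))^p$ by a constant depending on $p$ times $g_c^p + l_c^p + l_{c'}^p + g_{c'}^p$.

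I would construct $\phi$ by the probabilistic method: for each $f \in \tilde\globalS$ independently, draw $c$ uniformly from $B_f$, then $c'$ uniformly from $V_\local(\ell_c) \setminus V_\globalS(f)$, and set $\phi(f) := \globalS(c')$. Linearity of expectation reduces the lemma to bounding $\sum_f \mathbb{E}[\Delta_f]$, where $\Delta_f = \sum_{c_0 \in V_\globalS(f)} (d(c_0, \phi(f))^p - g_{c_0}^p)$. A single application of Lemma~\ref{lem:TI} yields $d(c_0, \phi(f))^p \le 2^p(g_{c_0}^p + d(f, \phi(f))^p)$, so $\Delta_f \le (2^p - 1)\sum_{c_0 \in V_\globalS(f)} g_{c_0}^p + 2^p |V_\globalS(f)| \cdot d(f, \phi(f))^p$. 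The first term sums to at most $(2^p-1)\cost(\globalS)$, which is dominated by the target bound.

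The crux is to control $\sum_f 2^p |V_\globalS(f)| \cdot \mathbb{E}[d(f, \phi(f))^p]$. Using $|V_\globalS(f)|/|B_f| \le 1/\eps$ and $1/|V_\local(\ell_c) \setminus V_\globalS(f)| \le 1/(\eps |V_\local(\ell_c)|)$, the expectation expands to an $O(\eps^{-2})$-weighted sum of $g_c^p, l_c^p, g_{c'}^p, l_{c'}^p$ contributions. The key double-counting observation is that each client lies in at most one $B_f$, so $\sum_f \sum_{c \in B_f} (g_c^p + l_c^p) \le \cost(\globalS) + \cost(\local)$; and for the nested average $\frac{1}{|V_\local(\ell_c)|}\sum_{c' \in V_\local(\ell_c)}(g_{c'}^p + l_{c'}^p)$, the denominator exactly cancels the maximum multiplicity with which a local facility $\ell$ appears as some $\ell_c$, giving $\sum_\ell \sum_{c' \in V_\local(\ell)}(g_{c'}^p + l_{c'}^p) = \cost(\globalS) + \cost(\local)$ after swapping the order of summation. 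Collecting the constants gives expected total increase at most $2^{3p+1}\eps^{-2}(\cost(\globalS) + \cost(\local))$, and a deterministic $\phi$ witnessing the bound exists by standard derandomization. The main obstacle is the bookkeeping across the stacked triangle inequalities and the nested cancellation, ensuring that the $|V_\local(\ell_c)|^{-1}$ factor pairs exactly with the outer sum's multiplicity so that no extra $\eps^{-1}$ or unwanted $p$-dependence creeps in.
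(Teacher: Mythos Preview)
Your proof is correct, and the combinatorics is essentially the same as the paper's: both arguments rest on the two facts that non-isolation of $f$ gives $|B_f|>\eps\,|V_\globalS(f)|$, and that $\ell_c\notin\local^*(f)$ gives $|V_\local(\ell_c)\setminus V_\globalS(f)|>\eps\,|V_\local(\ell_c)|$; both then extract the two factors of $\eps^{-1}$ from these inequalities and close with the same double-counting of pairs $(f,c)$ grouped by $\ell_c$.

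The packaging differs slightly. The paper takes $\phi(f)$ deterministically as the nearest other facility of $\globalS$ and upper-bounds the reassignment cost by a fractional scheme through intermediaries $\psi(\ell,f)$; because $\psi(\ell,f)$ may or may not coincide with the global facility nearest $\ell$, this forces the $\Delta_3/\Delta_4$ case split. Your probabilistic construction routes through an explicit witness client $c'\in V_\local(\ell_c)\setminus V_\globalS(f)$, whose global center is automatically $\neq f$, so the case split disappears and the existence of a good deterministic $\phi$ is recovered by the probabilistic method. This is a mild streamlining of the same argument rather than a genuinely different route.
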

\begin{proof}[Proof of Theorem~\ref{thm:deletion}]
  Consider the abstract graph $H$ where the nodes are the elements of $\globalS$ and 
  there is a directed arc from $f$ to $\phi(f)$. More formally, $H = (\globalS, \{ \langle f,\phi(f) \rangle \mid f \in \tilde \globalS\})$.
  Notice that every node of $H$ has outdegree at most 1. Thus, there exists a coloring of the nodes of $H$ with three colors,
  such that all arcs are dichromatic.
  Let $S$ denote the color set with the largest number of nodes of $\tilde \globalS$. We have that $S$ contains at least $|\tilde \globalS|/3$ 
  nodes of $\tilde \globalS$.

  Arbitrarily partition $S$ into $1/\eps^3$ parts, each of cardinality at least $\eps^3 |\tilde \globalS| /3$. 
  By Lemma \ref{lem:redirect} and an averaging argument, there exists a part $S_0$ such that reassigning each facility $f \in S_0$ to $\phi(f)$ 
  increases the cost by at most 
  $$\frac{2^{3p+1}\eps^{-2}}{\eps^{-3}} (\cost(\local) + \cost(\globalS)) = 2^{3p+1}\eps(\cost(\local) + \cost(\globalS)).$$

  Since the arcs of $H$ are dichromatic, if $f \in S_0$ then $\phi(f) \notin S_0$.
  Consider the solution $\globalS \setminus S_0$. Client that belong to $V_{\globalS}(f)$ for 
  some $f \in S_0$ can be assigned in $\globalS \setminus S_0$ to a facility that is no farther than $\phi(f)$.
  Therefore, the cost of the solution $\globalS \setminus S_0$ is at most 
  $\cost(\globalS) + 2^{3p+1} \eps \cdot (\cost(\local) + \cost(\globalS))$.

  We now relate $|\tilde \globalS|$ to $\bar k$.
  Let $k(\local)_1$ be the number of facilities of $\local$ that belong to an isolated region that is not 1-1 isolated.
  Let $k(\globalS)_1$ be the number of facilities of $\globalS$ that belong to an isolated region that is not 1-1 isolated. 
  Finally, let $k(\globalS)_{2} = |\tilde \globalS|$.
  By definition, we have $k(\globalS)_1 + k(\globalS)_2 = \bar k \ge k(\local)_1 $.
  
  Now, observe that there are at least two facilities of $\local$ per isolated region that is not 1-1 isolated.
  Thus, $2 k(\globalS)_1 \le k(\local)_1$. Hence, $\bar k = k(\globalS)_1 + k(\globalS)_2 \le k(\local)_1/2 + k(\globalS)_2$.
  But for any $k(\globalS)_2 < k(\globalS)_1$, $k(\local)_1/2 + k(\globalS)_2 < k(\local)_1 \le \bar k$.
  Therefore, we must have $k(\globalS)_2 \ge k(\globalS)_1$, and so $k(\globalS)_2 \ge \bar k/2$.
  Thence $\eps |\tilde \globalS|/3 \ge \eps \bar k/6$ and the theorem follows.
\end{proof}
\ifshort
\else

We now define $g_c$ to be the cost of client $c$ in solution $\globalSs$ and $l_c$ to be the cost
of client $c$ in solution $\local$.

\begin{proof}[Proof of Lemma \ref{lem:redirect}]
For each facility $f \in \tilde \globalS$, we define $\phi(f) = \text{argmin}\{\dist(f,f') \mid f' \in \globalS \setminus \{f\}\}$.
Instead of analyzing the cost increase when reassigning clients of $\Vglob(f)$ to $\phi(f)$ we will analyze the cost increase of 
the following fractional assignment.
First for a facility $f \in \globalS$, 
we denote by $\hat \local(f)$  
the set  
  \begin{equation}
    \label{eq:loc2glob}
    \hat \local(f)=\{ \ell  \in \local \mid 1\leq |V_\globalS(f) \cap V_\local(\ell )| < (1-\eps)|V_\local(\ell )| \} .
  \end{equation}

  By definition of isolated regions (Definition~\ref{defn:isolated}), 
   for any $f \in \tilde \globalS$ we have
  \begin{equation}
    \label{eq:glob2loc}
    \sum\limits_{\ell \in \hat \local(f)} | \Vloc(\ell) \cap \Vglob(f) | > \eps|\Vglob(f)|.
  \end{equation}
  Thus, we partition the clients in $\Vglob(f)$ into parts indexed by $\ell \in \hat \local(f)$, in a such a way 
  that the part associated to $\ell$ has size at most $\eps^{-1} |\Vglob(f) \cap \Vloc(\ell)|$.
  For any $\ell \in \hat \local(f)$, the clients in the associated part are reassigned to the facility $\psi(\ell,f) 
  \in \globalS \setminus \{f\}$ that is the closest to $\ell$.

  We now bound the cost increase $\Delta$ induced by the reassignment. For each client $c \in \Vglob(f)$ assigned
  to a part associated to a facility $\ell$, the new cost for $c$
  is $\cost'_c = \dist(c,\psi(\ell,f))^p$. By the triangular inequality and Lemma \ref{lem:TI}, 
  $\cost'_c \le 2^p (\dist(c,f)^p + \dist(f,\psi(\ell,f))^p) = 2^p (g_c + \dist(f,\psi(\ell,f))^p)$.
  Summing over all clients, we have that the new cost is at most
  $$\sum_{c} 2^p g_c + \sum_{f \in \tilde \globalS} \sum_{\ell \in \hat \local(f)} \eps^{-1} |\Vglob(f) \cap \Vloc(\ell)|
  2^p \dist(f,\psi(\ell,f))^p.$$
  Let $\Delta = \sum_{f \in \tilde \globalS} \sum_{\ell \in \hat \local(f)} \eps^{-1} |\Vglob(f) \cap \Vloc(\ell)|
  2^p \dist(f,\psi(\ell,f))^p.$
  By Lemma \ref{lem:TI}, we have 
  $$\Delta \le \sum_{f \in \tilde \globalS} \sum_{\ell \in \hat \local(f)} \eps^{-1} |\Vglob(f) \cap \Vloc(\ell)|
  4^p (\dist(f,\ell)^p + \dist(\ell,\psi(\ell,f))^p).$$
  Inverting summations, 
  \begin{equation*}
    \Delta \le 4^p \eps^{-1}\left( \sum_{\ell \in \local} \sum_{f \in \tilde \globalS:\ell \in \hat \local(f)} |\Vglob(f) \cap \Vloc(\ell)| \dist(f,\ell)^p + 
      \sum_{\ell \in \local} \sum_{f \in \tilde \globalS:\ell \in \hat \local(f)} |\Vglob(f) \cap \Vloc(\ell)| \dist(\ell,\psi(\ell,f))^p\right).
  \end{equation*}

  Define $\Delta_1 = \sum_{\ell \in \local} \sum_{f \in \tilde \globalS:\ell \in \hat \local(f)} |\Vglob(f) \cap \Vloc(\ell)|
  \dist(f,\ell)^p $
  and $\Delta_2 = \sum_{\ell \in \local} \sum_{f \in \tilde \globalS:\ell \in \hat \local(f)} |\Vglob(f) \cap \Vloc(\ell)| \dist(\ell,\psi(\ell,f))^p.$

  We first bound $\Delta_1$. By Lemma \ref{lem:TI}, we have 
  that $\dist(f,\ell)^p \le 2^p(\dist(f,c)^p + \dist(\ell,c)^p) = 2^p(g_c + l_c)$ for any client $c \in \Vglob(f) \cap \Vloc(\ell)$.
  Therefore,
  \begin{align*}
    \Delta_1 &\le \eps^{-1} \sum_{\ell \in \local} \sum_{f \in \tilde \globalS~:~\ell \in \hat \local(f)} |\Vglob(f) \cap \Vloc(\ell)|
               \frac{2^p}{|\Vglob(f) \cap \Vloc(\ell)|} \sum_{c \in \Vglob(f) \cap \Vloc(\ell)} (g_c + l_c)\\
    &\le  2^p\eps^{-1} \sum_{\ell \in \local} \sum_{f \in \tilde \globalS~:~\ell \in \hat \local(f)}  \sum_{c \in \Vglob(f) \cap \Vloc(\ell)} (g_c + l_c) \le
               2^p\eps^{-1} (\cost(\globalS) + \cost(\local)).
  \end{align*}
  We now turn to bound the cost of $\Delta_2$.
  Let $f^{\ell}_{\min}$ be the facility of $\globalS$ that is the closest to $\ell$.
  Let
  \begin{align*}
    \Delta_3 &= \eps^{-1} \sum_{\ell \in \local}\sum_{f \neq f^{\ell}_{\min}:\ell \in \hat\local(f)} |\Vglob(f) \cap \Vloc(\ell)| \dist(\ell,\psi(\ell,f))^p\\
    \Delta_4 &= \eps^{-1} \sum_{\ell \in \local} |\Vglob(f^{\ell}_{\min}) \cap \Vloc(\ell)| \dist(\ell,\psi(\ell,f^{\ell}_{\min}))^p.
  \end{align*}
  For any client $c \in \Vglob(f) \cap \Vloc(\ell)$, by Lemma \ref{lem:TI}, $\dist(\ell,\psi(\ell,f))^p$, for $f \neq f^{\ell}_{\min}$ yields 
  $\dist(\ell,\psi(\ell,f))^p \le 2^p(\dist(\ell,c)^p + \dist(c,\psi(\ell,f))^p) \le 2^p(\dist(\ell,c)^p + \dist(c,f)^p) = 2^p(l_c + g_c)$.
  Thus, 
  $$\Delta_3 \le \eps^{-1} \sum_{\ell \in \local}\sum_{f \neq f^{\ell}_{\min}:\ell \in \hat\local(f)}
  \frac{2^p |\Vglob(f) \cap \Vloc(\ell)| }{|\Vglob(f) \cap \Vloc(\ell)|} \sum_{c \in \Vglob(f) \cap \Vloc(\ell)} (l_c + g_c) 
  \le 2^p\eps^{-1} (\cost(\globalS) + \cost(\local)).$$
  
  We conclude by analyzing $\Delta_4$. 
  Observe that if $\ell \notin \hat\local(f^{\ell}_{\min})$ then we are done : the clients in $\Vglob(f^{\ell}_{\min})$ are not reassigned
  through $\ell$. Thus we assume $\ell \in \hat\local(f^{\ell}_{\min})$.
  We now apply Lemma \ref{lem:TI} to $\dist(\ell,\psi(\ell,f^{\ell}_{\min}))^p$,
  for any client $c \in \Vloc(\ell) \setminus \Vglob(f^{\ell}_{\min})$ we have 
  $\dist(\ell,\psi(\ell,f^{\ell}_{\min}))^p \le 2^p(\dist(\ell, c)^p + \dist(c,\psi(\ell,f^{\ell}_{\min}))^p)
  \le 2^p(l_c + g_c)$, since $\psi(\ell,f^{\ell}_{\min})$ is the facility of $\globalS$ that is the second closest to $\ell$.
  Replacing we have, 
  \begin{align*}
    \Delta_4 &\le \eps^{-1} \sum_{\ell \in \local} \frac{2^p|\Vglob(f^{\ell}_{\min}) \cap \Vloc(\ell)| }
               {|\Vloc(\ell) \setminus \Vglob(f^{\ell}_{\min})|} \sum_{c \in \Vloc(\ell) \setminus \Vglob(f^{\ell}_{\min})} (l_c + g_c)
  \end{align*}
  Now, since $\ell \in \hat \local(f^{\ell}_{\min})$, we have that
  $|\Vglob(f^{\ell}_{\min}) \cap \Vloc(\ell)|/|\Vloc(\ell) \setminus \Vglob(f^{\ell}_{\min})|  \le (1-\eps)/{\eps}.$
  Therefore, 
  $$\Delta_4 \le 2^p(1-\eps)\eps^{-2} \sum_{c \in \Vloc(\ell) \setminus \Vglob(f^{\ell}_{\min})} (l_c + g_c).$$
  Putting $\Delta_1,\Delta_2,\Delta_3,\Delta_4$ together 
  we obtain that the total cost increase induced by the reassignment is at most
  $2^{3p+1}(\cost(\globalS) + \cost(\local))/\eps^2.$

\end{proof}
\fi

\section{Postponed proofs}
\subsection{Proof of existence of weak $r$-divisions in Euclidean space}
\label{sec:Euclidean-r-division-proof}
\begin{proof}
We describe a recursive procedure to construct the set $Z$ in the
definition of weak $r$-division of $C$.  Assuming that $|C|>r$, find a sphere $S$ and a set $Z_0$
satisfying Theorem~\ref{thm:euclid-sep}.  Let $Z_1$ be the result of
applying the procedure to the union of $Z_0$ with the set of points
inside $C$, and similarly obtain $Z_2$ from the set of points outside $C$.
Return $Z_0\cup Z_1 \cup Z_2$.

It is clear that the set $Z$ together with its induced partition $\calR$ of $C$ returned by the procedure satisfies all
the properties of a weak $r$-division except for Property~4, which
requires some calculation.  Let $b(n) = \sum_{R \in \mathcal{R}} |R \cap Z| $ when the procedure
is applied to a set $C$ of size at most $n$, where $n>(1-\sigma) r$.  If
$n \leq r$ then $b(n)=0$, and if $n > r$ then 
$$b(n) \leq cn^{1-1/d} + \max_{\alpha\in [1-\sigma,\sigma]} f(\alpha n + cn^{1-1/d}) + f((1-\alpha) n + cn^{1-1/d}).$$
We show by induction that $b(n) \leq \beta \frac{n}{r^{1/d}} - \gamma
n^{1-1/d}$ for suitable constants $\beta, \gamma>0$ to be determined. We postpone the basis of the induction until $\beta, \gamma$ are selected.

By the inductive hypothesis,
\begin{eqnarray*}
b(\alpha n + cn^{1-1/d}) & \leq & \beta  \frac{\alpha n}{r^{1/d}} + \beta
  \frac{cn^{1-1/d}}{r^{1/d}} - \gamma \alpha^{1-1/d} n^{1-1/d}\\
b((1-\alpha) n + cn^{1-1/d}) & \leq & \beta  \frac{(1-\alpha) n}{r^{1/d}} + \beta
  \frac{cn^{1-1/d}}{r^{1/d}} - \gamma (1-\alpha)^{1-1/d} n^{1-1/d}
\end{eqnarray*}
so
\begin{equation} \label{eq:rec}
b(n) \leq \left(c+ \frac{2c}{r^{1/d}}\right) n^{1-1/d} + \beta
\frac{n}{r^{1/d}} - \gamma \left[\alpha^{1-1/d} + (1-\alpha)^{1-1/d}\right]n^{1-1/d}
\end{equation}
The function $f(x) = x^{1-1/d} + (1-x)^{1-1/d}$ is strictly concave
for $x\in [0,1]$, as can be
seen by taking its second derivative.  For any $\alpha\in [1-\sigma,
\sigma]$, there exists a number $0 < \mu < 1$ such that
$\alpha = (1-\mu) (1-\sigma) + \mu \sigma$.  By concavity, therefore,
$f(\alpha) \geq (1-\mu)f(1-\sigma) + \mu f(\sigma)$.  Since a weighted
average is at least the minimum, $(1-\mu)f(1-\sigma) + \mu
f(\sigma) \geq \min \set{f(1-\sigma), f(\sigma)}$.  Write
$f(1-\sigma)=f(\sigma)=1+\delta$.  Since $f$ is strictly concave, $\delta>0$.
We choose $\gamma=(c+2c/r^{1/d})/\delta$, for then the first term in
Inequality~\ref{eq:rec} is bounded by $\gamma \delta n^{1-1/d}$, and
we obtain $b(n) \leq \beta \frac{n}{r^{1/d}} - \gamma
n^{1-1/d}$.

For the basis of the induction, suppose $n>(1-\sigma)r$.  Then

$$\beta \frac{n}{r^{1/d}} - \gamma
n^{1-1/d} = \left(\beta \frac{n^{1/d}}{r^{1/d}} -
  \gamma\right)n^{1-1/d} \geq \left(\beta
  \frac{(1-\sigma)^{1/d}r^{1/d}}{r^{1/d}} -\gamma\right) = \left(\beta
  (1-\sigma)^{1/d} -\gamma\right)$$
which is nonnegative for an appropriate choice of $\beta$ depending on
$\sigma$ and $\gamma$.
\end{proof}

\bibliographystyle{abbrv}
\bibliography{facilitylocationptas.bib}

\end{document}

